\newtheorem{theorem}{Theorem}[section]
\newtheorem{lemma}[theorem]{Lemma}
\theoremstyle{definition}
\newtheorem{definition}[theorem]{Definition}
\theoremstyle{remark}
\numberwithin{equation}{section}
\begin{document}

\title{The Ergodicity of the Collatz Process in Positive Integer Field}


\author{Bojin Zheng}
\address{College of Computer Science, South-Central University for Nationalities, Wuhan 430074, China}
\curraddr{School of Informatics and Computing, Indiana University, Bloomington, IN 47408, USA}
\email{zhengbojin@gmail.com}
\thanks{}

\author{Yangqian Su}
\address{College of Computer Science, South-Central University for Nationalities, Wuhan 430074, China}
\curraddr{College of Computer Science, South-Central University for Nationalities, Wuhan 430074, China}
\thanks{}

\author{Hongrun Wu}
\address{The State Key Laboratory of Software Engineering, Wuhan University, 430074, China}
\curraddr{The State Key Laboratory of Software Engineering, Wuhan University, 430074, China}
\email{ms.wuhr@gmail.com}
\thanks{}

\author{Li Kuang}
\address{The State Key Laboratory of Software Engineering, Wuhan University, 430074, China}
\curraddr{The State Key Laboratory of Software Engineering, Wuhan University, 430074, China}
\thanks{}


\subjclass[2010]{Primary 11B85, 68R10; Secondary 37A99}

\date{}

\dedicatory{}

%
%
%

\begin{abstract}
The $3x+1$ problem, also called the Collatz conjecture, is a very interesting
unsolved mathematical problem related to computer science. This paper generalized this problem by
relaxing the constraints, i.e., generalizing this deterministic process to non-deterministic process, and set up three models. This paper analyzed the ergodicity of these models and proved that the ergodicity of the Collatz process in positive integer field holds, i.e., all the
positive integers can be transformed to 1 by the iterations of the Collatz
function.
\end{abstract}


\maketitle






\section{INTRODUCTION}

Since 1930s, researchers have deeply investigated the $3x+1$ problem. Until now, the $3x + 1$ problem has obtained many names, such as the Kakutani's problem, the Syracuse problem and Ulam's problem and so on\cite{1,3}.

The $3x+1$ problem can be stated from the viewpoint of computer algorithm as follows:

For any given integer $x$, if $x$ is odd, then let $x := 3x+1$; if $x$ is even, then
let $x:= x/2$; if we repeat this process, $x$ will certainly be $1$ at some time.

Mathematically, this problem can be presented as the iterations of a function $f(x)$, called the Collatz function shown as equation (\ref{eq.collatz}), i.e., $\forall x$, $\exists k$, $f^k(x)=1$, $x \in \mathbb{N}^{+}$. Here, $\mathbb{N}^{+}$ is the set of positive integers.

\begin{equation}
\label{eq.collatz}
f(x) = \left\{ {\begin{array}{l}
 3x + 1,  \mbox{ if }x\mbox{ is odd} \\
 \frac{x}{2}, \ \ \ \ \ \  \mbox{ if }x\mbox{ is even} \\
 \end{array}} \right.\\
 x > 0
\end{equation}

This problem is very hard to solve because the iteration process is very ``random'', although the Collatz function is deterministic.

In spite of the difficulty of this problem, the researchers still attained many fruitful achievements \cite{22}. From the view of probability theory \cite{23}, the researchers explored the existence of divergent trajectories; from the view of number theory and diophantine approximations and other mathematical tools, the researchers discussed the existence of the cycles other than $4 \rightarrow 2 \rightarrow 1$ \cite{24,6,13,15,19,21,9}; from the perspective of mathematical logic and theory of algorithms, the researchers studied the solvability of this problem \cite{10,7}. Moreover, this problem was also tried from the view of the fractal \cite{27,28}, graph theory \cite{20} and computation experiments \cite{14,1,22} and so on. Owing to the efforts of Prof. J. C. Lagarias, the related works were collected and commented \cite{a1}.

In this paper, we treated the Collatz function as a deterministic program (process), and generalized it to the non-deterministic program and set up three models; furthermore, we mapped the programs to the Collatz graphs. By the proposed models and the graph theory, we proved that the Collatz conjecture holds, i.e., all the positive integers can reach 1.

\section{MODELS}

The Collatz problem, which we called model $M0$, can be mapped to the Collatz graph \cite{3} as shown in Fig. \ref{fig.m0}.

\begin{figure}
\centerline{\includegraphics[width=6in,height=3.50in]{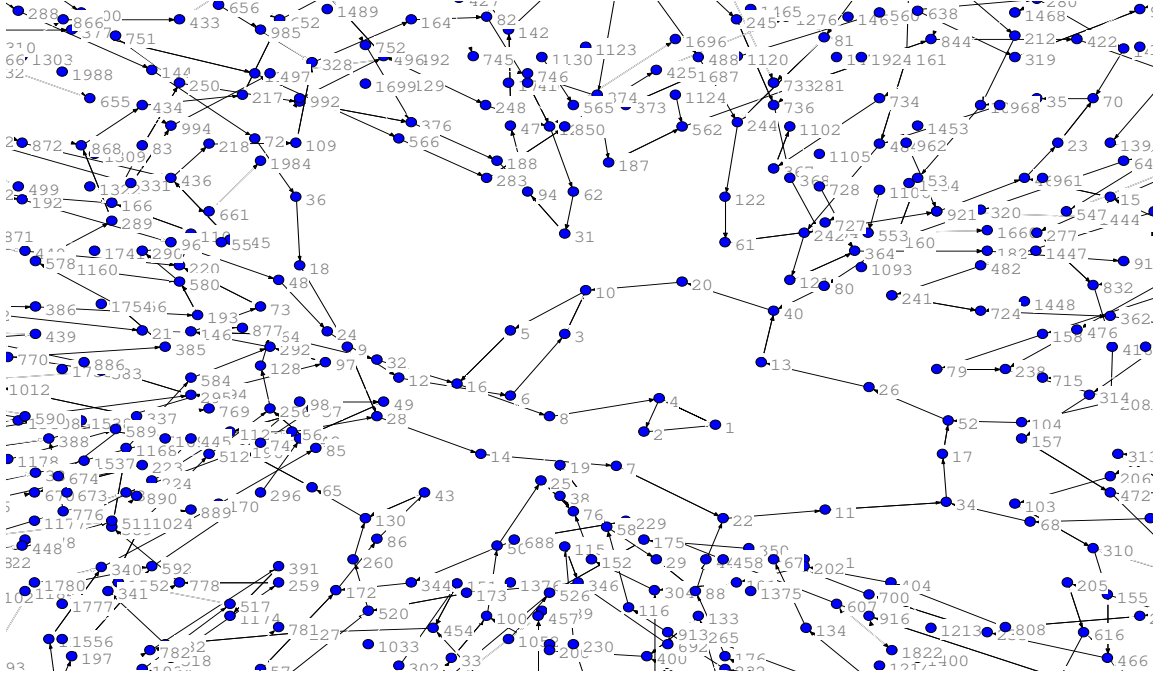}}
\caption{The original Collatz graph of model M0}
\label{fig.m0}
\end{figure}

In Fig. \ref{fig.m0}, every positive integer is a node. Every node has two directed edges.  Every edge responds to an item of Collatz function, which represents a transformation of the value of the variable $x$.

From the viewpoint of Collatz graph, the Collatz problem can be stated as follows: for any given node, i.e., positive integer $A$, there exists an $A \Rightarrow 1$ path in Collatz graph.

If there is an $A \Rightarrow 1$ path in Collatz graph, then we say $A$ is reachable. All the reachable positive integers will form a set $\Omega_0$, thus, the Collatz conjecture can also be stated as $\Omega_0 = \mathbb{N}^{+}$.

To solve this problem, we generalized the problem to non-deterministic process and set up three models. The first model is named $MS$, the second is named $M1$ and the third is named $M2$. Recall that the original problem is called $M0$.

In the model $MS$, we extended the Collatz function to three items, i.e., adding a non-deterministic item that is inverse to the $3x + 1$ item without the constraint of parity, shown in equation (\ref{eq.ms}).

\begin{equation}
\label{eq.ms}
f_s(x) = \left\{ {\begin{array}{l}
 3x + 1, \ \ \  \mbox{   if } x \bmod 2 \equiv 1  \\
 \frac{x}{2},\ \ \ \ \ \  \mbox{     if } x \bmod 2 \equiv 0  \\
 \frac{x - 1}{3}, \ \ \ \  \mbox{     if } x \bmod 3 \equiv 1  \\
 \end{array}} \right.\\
 x > 0
\end{equation}

By using a similar method to that of the Collatz graph, we can draw a graph to reflect equation (\ref{eq.ms}) as Fig. \ref{fig.ms}. In honor of L. Collatz, we called the generalized graphs as the Collatz graphs.

From Fig. \ref{fig.ms}, we can see that every positive integer, i.e., every node, has a few options to connect to the other nodes. For examples, node $7$ can connect to $22$ by $3x + 1$ or connect to $2$ by $\frac{x-1}{3}$; node $10$ can connect to $3$ by $\frac{x-1}{3}$ and $5$ by $\frac{x}{2}$. Therefore, the nodes in the Collatz graphs can be categorized into a few classes. Or say, the Collatz graphs have fruitful patterns.

\begin{figure}
\centerline{\includegraphics[width=6in,height=3.50in]{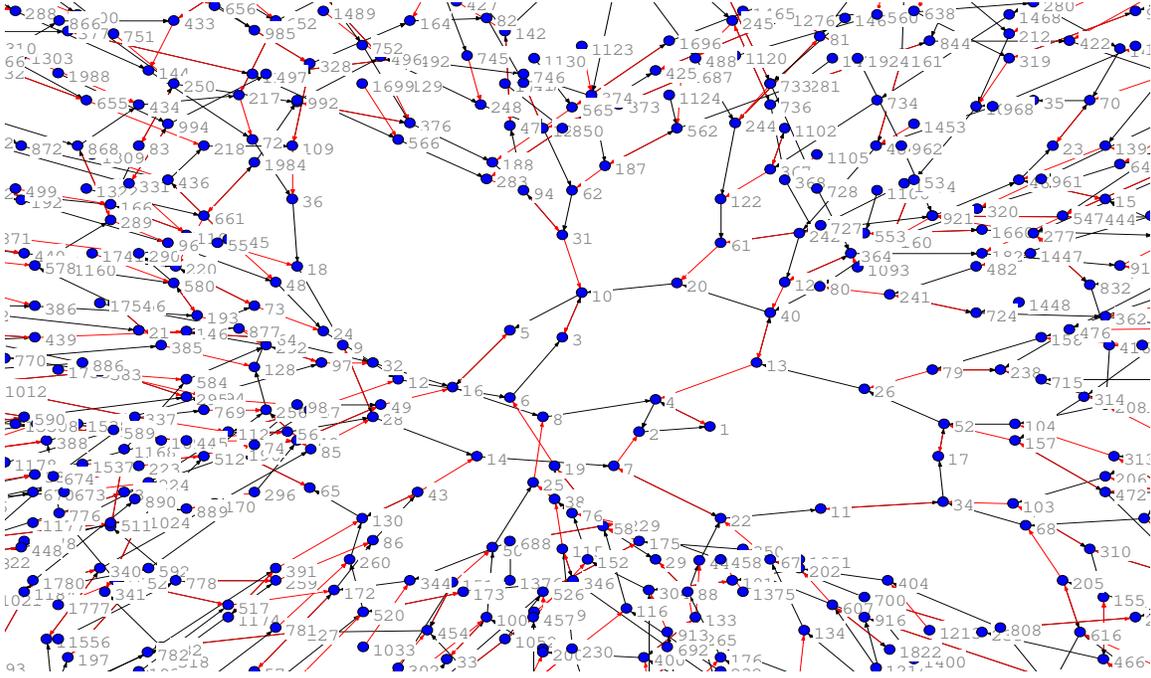}}
\caption{The Collatz graph of model MS}
\label{fig.ms}
\end{figure}

In Fig. \ref{fig.ms}, the red edges are different edges in contrast to Fig. \ref{fig.m0} of model $M0$.

Like $M0$, if there exists an $A \Rightarrow 1$ path, we say $A$ is reachable. We denote the reachable set of $MS$ as $\Omega_s$.

In the second model $M1$, we removed more constraints. The function of $M1$ is presented as equation (\ref{eq.m1}).

\begin{equation}
\label{eq.m1}
f_1 (x) = \left\{ {\begin{array}{l}
 3x + 1\mbox{ } \\
 \frac{x}{2}, \ \ \ \  \mbox{     if } x \bmod 2 \equiv 0 \\
 2x \\
 \frac{x - 1}{3},  \ \ \ \  \mbox{     if } x \bmod 3 \equiv 1 \\
 \end{array}} \right.\\
  x > 0
\end{equation}

Compared with the mode $MS$, model $M1$ adds a new item, i.e., $2x$, which is inverse to the item $x/2$.

With respect to equation (\ref{eq.m1}), the Collatz graph can be shown as in Fig. \ref{fig.m1}.

\begin{figure}
\centerline{\includegraphics[width=6in,height=3.50in]{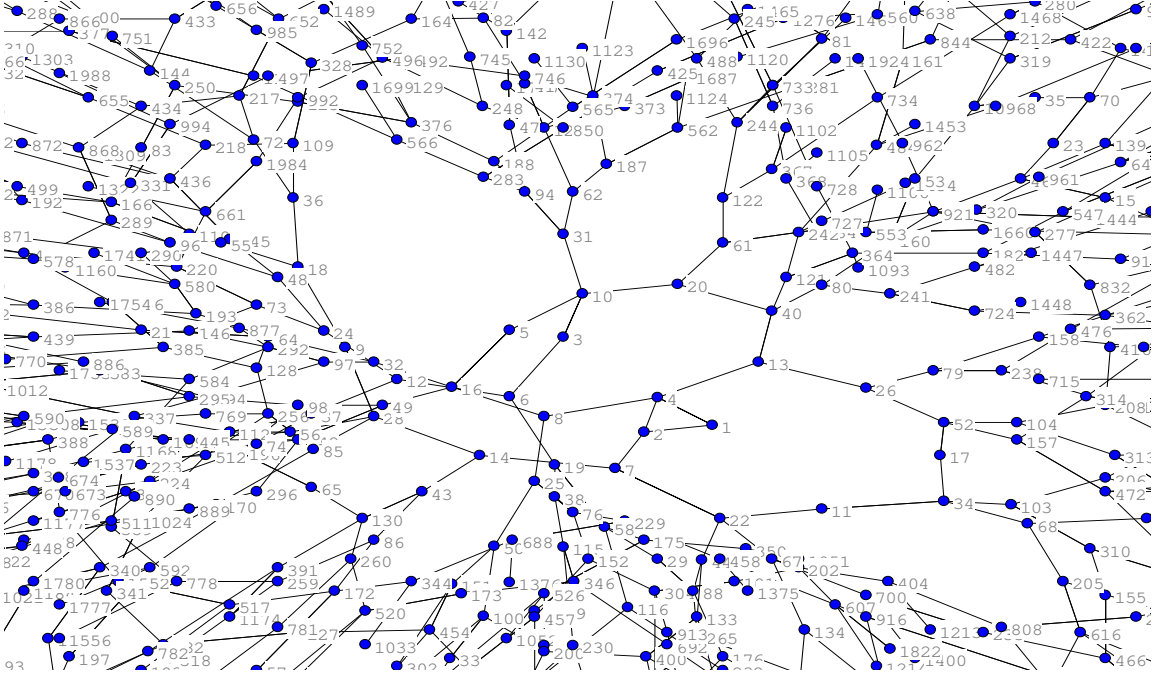}}
\caption{The Collatz graph of model M1}
\label{fig.m1}
\end{figure}

The structure of Fig. \ref{fig.m1} of model $M1$ is the same as Fig. \ref{fig.ms} of model $MS$ except that: some edges in Fig. \ref{fig.ms} are one-way, and all the edges in Fig. \ref{fig.m1} are two-way.

Similarly, we denote the reachable set of $M1$ as $\Omega_1$.

According to equation (\ref{eq.m1}), $f_1(x)$ has four items, which we call ``actions''.

\begin{definition}[Action] An action is an optional transformation of functions, i.e., an item of functions.
\end{definition}

We use $T$ for $3x+1$; $B$ for $x/2$; $F$ for $(x-1)/3$ and $D$ for $2x$, respectively.

According to the functions, action $T$ is inverse to $F$, and $B$ is inverse to
$D$.


We further generalize $M1$ to the third model $M2$, which can be formulated as equation (\ref{eq.m2}).

\begin{equation}
\label{eq.m2}
f_2 (x) = \left\{ {\begin{array}{l}
 3x + 1\mbox{ } \\
 \frac{x}{2}\mbox{ } \\
 2x \\
 \frac{x - 1}{3} \\
 \end{array}} \right.\\
  x > 0
\end{equation}

Obviously, in model $M2$, the functional values can be the rational
numbers, i.e., $M2$ has many nodes with the rational values.

Similarly, We denoted the reachable set of $M2$ as $\Omega_2$.

After the introduction of the proposed models, we further explored the properties of the proposed models. From model $M2$, we got some valuable clues for $M1$, and further we got the structural features of $M1$ and $MS$. Finally, we proved that $\Omega_0 = \mathbb{N}^{+}$.

\section{THE MODEL $M2$}

\begin{theorem}[The succession theorem]
The action sequence 'TDDFFBBT' is the succession function for any given positive integer, i.e., equation (\ref{eq.m2.x+1}) always holds for any given positive integer.
\begin{equation}
\label{eq.m2.x+1}
'TDDFFBBT'(x) = TBBFFDDT(x)= x + 1
\end{equation}
\end{theorem}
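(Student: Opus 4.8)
The plan is to prove the identity by direct evaluation, pushing the variable through the eight actions one step at a time. The only ingredients needed are the definitions $T(x)=3x+1$, $B(x)=x/2$, $F(x)=(x-1)/3$, $D(x)=2x$, together with the two inversion relations recorded just above the theorem: over the rationals --- which is where $M2$ operates, so that no divisibility side-condition can ever block an action --- one has $F\circ T = \mathrm{id}$ and $T\circ F = \mathrm{id}$, and likewise $D\circ B = \mathrm{id}$ and $B\circ D = \mathrm{id}$. With these at hand the whole theorem reduces to a finite computation, so the work is bookkeeping rather than ideas.

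First I would isolate the middle block of six actions --- the $DDFFBB$ sandwiched between the two outer $T$'s in $TDDFFBBT$ --- and evaluate it on a generic argument $y$. The two $D$'s send $y$ to $4y$; the two $F$'s then send $4y$ to $(4y-4)/9$; and the two $B$'s send that to $(y-1)/9$. Hence the middle block acts as the affine map
\[
y \;\longmapsto\; \frac{y-1}{9}.
\]
(Informally: the two doublings are absorbed by the two thirdings, leaving an overall factor $4/9$; the shifts contributed by the two $F$'s accumulate to $-4$; and the two halvings cancel the leading $4$.)

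Next I would sandwich this affine map between the two outer $T$'s and read off the value. The inner $T$ sends $x$ to $3x+1$; the middle block sends $3x+1$ to $((3x+1)-1)/9 = x/3$ --- the single cancellation that makes the whole identity work, the block's $-1$ consuming the inner $T$'s $+1$; and the outer $T$ sends $x/3$ to $3\cdot(x/3)+1 = x+1$. That gives $TDDFFBBT(x)=x+1$. The second equality is the same statement for the reversed string: $TBBFFDDT$ read as an ordinary right-to-left composition is exactly the action sequence $TDDFFBBT$ read left to right, so the identical chain of simplifications applies and again returns $x+1$.

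The only genuine delicacy is notational rather than mathematical: one must fix, once and for all, the convention by which a written action string becomes a function --- whether its leftmost symbol is applied first or last --- and then confirm that the two displayed strings are read under the convention(s) that make them coincide. I would state this convention explicitly at the outset and sanity-check it on a small value such as $x=1$ (where $TDDFFBBT$ should return $2$); with that pinned down, the chain of cancellations above is entirely mechanical, and the same template evidently handles the companion identity for $TBBFFDDT$.
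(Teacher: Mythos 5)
Your proposal is correct and follows essentially the same route as the paper: a direct symbolic evaluation of the eight actions over the rationals, where the paper writes out the full nested expression at once and you merely organize the same computation by first collapsing the middle block $DDFFBB$ to $y\mapsto (y-1)/9$. Your added care about the left-to-right reading convention (and the check $TDDFFBBT(1)=2$) is a sensible precaution the paper leaves implicit, but it does not change the argument.
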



\begin{proof}
The calculations are listed as follows,

\[
\begin{array}{l}
 'TDDFFBBT'(x) = TBBFFDDT(x) \\
 = \frac{\frac{(x\ast 3 + 1)\ast 2\ast 2 - 1}{3} - 1}{3}\ast \frac{1}{2}\ast
\frac{1}{2}\ast 3 + 1 \\
 = \frac{4x}{3}\ast \frac{1}{4}\ast 3 + 1 \\
 = x + 1 \\
 \end{array}
\]
\end{proof}

\begin{lemma}
Every positive integer is reachable in M2, i.e., $\Omega_2 = \mathbb{N}^{+}$.
\end{lemma}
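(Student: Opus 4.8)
The plan is to leverage the succession theorem directly. Since the action sequence \texttt{TDDFFBBT} maps any positive integer $x$ to $x+1$ in model $M2$, the inverse composite action sequence maps $x+1$ to $x$. Because every action in $M2$ is unconstrained (all four items of $f_2$ apply to any positive rational, and in particular to any positive integer when the intermediate values remain positive), this backward sequence is always available as a legal path in the Collatz graph of $M2$. Hence from any positive integer $n$ we can descend $n \to n-1 \to \cdots \to 2 \to 1$ by concatenating $n-1$ copies of the appropriate action sequence.

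The key steps, in order, are as follows. First I would observe that the succession theorem gives an explicit $x \Rightarrow x+1$ path in the $M2$ Collatz graph for every positive integer $x$, and that reversing each action (swapping $T \leftrightarrow F$ and $B \leftrightarrow D$) yields an explicit $x+1 \Rightarrow x$ path, again consisting only of legal $M2$ actions. Second, I would check the positivity/well-definedness bookkeeping: along the descending path the intermediate node values must stay positive (and, where an $F$ or $B$ step is taken, the division must be carried out — but in $M2$ there is no parity or divisibility constraint, so the only thing to verify is that we never hit a nonpositive value, which follows since all intermediate expressions in the computation $x+1 \mapsto x$ are ratios of positive quantities). Third, I would argue by concatenation: for arbitrary $n \in \mathbb{N}^{+}$, splice together the paths $n \Rightarrow n-1 \Rightarrow \cdots \Rightarrow 1$ to obtain a single $n \Rightarrow 1$ path, so $n \in \Omega_2$; since $n$ was arbitrary, $\Omega_2 = \mathbb{N}^{+}$.

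The main obstacle — really the only subtle point — is the positivity bookkeeping along the reversed path: one must confirm that every intermediate node produced while realizing $x+1 \mapsto x$ stays strictly positive (and nonzero, so that no later step is undefined), since the argument silently uses the fact that $M2$ places no restriction on which action may be applied at a node. Once that is dispatched, the rest is immediate from the succession theorem and a trivial induction on $n$, and I would state the conclusion $\Omega_2 = \mathbb{N}^{+}$.
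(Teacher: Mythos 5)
Your proposal is correct and matches the argument the paper intends: the lemma is stated immediately after the succession theorem with no written proof, the implicit justification being exactly the descent $n \Rightarrow n-1 \Rightarrow \cdots \Rightarrow 1$ via the (reversed) sequence realizing $x+1 \mapsto x$. Your version actually supplies the details the paper omits — the explicit inverse action sequence and the check that all intermediate values ($x/3$, $2x/3$, $4x/3$, $4x+1$, $12x+4$, $6x+2$, $3x+1$) stay positive, which is all that $M2$ requires since none of its four actions carries a parity or divisibility constraint.
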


According to the succession theorem, $M2$ demonstrates a spiral structure. The positive integers are the central pillars.

\begin{theorem}[The 2 successions theorem]The action sequence 'DFFBTT' is the
succession of succession function for any positive integer, i.e., equation (\ref{eq.m2.x+2}) always holds for any given positive integer.
\begin{equation}
\label{eq.m2.x+2}
'DFFBTT'(x) = TTBFFD(x) = x + 2
\end{equation}
\end{theorem}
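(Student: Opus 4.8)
The plan is to prove this exactly as the succession theorem was proved: by composing the four rational maps that the action letters denote and checking that the result is the function $x\mapsto x+2$. In $M2$ none of the four actions carries a constraint, so $T(x)=3x+1$, $B(x)=x/2$, $F(x)=(x-1)/3$ and $D(x)=2x$ are total maps on $\mathbb{Q}$, hence every finite action sequence defines a total rational map; it therefore suffices to show that the rational function computed by $D,F,F,B,T,T$ is identically $x+2$. I would first recall the convention already used above — the quoted string lists the actions in the order in which they are executed, while the unquoted string writes the same composition in the opposite order — so that the equality $'DFFBTT'(x)=TTBFFD(x)$ needs no separate argument and only one evaluation remains.

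Next I would unfold $'DFFBTT'$ in execution order $D,F,F,B,T,T$, reducing the claim to the nested-fraction identity
\[
'DFFBTT'(x)=TTBFFD(x)=\left(\frac{\frac{x\ast 2-1}{3}-1}{3}\ast\frac{1}{2}\ast 3+1\right)\ast 3+1 ,
\]
which collapses step by step through $2x\mapsto\frac{2x-1}{3}\mapsto\frac{2x-4}{9}\mapsto\frac{x-2}{9}\mapsto\frac{x+1}{3}\mapsto x+2$. A single clearing of denominators at the last step yields exactly $x+2$; since $x\in\mathbb{N}^{+}$ implies $x+2\in\mathbb{N}^{+}$, the value is again a genuine node, which is what ``for any given positive integer'' requires.

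I do not expect a real obstacle here; the only thing that demands care is the bookkeeping — keeping the order of composition straight (it is the reverse of the unquoted string) and simplifying two nested levels of fractions without an arithmetic slip. As an independent check one can invoke the succession theorem: executing $'TDDFFBBT'$ twice sends $x$ to $(x+1)+1=x+2$, so $'DFFBTT'$ must coincide with that sixteen-letter sequence as a rational function, and evaluating both at $x=1$ and $x=3$ gives a quick verification. It is also worth noting why the identity belongs to $M2$ and not $M1$: the intermediate values $\frac{2x-1}{3},\frac{2x-4}{9},\frac{x-2}{9},\frac{x+1}{3}$ are non-integers for most $x$, so the sequence $D,F,F,B,T,T$ would be blocked by the parity and divisibility guards of $M1$, whereas in $M2$ it is always legal — this is exactly the ``valuable clue'' that $M2$ is meant to supply.
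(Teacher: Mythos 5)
Your proposal is correct and takes essentially the same route as the paper: both simply compose the total rational maps $D,F,F,B,T,T$ in execution order and simplify the nested fractions (via $\frac{2x-4}{9}\mapsto\frac{x-2}{9}\mapsto\frac{x+1}{3}\mapsto x+2$) to obtain $x+2$. The extra remarks about the ordering convention and why the identity lives in $M2$ rather than $M1$ are accurate but not needed.
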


\begin{proof}
The calculations are listed as follows,

\[
\begin{array}{l}
 'DFFBTT'(x) = TTBFFD(x) \\
 = \left( {\frac{\frac{x\ast 2 - 1}{3} - 1}{3}\ast \frac{1}{2}\ast 3 + 1}
\right)\ast 3 + 1 \\
 = \left( {\frac{2x - 4}{9}\ast \frac{3}{2} + 1} \right)\ast 3 + 1 \\
 = x + 2 \\
 \end{array}
\]
\end{proof}

According to the 2 successions theorem, there often exist shorter action sequences to transform a number to another number in $M2$.

\begin{theorem}[The 3 successions theorem] The action sequence 'DDFFBBTT' is the
succession of succession of succession function for any given positive integer,
i.e., equation (\ref{eq.m2.x+3}) always holds for any given positive integer.
\begin{equation}
\label{eq.m2.x+3}
'DDFFBBTT'(x)= TTBBFFDD(x) = x + 3
\end{equation}
\end{theorem}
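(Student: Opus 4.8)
The plan is to prove the $3$ successions theorem by a direct composition of the four affine actions, in exactly the style used for the succession theorem and the $2$ successions theorem above. First I would make the notational convention explicit: the quoted word $'A_1A_2\cdots A_n'(x)$ means ``apply $A_1$ first, then $A_2$, \dots, then $A_n$ last'', so as a map it coincides with the unquoted composition whose letter string is the reversal of $A_1A_2\cdots A_n$; hence $'DDFFBBTT'$ and $TTBBFFDD$ denote one and the same function, and there is really a single identity to verify. Since $M2$ imposes no parity or divisibility constraint, each of $T(x)=3x+1$, $B(x)=x/2$, $F(x)=(x-1)/3$, $D(x)=2x$ is a total affine self-map of $\mathbb{Q}$; the composite is therefore again affine, and every intermediate node, rational or not, is a legitimate node of $M2$.

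Next I would evaluate in eight elementary steps, applying $D,D$, then $F,F$, then $B,B$, then $T,T$ to a symbolic $x$ and recording the value after each pair, namely $x \to 4x \to (4x-4)/9 \to (x-1)/9 \to x+3$; this is the whole argument, and the reversed word gives the same value by the remark above. As a cleaner repackaging I would note that the product of the multiplicative factors of the eight actions is $2\cdot 2\cdot\frac13\cdot\frac13\cdot\frac12\cdot\frac12\cdot 3\cdot 3=1$, so the composite affine map has slope $1$ and must be of the form $x\mapsto x+c$; tracing $x=0$ through the eight steps ($0\mapsto 0\mapsto 0\mapsto -\frac13\mapsto -\frac49\mapsto -\frac29\mapsto -\frac19\mapsto \frac23\mapsto 3$) then pins down $c=3$.

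I do not expect a genuine obstacle: the claim is a finite algebraic identity and the computation is routine. The only points needing a word of care are (i) keeping the composition order straight, since $T$ and $F$, and $B$ and $D$, are mutually inverse and it is tempting to cancel the wrong adjacent pair; and (ii) stating explicitly that the calculation lives in $M2$, where the $F$ and $B$ steps are unconditionally defined --- the same word does \emph{not} in general trace a legal path in $M1$, $MS$ or $M0$, because the intermediate values $4x$, $(4x-4)/9$, $(x-1)/9$ need not meet the congruence requirements attached to $F$ and $B$ there. I would also remark that this length-$8$ word is strictly shorter than the length-$24$ word obtained by iterating the succession word $'TDDFFBBT'$ three times, which is precisely why the statement is worth isolating.
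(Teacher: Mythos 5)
Your proposal is correct and follows essentially the same route as the paper: a direct symbolic evaluation of the eight-step affine composite $x \to 4x \to (4x-4)/9 \to (x-1)/9 \to x+3$, which matches the paper's own calculation. The added observation that the product of the multiplicative factors is $1$, so the composite is a translation pinned down by tracing $x=0$, is a tidy sanity check but not a genuinely different argument.
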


\begin{proof}
The calculations are listed as follows,

\[
\begin{array}{l}
'DDFFBBTT'(x)= TTBBFFDD(x) \\
 = \left( {\frac{\frac{x\ast 2\ast 2 - 1}{3} - 1}{3}\ast \frac{1}{2}\ast
\frac{1}{2}\ast 3 + 1} \right)\ast 3 + 1 \\
 = \left( {\frac{4x - 4}{9}\ast \frac{3}{4} + 1} \right)\ast 3 + 1 \\
 = x + 3 \\
 \end{array}
\]
\end{proof}

\begin{theorem}[The 4 successions theorem]
The action sequence 'TDDFDDFFBBBBTT' is
the succession of succession of succession of succession function for any
positive integer, i.e., equation (\ref{eq.m2.x+4}) always holds for any given positive integer.
\begin{equation}
\label{eq.m2.x+4}
'TDDFDDFFBBBBTT'(x)=TTBBBBFFDDFDDT(x) = x + 4
\end{equation}
\end{theorem}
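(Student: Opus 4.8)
The statement is a direct verification that one fixed composite of the four actions equals the translation $x\mapsto x+4$, so my plan is computational, but I would organize it to sidestep the fraction bookkeeping entirely. First I would record that each action is an affine map over the rationals, $T(x)=3x+1$, $D(x)=2x$, $F(x)=\frac{x-1}{3}$, $B(x)=\frac{x}{2}$, and recall that in $M2$ all parity and divisibility side conditions are dropped, so every action is defined on every rational number. This is exactly what legitimizes the non-integer intermediate values that appear partway through the sequence when $F$ or $B$ is applied off its original domain. Since a composition of affine maps is again affine, the whole string $\mathrm{TDDFDDFFBBBBTT}$ realizes some map $x\mapsto ax+b$ with $a,b\in\mathbb{Q}$, and it suffices to establish $a=1$ and $b=4$.

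Next I would compute the multiplier $a$ without ever touching the additive part. The multiplier of a composite of affine maps is the product of the individual multipliers, namely $3$ for each $T$, $\frac{1}{3}$ for each $F$, $2$ for each $D$, and $\frac{1}{2}$ for each $B$, independently of the order in which they are composed. In the string $\mathrm{TDDFDDFFBBBBTT}$ there are three $T$'s and three $F$'s, and four $D$'s and four $B$'s, so $a = 3^{3}\cdot 3^{-3}\cdot 2^{4}\cdot 2^{-4}=1$. Hence the composite is a pure translation $x\mapsto x+b$, and to pin down $b$ it suffices to evaluate it at a single convenient point. Taking $x=0$ and tracing the orbit through the fourteen actions in program (left-to-right) order gives the integer sequence $0,1,2,4,1,2,4,1,0,0,0,0,0,1,4$, so that the composite sends $0$ to $4$, i.e. $b=4$ and the map is $x\mapsto x+4$, as claimed. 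I would note that the second displayed equality, with $\mathrm{TTBBBBFFDDFDDT}$, then requires no separate calculation: exactly as in the earlier succession theorems, the two strings denote the very same composite, one written in program order and the other in composition (right-to-left) order, so they agree identically.

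The argument has no genuine obstacle, since it is ultimately a finite verification; the only points that demand care are foundational and notational rather than deep. I would state the program-order versus composition-order convention once and explicitly, so that the equality of $\mathrm{TDDFDDFFBBBBTT}$ and $\mathrm{TTBBBBFFDDFDDT}$ reads as a transparent consequence of the convention rather than as a coincidence to be checked by hand. I would also make explicit that the whole computation lives in $M2$, where the removal of the $x\bmod 2$ and $x\bmod 3$ constraints is precisely what permits applying $F$ and $B$ to the non-integer intermediate values produced mid-sequence; the same string would not make sense as written in $M1$ or $MS$, and flagging this is the one place where the verification could otherwise be challenged.
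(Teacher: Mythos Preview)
Your proof is correct and takes a genuinely different route from the paper's. The paper simply expands the nested expression
\[
\left(\frac{\frac{\frac{(3x+1)\cdot 4-1}{3}\cdot 4-1}{3}-1}{3}\cdot\frac{1}{16}\cdot 3+1\right)\cdot 3+1
\]
and simplifies symbolically to $x+4$. Your approach instead exploits the structural fact that a composite of affine maps is affine, so the result has the form $ax+b$; you read off $a=1$ from the letter multiset alone (three $T$'s cancel three $F$'s, four $D$'s cancel four $B$'s), and then pin down $b$ by a single numerical evaluation at $x=0$. This buys you a much cleaner argument with no nested fractions to track, and the letter-counting step would scale effortlessly to longer strings. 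The paper's brute-force expansion, by contrast, is self-contained and requires no preliminary observation about affine maps, at the cost of messier bookkeeping. Your remarks on the program-order versus composition-order convention and on the role of $M2$ in licensing the rational intermediate values are well placed and match how the paper sets things up.
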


\begin{proof}
The calculations are listed as follows,
\[
\begin{array}{l}
 'TDDFDDFFBBBBTT'(x) = TTBBBBFFDDFDDT(x)\\
 = \left( {\frac{\frac{\frac{(3x + 1)\ast 2\ast 2 - 1}{3}\ast 2\ast 2 -
1}{3} - 1}{3}\ast \frac{1}{2}\ast \frac{1}{2}\ast \frac{1}{2}\ast
\frac{1}{2}\ast 3 + 1} \right)\ast 3 + 1 \\
 = \left( {\frac{\frac{(4x + 1)\ast 4 - 1}{3} - 1}{16} + 1} \right)\ast 3 +
1 \\
 = x + 4 \\
 \end{array}
\]
\end{proof}

Because all the edges are two-way, there exists the precursor theorems corresponding to the succession theorems.

Of course, there exists more than one action sequence to perform the arbitrary successions and precursors.

Besides, these theorems above will be used in the model $M1$ to demonstrate the structures of $M1$ and $MS$.

\section{THE MODEL $M1$}

Compared with $M2$, model $M1$ only eliminates all non-integers from $M2$. Here, we also need to prove that
all positive integers are reachable in $M1$, i.e., $\Omega_1 =  \mathbb{N}^{+}$.

From now, we use a new notational method to represent a positive integer to facilitate the calculation. Basically, we use the 3-based numeral system.

For any given positive integer $A$ represented in the 3-based numeral system, if $A$ accepts a $'T'$ action, the value would be $A1$. Therefore, we use $A1$ to represent the number $3A + 1$. Formally, we use $A11$ to represent a number like $(9(A)_3 + 4)_{10}$.

To represent the carry in the 3-based numeral system, we use $(A+1)11$ to represent a number like $((A)_3 * 9 + 9 + 4)_{10}$.

We also use $A^D$ to represent the double value of $A$, i.e., the value after action ``D'', and $A^{D}11$ to represent a number like $(2*(A)_3*9 + 4)_{10}$.

We also use $A_1$ to represent $\lfloor A/2 \rfloor$, $A_2$ to $\lfloor \lfloor A/2 \rfloor /2 \rfloor$, and $A_3$ to $\lfloor \lfloor \lfloor A/2 \rfloor /2 \rfloor/2 \rfloor$. Obviously, $A_1$ is the value of $A$ after a $'B'$ action with consideration of the carry.

As to the $'F'$ action, we only need to erase the last $'1'$ symbol of $'A1'$.

Moreover, for positive integers $A$ and $C$, if there is an action sequence that can transform $A$ to $C$, we denote it as $A \Rightarrow C$.

\begin{definition} [9-cluster]For any given positive integer $A$, the set \{$A00$, $A01$, $A02$, $A10$, $A11$, $A12$, $A20$, $A21$, $A22$\}, i.e., in decimal, \{$9k+0$, $9k+1$, $9k+2$, $9k+3$, $9k+4$, $9k+5$, $9k+6$, $9k+7$, $9k+8$\} is called a 9-cluster. Here, $(A)_3 = (k)_{10}$.
\end{definition}

\begin{definition} [5-cluster]For any given positive integer $A$, the set \{$A00$, $A01$, $A02$, $A10$, $A11$\}, i.e., in decimal, \{$9k+0$, $9k+1$, $9k+2$, $9k+3$, $9k+4$\} is called a 5-cluster. Here, $(A)_3 = (k)_{10}$.
\end{definition}

\begin{definition} [3-cluster]For any given positive integer $A$, the set \{$A12$, $A20$, $A21$\}, i.e.,\{$9k+5$, $9k+6$, $9k+7$\} is called a 3-cluster. Here, $(A)_3 = (k)_{10}$.
\end{definition}

The proof procedure can be organized as follows:

\begin{enumerate}
  \item first we prove that every 5-cluster can form an internally connected subgraph from Lemma \ref{lemma.10.11} to Theorem \ref{th.5cluster}.
  \item next we prove that every 3-cluster can form an internally connected subgraph from Lemma \ref{lemma.20.21} to \ref{lemma.11.21.odd1} and Theorem \ref{th.3cluster.conn} .
  \item then we prove that every 3-cluster can connect to its corresponding 5-cluster.
  \item also we prove that every 9-cluster can form an internally connected subgraph.
  \item at last we show that all 9-clusters can connect to 1.
\end{enumerate}

Actually, there exist simpler proofs on $\Omega_1 =  \mathbb{N}^{+}$. However, to illustrate the structure of the Collatz graphs, we used the proof method stated above.

Here, we firstly prove that every 5-cluster can form an internally connected subgraph.

\begin{lemma}\label{lemma.10.11}The action sequence 'TDDFFBBT' can transform
$A10$ to $A11$.
\end{lemma}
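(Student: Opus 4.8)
The plan is to deduce this directly from the succession theorem, together with a check that the intermediate values never leave $\mathbb{N}^{+}$ and always meet the parity and divisibility preconditions that $M1$ imposes. Writing $(A)_3 = (k)_{10}$, the symbol $A10$ denotes the integer $9k+3$ and $A11$ denotes $9k+4$. Since the succession theorem gives $\mathrm{'TDDFFBBT'}(x) = x+1$ as an identity over $M2$, applying it to $x = 9k+3$ yields final value $9k+4 = A11$; so the only thing left to verify is that running the same eight actions inside $M1$ — where $B$ demands an even argument and $F$ demands an argument $\equiv 1 \pmod 3$ — is legal at every step and produces integers throughout.

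Accordingly, the core of the argument is to trace the orbit of $9k+3$ through the sequence $T,D,D,F,F,B,B,T$. The actions $T$ and $D$ carry no precondition, so the first three steps send $9k+3 \mapsto 27k+10 \mapsto 54k+20 \mapsto 108k+40$ without issue. For the two $F$ steps one checks $108k+40 \equiv 1 \pmod 3$ and then $36k+13 \equiv 1 \pmod 3$, giving $108k+40 \mapsto 36k+13 \mapsto 12k+4$; for the two $B$ steps one checks that $12k+4$ and $6k+2$ are even, giving $12k+4 \mapsto 6k+2 \mapsto 3k+1$; and the final $T$ gives $3k+1 \mapsto 9k+4 = A11$. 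Every intermediate quantity is a positive integer, so this is a valid $A10 \Rightarrow A11$ path in $M1$, and the claim follows.

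The computation is routine; the one place that genuinely uses the hypothesis — and hence the ``main obstacle'', such as it is — is the verification of the two $\bmod 3$ conditions and the two parity conditions, which is exactly where the residue class of $A10$ (namely $\equiv 3 \pmod 9$) enters. A generic integer fed into $\mathrm{'TDDFFBBT'}$ need not keep the $F$ and $B$ steps executable in $M1$; it is the factor of $9$ accumulated along $T,D,D$ and then cancelled by $F,F,B,B$ that makes this particular source node work. One could present the same check in the $3$-based notation introduced above (append a $1$ for $T$, erase the trailing $1$ for $F$, halve with carry for $B$, and so on), but the decimal bookkeeping is the most transparent way to see that no step falls outside $\mathbb{N}^{+}$.
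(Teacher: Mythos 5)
Your proof is correct and is essentially the paper's own argument: both trace the orbit of $A10=9k+3$ through $T,D,D,F,F,B,B,T$ step by step, the paper in base-$3$ digit notation and you in decimal, arriving at $27k+10,\,54k+20,\,108k+40,\,36k+13,\,12k+4,\,6k+2,\,3k+1,\,9k+4$ with the $\bmod\,3$ and parity preconditions verified at each $F$ and $B$. Your explicit remark that the residue class $9k+3$ is what keeps every step legal in $M1$ is a useful clarification, but it is the same computation.
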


\begin{proof}[The proof of $A10 \Rightarrow A11$.]
The calculations are listed as follows,

$T(A10)=A101$

$D(A101)= A^{D}202$

$D(A^{D}202)=(A^{DD}+1)111$

$F((A^{DD}+1)111) =(A^{DD}+1)11$

$F((A^{DD}+1)11) =(A^{DD}+1)1$

$B((A^{DD}+1)1)=A^{D}2$

$B(A^{D}2) = A1$

$T(A1)=A11$

\end{proof}

\begin{lemma}\label{lemma.11.10}The action sequence ('TDDFFBBT')$^{ - 1}$= 'FDDTTBBF'
can transform $A11$ to $A10$.
\end{lemma}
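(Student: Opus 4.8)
The statement to prove is Lemma~\ref{lemma.11.10}: the action sequence $\mathtt{FDDTTBBF}$, which is the formal inverse of $\mathtt{TDDFFBBT}$, transforms $A11$ into $A10$. Since every action in $M1$ is two-way (the edges of the Collatz graph of $M1$ are all undirected), this is essentially immediate from Lemma~\ref{lemma.10.11}, but I would still verify it directly by reading the computation of Lemma~\ref{lemma.10.11} backwards, checking at each step that the reversed action is legal (i.e. that the relevant divisibility / parity constraint holds).

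The plan is as follows. First I would note that $\mathtt{TDDFFBBT}$ applied to $A10$ produces, via the chain in Lemma~\ref{lemma.10.11}, the intermediate values
\[
A10 \xrightarrow{T} A101 \xrightarrow{D} A^{D}202 \xrightarrow{D} (A^{DD}+1)111 \xrightarrow{F} (A^{DD}+1)11 \xrightarrow{F} (A^{DD}+1)1 \xrightarrow{B} A^{D}2 \xrightarrow{B} A1 \xrightarrow{T} A11 .
\]
Then I would run this list in reverse, applying $\mathtt{FDDTTBBF}$ to $A11$: apply $F$ to $A11$ to get $A1$ (erasing the trailing $1$); apply $D$ to get $A^{D}2$; apply $D$ to get $(A^{DD}+1)1$ (with the carry as in the forward direction); apply $T$ to get $(A^{DD}+1)11$; apply $T$ to get $(A^{DD}+1)111$; apply $B$ to get $A^{D}202$; apply $B$ to get $A101$; apply $F$ to get $A10$. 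Each arrow is just the inverse of the corresponding arrow in Lemma~\ref{lemma.10.11}: $F^{-1}=T$, $D^{-1}=B$, $B^{-1}=D$, $T^{-1}=F$, matching the stated sequence $\mathtt{FDDTTBBF}$.

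The only thing requiring genuine care — the ``hard part'', such as it is — is checking legality of each reversed step in $M1$ (as opposed to $M2$, where everything is unconstrained). Concretely: the two $F$ actions require the operand to be $\equiv 1 \pmod 3$, which is visible because those operands end in the symbol $1$ in the $3$-based notation; the two $B$ actions require even operands, and I would confirm that $A^{D}202$ and $A101$ — wait, in the reverse direction the $B$ steps act on $(A^{DD}+1)111$ and $A^{D}202$, both even since doubling/carry arithmetic keeps them even — satisfy this; and the $T$ and $D$ actions carry no constraint in $M1$ at all. Since all constraints are met and the arithmetic is literally the Succession Theorem's computation read in reverse (yielding net effect $(x+1)-1 = x$, i.e. the identity, which is consistent with $A11$ and $A10$ differing by exactly $1$), the lemma follows. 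I would present this as a short backward-reading of the previous proof rather than redoing the fractions.

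\begin{proof}[The proof of $A11 \Rightarrow A10$.]
Reading the computation in the proof of Lemma~\ref{lemma.10.11} in reverse, and using that in $M1$ the action $F$ is inverse to $T$ and $B$ is inverse to $D$, we obtain
\[
\begin{array}{l}
 F(A11)=A1 \\
 D(A1)=A^{D}2 \\
 D(A^{D}2)=(A^{DD}+1)1 \\
 T((A^{DD}+1)1)=(A^{DD}+1)11 \\
 T((A^{DD}+1)11)=(A^{DD}+1)111 \\
 B((A^{DD}+1)111)=A^{D}202 \\
 B(A^{D}202)=A101 \\
 F(A101)=A10
 \end{array}
\]
Each $F$ step is legal because its operand ends in $1$ (hence is $\equiv 1 \bmod 3$), and each $B$ step is legal because its operand is even; the $T$ and $D$ steps carry no constraint in $M1$. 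Hence $\mathtt{FDDTTBBF}$ transforms $A11$ to $A10$.
\end{proof}
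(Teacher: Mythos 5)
Your proof is correct and follows the same route as the paper, which simply omits the details of this lemma by appealing to the fact that the sequence is the inverse of the one in Lemma~\ref{lemma.10.11} and that all edges in $M1$ are two-way. Your explicit reversed chain, with the parity and mod-$3$ legality checks at each $B$ and $F$ step, is a more careful version of exactly that argument.
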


Because the procedure is inverse to that in Lemma \ref{lemma.10.11} and all edges in $M1$ is two-way, we omit the detailed proof procedure. Moreover, we will omit the proof procedure on all inverse lemmas.

\begin{lemma}\label{lemma.02.11}The action sequence 'DFFBTT' can transform
$A02$ to $A11$.
\end{lemma}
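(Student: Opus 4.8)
The plan is to imitate the verification style already used in Lemma \ref{lemma.10.11}: apply the six actions of the sequence 'DFFBTT' one at a time to the number whose base-3 representation is $A02$, tracking the effect on the representation (including carries) at each step, and check that the final result is $A11$. Concretely, $A02$ in decimal is $9k+2$ where $(A)_3=(k)_{10}$, so starting value is $9k+2$; I would show the chain
\[
A02 \xrightarrow{\,D\,} A^{D}0 \cdot \text{(something)} \xrightarrow{\,F\,} \cdots \xrightarrow{\,T\,} \cdots \xrightarrow{\,T\,} A11,
\]
writing each intermediate number in the $A$-notation introduced in this section ($A^D$ for doubling, subscripts for halving with carry, appending/erasing a trailing $1$ for $T$/$F$). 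The key consistency checks are: (i) when $F$ is applied, the current number must actually end in the symbol $1$ in base 3 (equivalently be $\equiv 1 \bmod 3$), so that $F$ is a legal action and just erases that trailing $1$; (ii) when $B$ is applied, the halving and any resulting carry in base 3 are handled correctly; and (iii) the doubling step $D$ is tracked via the $A^D$ notation. Since 'DFFBTT' is literally the action sequence that Theorem "The 2 successions theorem" proves equals $x\mapsto x+2$ in $M2$, the arithmetic identity is automatic — $9k+2$ goes to $9k+4=A11$ — so the only real content is confirming that every intermediate value in $M1$ stays a positive integer and that the parity/divisibility side-conditions for $B$ and $F$ are met at the moments they are used.

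First I would compute $D(A02)$: doubling $9k+2$ gives $18k+4 = A^{D}11$ in the notation (i.e. $2(A)_3\cdot 9 + 4$), which ends in $1$, so $F$ is applicable. Then $F(A^{D}11) = A^{D}1$, which again ends in $1$, so the second $F$ gives $F(A^{D}1) = A^{D}$, i.e. the plain doubled number $18k$. Next $B(A^{D}) = A$ exactly (no carry, since $A^D$ is even), i.e. $9k$. Then $T(A) = A1 = 9k+1$... wait — here I must be careful: I want to end at $9k+4$, so after $B$ I should have $9k+1$, not $9k$; this tells me the carry bookkeeping in the $B$ step (or an earlier step) is where attention is needed, and the correct intermediate is likely $B$ applied to something like $(A^{DD}{+}1)\cdots$ as in Lemma \ref{lemma.10.11}. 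I would therefore redo the chain carefully rather than guess, but the structure is: two $T$'s at the end append two trailing $1$'s, and the sequence $DFFB$ at the front must leave $A$ (so that $TT$ turns $A$ into $A11$). So the real check is that $DFFB$, read in $M1$ with all side-conditions, sends $9k+2$ to $9k$; equivalently that $'DFFB'$ restricted to integers is well-defined on inputs $\equiv 2 \bmod 9$ and equals $x \mapsto (x-2)\cdot \tfrac{2}{2\cdot 9}\cdot 9 = \ldots$; concretely one verifies $\frac{2x}{3}$ steps land on integers because $2x \equiv 1 \bmod 3$ after the first $F$, etc.

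The main obstacle — and it is a minor one — is purely the carry arithmetic in base 3 when $B$ (halving) acts on a number that is odd or whose halving propagates a borrow through the trailing digits, exactly the phenomenon the paper flags with its $(A+1)11$ and $A_1, A_2, A_3$ notation. One has to make sure the $F$ actions are only applied when the current number is $\equiv 1 \bmod 3$ (so the action is legal in $M1$, whose $F$ carries the constraint $x \bmod 3 \equiv 1$) and that no intermediate value drops to $0$ or a non-integer. Since the identical action sequence already has a proven arithmetic meaning ($+2$) and the analogous Lemma \ref{lemma.10.11} has been carried out in full, this is routine: I would present the same eight-line (here six-line) action-by-action table, and that completes the proof.
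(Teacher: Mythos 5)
Your overall plan --- verify the six actions one at a time in the base-3 notation, checking the side-conditions for $F$ ($x \bmod 3 \equiv 1$) and $B$ ($x$ even) --- is exactly the paper's approach, and the first chain you write down, $A02 \xrightarrow{D} A^{D}11 \xrightarrow{F} A^{D}1 \xrightarrow{F} A^{D} \xrightarrow{B} A \xrightarrow{T} A1 \xrightarrow{T} A11$, is precisely the paper's proof. But you then talk yourself out of it with a decimal-conversion slip and leave the argument unfinished (``I would therefore redo the chain carefully rather than guess''), so as submitted the proof is not closed. The slip is this: with $(A)_3=(k)_{10}$, the bare string $A^{D}$ denotes the number $2k$, not $18k$ --- the factor $9$ only appears when two trailing digits are attached, as in $A^{D}11 = 9(2k)+4 = 18k+4$. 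So $F(A^{D}1)=F(6k+1)=2k=A^{D}$, then $B(A^{D})=k=A$ (not $9k$), then $T(A)=3k+1=A1$ (not $9k+1$), and $T(A1)=9k+4=A11$, which is the target. There is no carry anywhere in this lemma: $B$ acts on the even number $2k$ with no borrow, and both $F$'s act on numbers $\equiv 1 \pmod 3$. Your subsequent worry that ``the correct intermediate is likely $B$ applied to something like $(A^{DD}+1)\cdots$'' and your claim that ``$DFFB$ sends $9k+2$ to $9k$'' are both artifacts of that one misreading; $DFFB$ sends $9k+2$ to $k$, and the two trailing $T$'s then produce $9k+4$.

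To repair the write-up, simply delete everything from ``wait'' onward and finish the table you started; the consistency checks you correctly identified (legality of each $F$ and of the single $B$) all pass. One genuinely worthwhile point you could add, which the paper glosses over, is that the intermediate value $A^{D}=2k$ is a positive integer only because $A$ is assumed to be a positive integer (if $A$ were allowed to be the empty prefix, i.e.\ $k=0$, the chain would pass through $0$), but under the lemma's hypothesis this is automatic.
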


\begin{proof}[The proof of $A02 \Rightarrow A11$.]
The calculations are listed as follows,

$D(A02)=A^{D}11$

$F(A^{D}11)= A^{D}1$

$F(A^{D}1)= A^{D}$

$B(A^{D})=A$

$T(A)=A1$

$T(A1)=A11$

\end{proof}

\begin{lemma}\label{lemma.11.02}
The action sequence ('DFFBTT')$^{ - 1}$= 'FFDTTB' can
transform $A11$ to $A02$.
\end{lemma}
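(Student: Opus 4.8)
The plan is to obtain Lemma~\ref{lemma.11.02} directly from Lemma~\ref{lemma.02.11} by reversing that path. Recall that in $M1$ every edge is two-way, with $T$ (the action $3x+1$) inverse to $F$ (the action $(x-1)/3$) and $D$ (the action $2x$) inverse to $B$ (the action $x/2$). Consequently, if an action word $w$ sends a node $u$ to a node $v$ in $M1$, then the word obtained from $w$ by reversing the order of its letters and replacing each letter by its inverse sends $v$ back to $u$. Applied to the word 'DFFBTT' of Lemma~\ref{lemma.02.11}: reversing gives 'TTBFFD', and swapping $T\leftrightarrow F$ and $B\leftrightarrow D$ turns this into 'FFDTTB', which is precisely the sequence named in the statement.

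To make this concrete I would simply read the six displayed identities in the proof of Lemma~\ref{lemma.02.11} from the bottom line to the top line. This exhibits the reverse walk through the nodes $A11,\ A1,\ A,\ A^{D},\ A^{D}1,\ A^{D}11,\ A02$, where the six successive transitions are performed by $F$, $F$, $D$, $T$, $T$, $B$ respectively. No fresh arithmetic is needed, because each transition is one of the equalities already established in Lemma~\ref{lemma.02.11}, read in the opposite direction.

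The only genuine point to verify --- and for a statement this elementary it is the closest thing to an obstacle --- is that every action used along this reverse walk is \emph{admissible} in $M1$, i.e. that the side conditions ``$x \bmod 3 \equiv 1$'' for $F$ and ``$x \bmod 2 \equiv 0$'' for $B$ hold where they are invoked ($T$ and $D$ carry no constraint in $M1$). With the $3$-based notation fixed above, $A11$ and $A1$ denote $9(A)_3+4$ and $3(A)_3+1$, both $\equiv 1 \pmod 3$, so the two applications of $F$ are legal; and $A^{D}11$ denotes $18(A)_3+4$, which is even, so the final application of $B$ is legal, with $B(A^{D}11)=9(A)_3+2=A02$ as required. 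Hence 'FFDTTB' is a bona fide word of $M1$, the reverse walk is a genuine path, and $A11 \Rightarrow A02$ follows. In view of the convention announced just after Lemma~\ref{lemma.11.10}, I expect the paper's ``proof'' to be nothing more than this appeal to the two-wayness of the edges.
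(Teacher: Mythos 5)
Your proposal is correct and matches the paper's intent exactly: the paper omits this proof entirely, stating after Lemma~\ref{lemma.11.10} that all inverse lemmas follow from the two-wayness of the edges in $M1$, which is precisely the reversal argument you give. Your explicit check that the side conditions for $F$ and $B$ hold along the reversed walk ($9k+4\equiv 1$, $3k+1\equiv 1 \pmod 3$, and $18k+4$ even) is more than the paper provides and is the only substantive point, so the proposal is fine.
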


\begin{lemma}\label{lemma.01.11}The action sequence 'DDFFBBTT' can transform
$A01$ to $A11$.
\end{lemma}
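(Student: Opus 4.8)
The plan is to mimic the structure of the two preceding lemmas (\ref{lemma.10.11} and \ref{lemma.02.11}): exhibit the concrete effect of each action in the sequence \texttt{'DDFFBBTT'} on the 3-based representation, starting from $A01$, and verify that the final string is $A11$. First I would recall what $A01$ means in decimal: $9k+1$ where $(A)_3=(k)_{10}$. The announced action sequence, read left to right as a composition applied to $A01$, begins with two doublings ($D$), then two $F$'s, then two $B$'s, then two $T$'s; equivalently, since \texttt{'DDFFBBTT'}$(x)=x+3$ by the 3 successions theorem, the arithmetic is guaranteed to land on $(9k+1)+3 = 9k+4 = A11$. So the only real content is a bookkeeping check that every intermediate value is a \emph{positive integer}, so that the path actually lives in $M1$ and not merely in $M2$.

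The key steps, in order, are: (1) apply $D$ twice to $A01$, obtaining $A^{D}02$ and then $A^{DD}11$ (tracking the carry exactly as in Lemma~\ref{lemma.10.11}, where $D(A^{D}202)=(A^{DD}+1)111$ — here the digit pattern is milder since we start from $01$, not $10$, so no carry is forced); (2) apply $F$ twice, which by the stated rule simply erases trailing $1$'s, giving $A^{DD}1$ and then $A^{DD}$; each erasure is legitimate because the relevant last digit is indeed $1$, and $A^{DD}=4k$ is even so the next step is available; (3) apply $B$ twice, halving $4k$ down to $2k$ and then $k$, i.e.\ back to $A$; both halvings are exact because $4k$ is divisible by $4$; (4) apply $T$ twice: $T(A)=A1=3k+1$, then $T(A1)=A11=9k+4$, which is exactly the decimal value of $A11$. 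Then invoke the 3 successions theorem to cross-check $9k+4=(9k+1)+3$ and conclude $A01\Rightarrow A11$.

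The main obstacle — such as it is — is purely notational: making sure the carry conventions ($A^{D}$, $(A^{DD}+1)$, the subscripts $A_1,A_2$) are applied consistently, since a mis-tracked carry would break the claim that every intermediate node is a genuine positive integer. In particular one must be careful that applying $D$ twice to $A01$ does not spuriously introduce a carry into $A$ (it does not, because $0,1<3$ stay below the base after doubling only in the low digits, and the lemma is stated for the generic cluster representative), and that the two $F$-erasures are valid, i.e.\ the trailing symbol really is a $1$ at each stage. Once those checks pass, the proof is a one-column list of eight lines exactly parallel to Lemma~\ref{lemma.10.11} and Lemma~\ref{lemma.02.11}, and no deeper idea is needed; the arithmetic identity is already furnished by the 3 successions theorem, so this lemma is essentially a corollary of it together with an integrality audit.
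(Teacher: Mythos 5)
Your proposal is correct and follows essentially the same route as the paper: apply the eight actions in order, tracking the base-3 digits through $A01 \to A^{D}02 \to A^{DD}11 \to A^{DD}1 \to A^{DD} \to A^{D} \to A \to A1 \to A11$, with the integrality of each intermediate value checked along the way. The extra cross-check against the 3 successions theorem ($9k+1+3=9k+4$) is a harmless addition that the paper leaves implicit.
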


\begin{proof}[The proof of $A01 \Rightarrow A11$.]
The calculations are listed as follows,

$D(A01)=A^{D}02$

$D(A^{D}02)= (A^{DD})11$

$F(A^{DD}11)= A^{DD}1$

$F(A^{DD}1)= A^{DD}$

$B(A^{DD})= A^{D}$

$B(A^{D})=A$

$T(A)=A1$

$T(A1)=A11$

\end{proof}

\begin{lemma}\label{lemma.11.01}
The action sequence ('DDFFBBTT')$^{ - 1}$= 'FFDDTTBB'
can transform $A11$ to $A01$.
\end{lemma}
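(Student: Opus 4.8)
The plan is to deduce this statement directly from Lemma~\ref{lemma.01.11}, using the structural fact already noted in the text that every edge of the $M1$ Collatz graph is two-way. The key point is that the four $M1$ actions split into two mutually inverse pairs: $T$ (i.e.\ $x\mapsto 3x+1$) is undone by $F$ (i.e.\ $x\mapsto (x-1)/3$), and $D$ (i.e.\ $x\mapsto 2x$) is undone by $B$ (i.e.\ $x\mapsto x/2$). Consequently, if a word $a_1a_2\cdots a_m$ over $\{T,B,F,D\}$ realizes a path $x_0\to x_1\to\cdots\to x_m$ in $M1$, then the word obtained by reversing the order of the letters and replacing each letter by its inverse, namely $a_m^{-1}\cdots a_2^{-1}a_1^{-1}$, realizes the reverse path $x_m\to\cdots\to x_0$. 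Since applying this operation to 'DDFFBBTT' produces exactly 'FFDDTTBB', Lemma~\ref{lemma.01.11}, which gives $A01 \Rightarrow A11$ via 'DDFFBBTT', immediately yields $A11 \Rightarrow A01$ via 'FFDDTTBB'.

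If a self-contained check is preferred, one simply runs the chain of Lemma~\ref{lemma.01.11} backwards, each step being the inverse of the corresponding step there:
\[
\begin{array}{ll}
F(A11)=A1, & F(A1)=A,\\
D(A)=A^{D}, & D(A^{D})=A^{DD},\\
T(A^{DD})=A^{DD}1, & T(A^{DD}1)=A^{DD}11,\\
B(A^{DD}11)=A^{D}02, & B(A^{D}02)=A01.
\end{array}
\]
Reading the actions in order gives the sequence 'FFDDTTBB' carrying $A11$ to $A01$.

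The one place where care is needed --- and hence the ``hard part'', modest as it is --- is checking that each of these eight backward steps is a \emph{legal} $M1$ action, since in $M1$ the action $B$ requires an even argument and $F$ requires an argument congruent to $1$ modulo $3$. This is automatic from the construction. A $B$ step occurs here only as the inverse of a forward $D$ step, so it is applied to a number of the form $2y$ and is therefore even; and an $F$ step occurs only as the inverse of a forward $T$ step, so it is applied to a number of the form $3y+1$ and is therefore congruent to $1$ modulo $3$. (Equivalently, every intermediate value above already appeared along the legal forward path of Lemma~\ref{lemma.01.11}.) Hence no genuine obstruction arises, and, exactly as the authors remark for the other inverse lemmas, this result could legitimately be recorded without a separate proof, being the formal inverse companion of Lemma~\ref{lemma.01.11}.
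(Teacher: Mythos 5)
Your proposal is correct and matches the paper's treatment: the paper omits proofs of all the inverse lemmas precisely on the grounds you give, namely that $T/F$ and $D/B$ are mutually inverse pairs so every edge of $M1$ is two-way, making the reversed-and-inverted word realize the reverse path. Your explicit backward chain and the legality check (that each $B$ lands on an even number and each $F$ on a number $\equiv 1 \pmod 3$) are accurate and merely make explicit what the paper leaves implicit.
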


\begin{lemma}\label{lemma.00.11}
The action sequence 'TDDFDDFFBBBBTT' can
transform $A00$ to $A11$.
\end{lemma}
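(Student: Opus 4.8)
The plan is to mimic the strategy already used in Lemmas \ref{lemma.10.11}, \ref{lemma.02.11}, and \ref{lemma.01.11}: exhibit an explicit action sequence that carries $A00$ to $A11$ and verify it symbol-by-symbol using the $3$-based notation. The target sequence \texttt{'TDDFDDFFBBBBTT'} is precisely the one appearing in the $4$ successions theorem, so the guiding principle is that, in decimal, $A00 = 9k$ and $A11 = 9k+4$, and since $A11 = A00 + 4$ we want the action sequence whose net effect on integers is $x \mapsto x+4$. The $4$ successions theorem already guarantees this net arithmetic effect; what remains is to check that when the sequence is \emph{run in the $M1$ graph starting from $A00$}, every intermediate value is a genuine positive integer (so that each action is legal in $M1$, not merely in $M2$) and that the carries behave as the notation predicts.

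The key steps, in order, are as follows. First, apply $T$: $T(A00) = A001$, an integer. Second, apply $D$ twice: $D(A001) = A^{D}002$ and then $D(A^{D}002) = A^{DD}011$ (here doubling $\ldots 002$ in base $3$ produces $\ldots 011$ with the expected carry pattern, and no new carry propagates into $A$). Third, apply $F$: since the trailing symbol is $1$, $F(A^{DD}011) = A^{DD}01$. Fourth, apply $D$ twice more: $D(A^{DD}01) = A^{DDD}02$ and $D(A^{DDD}02) = (A^{DDDD}+ \text{carry})11$ — I would track the carry explicitly as in Lemma \ref{lemma.10.11}, writing the result as $C11$ for the appropriate $C$. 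Fifth, apply $F$ twice to strip the two trailing $1$'s: $F(C11) = C1$, $F(C1) = C$. Sixth, apply $B$ four times to undo the four doublings (and absorb the carry), returning to $A$; each $B$ is legal because the relevant value is even at that stage. Finally, apply $T$ twice: $T(A) = A1$, $T(A1) = A11$, which is the desired target. I would lay this out as a vertical list of eight-or-so lines exactly in the style of the preceding lemma proofs.

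The main obstacle is bookkeeping of the carries under the repeated $D$ (doubling) actions: doubling a base-$3$ numeral is not digit-local, so I must confirm that the carry generated by doubling the low-order digits $02$ into $11$ does not ripple further into the $A$ part in a way the notation $A^{D}, A^{DD}, \ldots$ conceals, and — more delicately — that after the four $B$ actions the value returns \emph{exactly} to $A$ with the trailing digits cleared, i.e. that the $\lfloor \cdot/2\rfloor$ operations invert the doublings on the nose. The cleanest safeguard is to cross-check the whole computation in decimal: track $9k \mapsto 27k+1 \mapsto \cdots$, confirm it equals $9k+4$ at the end via the $4$ successions theorem, and confirm integrality of every intermediate value (each is of the form $\text{integer}$ or $2 \cdot \text{integer}$, never a genuine fraction), which is exactly the condition that the path stays inside $M1$ rather than wandering into the rational nodes of $M2$. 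Once integrality at every step is verified, the lemma follows, and the natural companion statement is the inverse lemma $A11 \Rightarrow A00$ via \texttt{('TDDFDDFFBBBBTT')}$^{-1}$, which needs no separate proof since all edges in $M1$ are two-way.
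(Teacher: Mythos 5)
Your proposal is correct and follows essentially the same route as the paper: the identical step-by-step base-3 trace $A00 \to A001 \to A^{D}002 \to A^{DD}011 \to A^{DD}01 \to A^{DDD}02 \to A^{DDDD}11 \to A^{DDDD}1 \to A^{DDDD} \to A \to A1 \to A11$, with the decimal cross-check ($9k \mapsto 27k+1 \mapsto \cdots \mapsto 16k \mapsto k \mapsto 9k+4$) confirming integrality at every step. The carry you worry about in $D(A^{DDD}02)$ does not in fact arise, since $2\cdot 2 = 4 = (11)_3$ stays within the two low-order digits, so the paper's unadorned $A^{DDDD}11$ is already exact.
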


\begin{proof}[The proof of $A00 \Rightarrow A11$.]
The calculations are listed as follows,

$T(A00)=A001$

$D(A001)= A^{D}002$

$D(A^{D}002)= A^{DD}011$

$F(A^{DD}011)= A^{DD}01$

$D(A^{DD}01)= A^{DDD}02$

$D(A^{DDD}02)= A^{DDDD}11$

$F(A^{DDDD}11)= A^{DDDD}1$

$F(A^{DDDD}1)= A^{DDDD}$

$B(A^{DDD})= A^{DD}$

$B(A^{DD})= A^{D}$

$B(A^{D})= A$

$T(A)=A1$

$T(A1)=A11$

\end{proof}

\begin{lemma}\label{lemma.11.00}
The action sequence ('TDDFDDFFBBBBTT')$^{-1}$ =
'FFDDDDTTBBTBBF' can transform $A11$ to $A00$.
\end{lemma}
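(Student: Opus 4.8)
The plan is simply to invoke the two-wayness of the Collatz graph of $M1$. Recall that in $M1$ the action $T$ (namely $x\mapsto 3x+1$) and the action $F$ (namely $x\mapsto (x-1)/3$) are mutually inverse, as are $B$ (namely $x\mapsto x/2$) and $D$ (namely $x\mapsto 2x$); hence every edge of the graph is traversable in both directions. Consequently the path from $A00$ to $A11$ exhibited in the proof of Lemma~\ref{lemma.00.11} can be run in reverse, which produces a path from $A11$ to $A00$. So the only thing to do is to read that path backwards, replace each action by its inverse, and check that the resulting word is indeed 'FFDDDDTTBBTBBF'.

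Carrying this out, reversing the order of the fourteen actions in 'TDDFDDFFBBBBTT' and applying the substitutions $T\mapsto F$, $F\mapsto T$, $B\mapsto D$, $D\mapsto B$ turns it into 'FFDDDDTTBBTBBF', exactly the sequence in the statement. Concretely, the backward trajectory visits the same intermediate strings as in Lemma~\ref{lemma.00.11} in the opposite order: starting from $A11$, it passes through $A1$, $A$, $A^{D}$, $A^{DD}$, $A^{DDD}$, $A^{DDDD}$, then climbs back up through $A^{DDDD}1$, $A^{DDDD}11$, $A^{DDD}02$, $A^{DD}01$, $A^{DD}011$, $A^{D}002$, $A001$, and finally $A00$; each transition is the inverse of the corresponding forward transition, e.g.\ $B(A^{D})=A$ is replaced by $D(A)=A^{D}$ and $T(A)=A1$ by $F(A1)=A$.

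The one point that formally needs verification is that every action in the backward word is applicable where it occurs: a $B$ may only be applied to an even number and an $F$ only to a number $\equiv 1 \pmod 3$. This is automatic from the forward derivation. Wherever the backward path applies $B$, the corresponding forward step applied $D$, so the value there is even; wherever the backward path applies $F$, the corresponding forward step applied $T$, so the value there is $3(\cdot)+1\equiv 1\pmod 3$. Thus no genuine obstacle arises — the ``hard part'' is merely the mechanical inversion of the action word together with the observation just made, which is why, following the convention already adopted for Lemma~\ref{lemma.11.10}, one may safely omit the line-by-line computation.
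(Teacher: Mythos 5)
Your proposal is correct and follows exactly the route the paper intends: Lemma~\ref{lemma.11.00} is one of the ``inverse lemmas'' for which the paper explicitly omits the computation, appealing only to the two-wayness of the edges of $M1$, i.e.\ to reversing the path of Lemma~\ref{lemma.00.11}. Your explicit check that the reversed-and-inverted word is 'FFDDDDTTBBTBBF' and that each $B$ (resp.\ $F$) in the backward word lands on an even value (resp.\ a value $\equiv 1 \pmod 3$) because the forward step there was a $D$ (resp.\ a $T$) is more careful than the paper itself, which states the principle without verifying applicability.
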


\begin{theorem}[The 5-cluster connection theorem]\label{th.5cluster}For any given positive integer $A$, nodes $A00$, $A01$, $A02$, $A10$ and $A11$ are internally connected.
\end{theorem}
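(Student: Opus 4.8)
The plan is to treat the node $A11$ as a central hub through which every other node of the $5$-cluster communicates. Since it was noted earlier in the excerpt that every edge of the Collatz graph of $M1$ is two-way, the relation ``$A \Rightarrow C$'' is symmetric and transitive (and reflexive via the empty action sequence); hence ``being connected'' is an equivalence relation on the nodes of $M1$, and to prove the theorem it suffices to show that $A00$, $A01$, $A02$, $A10$ and $A11$ all lie in a single equivalence class.

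First I would invoke the four pairs of lemmas proved just above. Lemmas \ref{lemma.10.11} and \ref{lemma.11.10} give $A10 \Leftrightarrow A11$; Lemmas \ref{lemma.02.11} and \ref{lemma.11.02} give $A02 \Leftrightarrow A11$; Lemmas \ref{lemma.01.11} and \ref{lemma.11.01} give $A01 \Leftrightarrow A11$; and Lemmas \ref{lemma.00.11} and \ref{lemma.11.00} give $A00 \Leftrightarrow A11$. Arithmetically these are just the succession, $2$-successions, $3$-successions and $4$-successions identities of $M2$, specialised to the residues $9k+3$, $9k+2$, $9k+1$, $9k$ against $9k+4$; the cited lemma proofs additionally certify that every intermediate node is a positive integer and satisfies the parity and ``$\equiv 1 \bmod 3$'' side conditions of $M1$, so each action sequence is legal in $M1$.

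The theorem then follows immediately. Given any two distinct nodes $u,v \in \{A00,A01,A02,A10,A11\}$, concatenate the path $u \Rightarrow A11$ with the path $A11 \Rightarrow v$, each supplied by one of the lemmas above (or taken to be the empty sequence when an endpoint already equals $A11$); this yields $u \Rightarrow v$. Hence all five nodes are mutually reachable in $M1$, i.e. the $5$-cluster is internally connected.

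I do not expect a genuine obstacle: the mathematical content is entirely carried by the preceding lemmas, and what remains is the routine hub argument above. The one point that deserves care is the reading of ``internally connected''. The action sequences of the lemmas pass through nodes outside the cluster (values such as $A^{D}$, $A^{DD}$, $(A^{DD}+1)111$, etc.), and in fact there is generally no direct $M1$-edge between, say, $A10$ and $A11$; so the claim must be understood as mutual reachability inside $M1$, not as connectivity of the induced subgraph on the five nodes alone. Under that intended reading the argument is complete; explicit action sequences between any two of the five nodes — e.g. from $A00$ to $A02$ — can be obtained by composing the relevant lemma sequences, but this is not needed for the statement.
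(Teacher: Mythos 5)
Your argument is correct and coincides with the paper's: the paper derives the theorem directly from Lemmas \ref{lemma.10.11}--\ref{lemma.11.00}, which connect each of $A00$, $A01$, $A02$, $A10$ to $A11$ in both directions, so $A11$ serves as the hub exactly as you describe. Your additional remarks on transitivity and on reading ``internally connected'' as mutual reachability in $M1$ (since the connecting action sequences pass through nodes outside the cluster) merely make explicit what the paper leaves tacit.
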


This theorem follows from Lemmas \ref{lemma.10.11}, \ref{lemma.11.10}, \ref{lemma.02.11}, \ref{lemma.11.02}, \ref{lemma.01.11}, \ref{lemma.11.01}, \ref{lemma.00.11} and \ref{lemma.11.00}.

\begin{theorem}[The 5-cluster attaching theorem] For any given positive integer $A$, there is at least a path from $A$ to a 5-cluster. \end{theorem}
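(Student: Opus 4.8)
The plan is to show that starting from an arbitrary positive integer $A$, a short action sequence in $M1$ lands in some 5-cluster. Recall that a 5-cluster consists of the residues $9k, 9k+1, 9k+2, 9k+3, 9k+4$ for some $k$, i.e. the numbers whose base-3 representation ends in $00$, $01$, $02$, $10$, or $11$. So the claim is equivalent to: from any $A$ one can reach a number whose last two base-3 digits lie in $\{00,01,02,10,11\}$. The complementary (``bad'') endings are $12$, $20$, $21$, i.e. the 3-cluster residues $9k+5$, $9k+6$, $9k+7$. Hence it suffices to prove that whenever $A$ itself happens to lie in a 3-cluster, some sequence of actions moves it into the accompanying 5-cluster (or into any other 5-cluster), since if $A$ is already in a 5-cluster there is nothing to do.

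\medskip

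First I would dispose of the trivial case: if the base-3 expansion of $A$ ends in $00$, $01$, $02$, $10$, or $11$, then $A$ already belongs to its own 5-cluster and the empty path works. Next I would handle $A$ ending in $12$, $20$, or $21$. For these I would exhibit explicit action sequences, built from the succession/precursor machinery of Section on $M2$ together with the digit-manipulation rules (action $T$ appends a $1$, action $F$ erases a trailing $1$, action $D$ doubles, action $B$ halves-with-carry). For instance, applying $T$ to a number ending in $12$ gives one ending in $121$, and then $B$ (halving) or further $D$/$F$ steps shifts the trailing digits; one checks that after a bounded number of such steps the trailing two digits fall into $\{00,01,02,10,11\}$. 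Concretely, from $A12$ one has $B(A12)$ which, depending on the carry from higher digits, produces a number of the form $(\cdot)20$ or $(\cdot)21$ — but iterating halving eventually strips the low part, and combined with the $x+1$ succession function $TDDFFBBT$ of the succession theorem one can always nudge a 3-cluster representative up or down by $1$ into an adjacent 5-cluster residue $9k+4$ or $9k+8\equiv 9(k+1)+(-1)$. Since $9k+5$ is one more than $9k+4$ (a 5-cluster residue) and $9k+7$ is one less than $9k+8=9(k+1)$ (again a 5-cluster residue), a single application of the succession or precursor function suffices for the endings $12$ and $21$; for the ending $20$ (residue $9k+6$) one applies the $x+2$ or $x-2$ function from the 2-successions theorem, or two single steps, to reach $9k+4$ or $9(k+1)$.

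\medskip

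So the skeleton of the proof is: (i) if $A\bmod 9\in\{0,1,2,3,4\}$, done immediately; (ii) if $A\bmod 9=5$, apply the precursor of the succession function to get $A-1$, whose residue is $4$, landing in a 5-cluster; (iii) if $A\bmod 9=7$, apply the succession function to get $A+1$, whose residue is $8\equiv 9(k+1)+(-1)$, i.e. a number ending in base-3 with digits of $9(k+1)$ minus one — more cleanly, $9k+7$ plus $1$ equals $9k+8$ which is the $22$-ending... so instead apply the precursor twice, $A-2$, residue $5$, then again — better to use a direct short sequence reaching residue $4$; (iv) if $A\bmod 9=6$, apply the $x-2$ precursor (the inverse of $DFFBTT$) to get residue $4$. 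In every case the target residue is one of $\{0,1,2,3,4\}$, hence in a 5-cluster, and since $A+1$, $A-1$, $A+2$, $A-2$ are all positive for $A\geq 3$ the sequences are valid in $M1$; the finitely many small values $A\in\{1,2\}$ are checked by hand ($1$ and $2$ themselves lie in the 5-cluster with $k=0$).

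\medskip

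The main obstacle I anticipate is bookkeeping the base-3 carries when a halving action $B$ is applied and when the succession functions are expanded at the level of digit strings rather than arithmetic; one must be careful that the ``$(A+1)$'' carry notation introduced before Definition of 9-cluster does not push the relevant number out of the intended cluster or below $1$. But because the succession and 2-successions theorems already certify that $x\mapsto x\pm 1$ and $x\mapsto x\pm 2$ are realizable for every positive integer $x$ (with the caveat that we need $x\pm 1,\ x\pm 2>0$, handled by the small-case check), the heart of the argument is just the elementary observation that every residue class mod $9$ is within distance $\le 2$ of the 5-cluster residues $\{0,1,2,3,4\}$ — indeed $5$ is distance $1$ from $4$, $6$ is distance $2$ from $4$, $7$ is distance $1$ from $8\equiv 0\pmod 9$ shifted, equivalently distance $2$ below $9$ — so a bounded number of succession/precursor steps always suffices.
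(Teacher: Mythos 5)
Your proposal misses the intended (and much simpler) argument, and the route you take has a genuine gap. The statement only asks for a path from $A$ to \emph{some} 5-cluster, not to the 5-cluster containing a number near $A$. The paper's proof is two actions: $T(A)=A1$ and $T(A1)=A11$, i.e. $TT(A)=9A+4$, whose base-3 expansion is the digits of $A$ followed by $11$; this is by definition an element of the 5-cluster with prefix $A$. Both actions $T$ are unconditionally legal in $M1$, so nothing further is needed — no case analysis on $A \bmod 9$, no carries, no small-case check.

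The gap in your argument is that you invoke the succession and precursor functions as if they were freely available in $M1$, but those theorems are proved in $M2$, where all four actions are unconstrained and non-integer nodes are allowed. In $M1$, $B$ requires an even argument and $F$ requires an argument $\equiv 1 \pmod 3$; tracing $\mathrm{'TDDFFBBT'}(x)$ one gets $x \mapsto 3x+1 \mapsto 6x+2 \mapsto 12x+4 \mapsto 4x+1 \mapsto 4x/3 \mapsto \cdots$, so the second $F$ is illegal (and the intermediate value is not an integer) unless $3 \mid x$. That is precisely why the paper, when it reuses these sequences inside $M1$ (Lemmas \ref{lemma.10.11} through \ref{lemma.00.11}), applies each one only to numbers with a specific pair of trailing base-3 digits and verifies the carries digit by digit. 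Your steps (ii)--(iv), which ask to ``apply the precursor of the succession function to get $A-1$'' for an arbitrary $A$ in a given residue class mod $9$, are therefore not legal moves in $M1$ without further divisibility hypotheses; your case (iii) for residue $7$ is in any case left unresolved. (A smaller issue: the 5-cluster is defined only for a positive integer prefix $A$, so ``$1$ and $2$ lie in the 5-cluster with $k=0$'' does not match the paper's definition; the paper's $TT$ argument sidesteps this entirely.)
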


\begin{proof}
$T(A) = A1$

$T(A1)=A11$
\end{proof}

From the 5-cluster attaching theorem, the action sequence $`TT'$ can assure that arbitrary positive integer
is connected to at least one 5-cluster.

Here, we prove that the 3-clusters are internally connected.

\begin{lemma}\label{lemma.20.21}
The action sequence 'TDDFFBBT' can transform
$A20$ to $A21$.
\end{lemma}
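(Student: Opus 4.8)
The plan is to mimic exactly the computation carried out in Lemma~\ref{lemma.10.11} (the proof of $A10 \Rightarrow A11$), since the action sequence is identical: $\mathtt{'TDDFFBBT'}$. The only difference is that the low-order digits we start from are $20$ instead of $10$, so I would track how this changes the intermediate expressions step by step. Concretely, I would apply $T$ to $A20$ to get $A201$ (in decimal, $9k+6 \mapsto 3(9k+6)+1 = 27k+19$, whose base-$3$ representation ends in $201$). Then $D$ doubles, $D$ doubles again, and the two $F$'s strip trailing $1$'s, the two $B$'s halve, and a final $T$ is applied.

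The key steps, in order, would be: (1) $T(A20) = A201$; (2) $D(A201) = A^{D}\!\cdot 2\cdot\!\ldots$ — here I expect a carry, since doubling a representation ending in $201$ gives something ending in $1102$ or similar once carries propagate, so I would carefully write $D(A201) = (A^{D}+?)\ldots$ with the correct carry digit; (3) a second $D$, again watching for carries; (4) two $F$ actions, each of which simply erases a trailing $1$ symbol (valid only when the current value is $\equiv 1 \pmod 3$, which I would check holds at each stage); (5) two $B$ actions, halving with the appropriate floor/carry bookkeeping using the $A_1$, $A_2$ notation introduced earlier; (6) a final $T$ producing $A21$. Because action $T$ composed with the rest is designed so that $\mathtt{'TDDFFBBT'}(x) = x+1$ by the succession theorem, and $A20$ in decimal is $9k+6$ while $A21$ is $9k+7$, the net effect $x \mapsto x+1$ is exactly what is needed; so the computation is essentially guaranteed to close, and the real content is just exhibiting that every intermediate node is a genuine positive integer reachable by a legal action in $M1$ (in particular that each $F$ is applied only to a number $\equiv 1 \pmod 3$, and each $B$ only to an even number).

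The main obstacle I anticipate is the carry handling in the base-$3$ notation: unlike the $A10 \Rightarrow A11$ case, starting from $A20$ the doublings will generate carries into the $A$ block (this is presumably why the statement is phrased as a separate lemma rather than folded into Lemma~\ref{lemma.10.11}), so I would need to be careful to write, e.g., $(A^{DD}+c)$ with the right constant $c$ at the point where the carry lands, and then verify that the two subsequent $B$ actions and the final $T$ correctly cancel that carry so that the result really is $A21$ and not $(A+1)$-shifted garbage. Once the carry constant is pinned down correctly, the rest is the same four-line arithmetic check as in the earlier lemmas. An alternative, cleaner route would be to simply invoke the succession theorem directly: since $\mathtt{'TDDFFBBT'}(x) = x+1$ for every positive integer $x$, and $A20$ and $A21$ denote consecutive integers $9k+6$ and $9k+7$, we get $\mathtt{'TDDFFBBT'}(A20) = A21$ immediately, provided one checks that all intermediate values stay positive integers so that the sequence is legal in $M1$ — and that positivity check is the only thing one genuinely has to do.
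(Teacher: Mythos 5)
Your first route is exactly the paper's proof. The paper computes $T(A20)=A201$, $D(A201)=(A^{D}+1)102$, $D((A^{D}+1)102)=(A^{D}+1)^{D}211$, then two $F$'s strip the trailing $1$'s to give $(A^{D}+1)^{D}2$, two $B$'s give $(A^{D}+1)1$ and then $A2$, and the final $T$ gives $A21$; the carry you anticipated is precisely the $+1$ absorbed into $A^{D}$ after the first doubling, and it is cancelled where you predicted, by the two halvings near the end. In decimal the chain is $9k+6 \to 27k+19 \to 54k+38 \to 108k+76 \to 36k+25 \to 12k+8 \to 6k+4 \to 3k+2 \to 9k+7$, and each $F$ lands on a value $\equiv 1 \pmod 3$ while each $B$ lands on an even value, so every step is legal in $M1$. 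Your alternative route --- invoke the succession theorem to guarantee the output is $x+1=A21$ and then verify only the legality of the intermediate steps --- is sound and arguably cleaner than the paper's presentation, since legality is equivalent to integrality of the intermediate values (an illegal $B$ or $F$ produces a non-integer); the one small slip is calling this a ``positivity'' check rather than an integrality check. Either way the approach is correct; what remains is only the routine task of actually writing out the eight intermediate values, as above.
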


\begin{proof}[The proof of $A20 \Rightarrow A21$.]
The calculations are listed as follows,

$T(A20)=A201$

$D(A201)=(A^{D}+1)102$

$D((A^{D}+1)102)=((A^{D}+1)^{D})211$

$F(((A^{D}+1)^{D})211)=(A^{D}+1)^{D}21$

$F((A^{D}+1)21)=(A^{D}+1)^{D}2$

$B((A^{D}+1)^{D}2)=(A^{D}+1)1$

$B((A^{D}+1)1)=A2$

$T(A2)=A21$

\end{proof}

\begin{lemma}\label{lemma.21.20}The action sequence ('TDDFFBBT')$^{ - 1}$='FDDTTBBF'
can transform $A21$ to $A20$.
\end{lemma}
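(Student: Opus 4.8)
The plan is to obtain Lemma~\ref{lemma.21.20} directly from Lemma~\ref{lemma.20.21} by traversing the very same path in reverse, using the fact (established when $M1$ was set up) that every edge of $M1$ is two-way and that the four actions come in inverse pairs: $T$ (the map $x\mapsto 3x+1$) is inverse to $F$ (the map $x\mapsto (x-1)/3$), and $D$ (the map $x\mapsto 2x$) is inverse to $B$ (the map $x\mapsto x/2$). So if a word $s=a_1a_2\cdots a_m$ realizes $A20 \Rightarrow A21$ through intermediate values $x_0=A20,\ x_1,\ \dots,\ x_m=A21$, then the reversed word, with each letter replaced by its inverse action, sends $x_m\mapsto x_{m-1}\mapsto\cdots\mapsto x_0$. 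Applying this to $s=$ 'TDDFFBBT' from Lemma~\ref{lemma.20.21} gives exactly 'FDDTTBBF', which is the action sequence in the statement.

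Next I would check that this reversed word is \emph{legal} in $M1$, i.e.\ that at every step the action applied meets its domain condition on the node it acts on. This is automatic. In $M1$ the actions $T$ and $D$ carry no precondition, so the steps of 'FDDTTBBF' that undo an $F$ (done by $T$) or a $B$ (done by $D$) are always admissible. A step that undoes a $T$ applies $F$ to a value of the form $3x+1\equiv 1\pmod 3$, so $F$ is admissible there; a step that undoes a $D$ applies $B$ to a value of the form $2x$, which is even, so $B$ is admissible there. Hence every one of the eight steps is a valid $M1$ transition, and $A21\Rightarrow A20$ follows from $A20\Rightarrow A21$.

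The one piece of genuine content — and the step I expect to be the main (if modest) obstacle — is confirming that all the intermediate values in the derivation of Lemma~\ref{lemma.20.21}, namely $A201$, $(A^D+1)102$, $((A^D+1)^D)211$, and so on, are honest positive integers, so that the reversed path actually lives in $M1$ rather than merely in $M2$. This is just a re-reading of that chain: each entry is produced from the previous one by $3x+1$, by $2x$, by an exact division by $3$ (whenever the argument ends in the digit $1$), or by an exact halving (whenever the argument is even), so integrality propagates down the chain and back. Once that bookkeeping is in place, no further computation is needed — running the eight equalities of Lemma~\ref{lemma.20.21} backwards, with $F$ and $B$ in place of $T$ and $D$ respectively and vice versa, is precisely a proof of $A21\Rightarrow A20$ via 'FDDTTBBF', which is why the paper may omit the explicit verification as it announced for all inverse lemmas.
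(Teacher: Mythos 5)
Your proposal is correct and takes essentially the same route the paper intends: the paper explicitly omits proofs of all inverse lemmas on the grounds that every edge of $M1$ is two-way and the procedure is the reverse of Lemma~\ref{lemma.20.21}, which is exactly the reversal-with-inverse-actions argument you carry out (and you additionally verify the domain conditions and integrality, which the paper leaves implicit). No gap; nothing further is needed.
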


\begin{lemma}\label{lemma.12.21}The action sequence 'DDDFFBBTBT' can
transform $A12$ to $A21$.
\end{lemma}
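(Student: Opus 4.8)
The plan is to verify the claim by direct forward computation: apply the ten actions of the sequence $'DDDFFBBTBT'$ to the node $A12$ one at a time, in left-to-right order exactly as in the proofs of Lemmas~\ref{lemma.10.11}--\ref{lemma.00.11}, and check after each action that the divisibility precondition required by that action (evenness for $B$, $\equiv 1 \pmod 3$ for $F$) is met. In decimal $A12 = 9A+5$ and $A21 = 9A+7$, so the net effect must be $+2$; this is the integer-respecting analogue of the $M2$ identity $'DFFBTT'(x)=x+2$ established in the 2 successions theorem.

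First I would carry out the three doublings. Each $D$ multiplies by $2$ and produces a carry in the $3$-based representation, so I would track the running value as $(A^{D}+1)01$, then $(A^{DD}+2)02$, then $(A^{DDD}+4)11$, which in decimal is $72A+40$; note $72A+40 \equiv 1 \pmod 3$, so the first $F$ is legal and erases the trailing $1$ to give $(A^{DDD}+4)1 = 24A+13$, which is again $\equiv 1 \pmod 3$, so the second $F$ gives $A^{DDD}+4 = 8A+4$. Since $8A+4$ and $4A+2$ are both even, the two $B$ actions are legal and yield $4A+2$ and then $2A+1$. Finally $T$ sends $2A+1$ to $6A+4$, which is even, so $B$ gives $3A+2$, and the last $T$ gives $3(3A+2)+1 = 9A+7 = A21$, as required.

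The only thing that needs care is the base-$3$ carry bookkeeping in the doubling steps --- getting the carry digits $+1$, $+2$, $+4$ right and writing them consistently with the notation $A^{D}$, $A^{DD}$, $A^{DDD}$ introduced earlier for model $M1$ --- together with checking that the two $\equiv 1 \pmod 3$ conditions for the $F$ actions and the three evenness conditions for the $B$ actions hold for \emph{every} positive integer $A$. All five of these residues are independent of $A$ (they are fixed by the constant terms $40$, $13$, $4$, $2$, $4$), so no case analysis on $A$ is needed and the computation closes in a single pass.
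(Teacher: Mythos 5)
Your proof is correct and follows essentially the same route as the paper: a direct left-to-right pass through the ten actions, tracking the base-$3$ carries (your $(A^{DD}+2)$, $(A^{DDD}+4)$ are the same quantities the paper writes as $(A^{D}+1)^{D}$, $(A^{D}+1)^{DD}$) and landing on $9k+7 = A21$. Your explicit verification that the two $F$-preconditions and three $B$-preconditions hold independently of $A$ is a small but welcome addition that the paper leaves implicit in its notation.
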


\begin{proof}[The proof of $A12 \Rightarrow A21$.]

The calculations are listed as follows,

$D(A12)=(A^{D}+1)01$

$D((A^{D}+1)01)=(A^{D}+1)^{D}02$

$D((A^{D}+1)^{D}02)=(A^{D}+1)^{DD}11$

$F((A^{D}+1)^{DD}11)=(A^{D}+1)^{DD}1$

$F((A^{D}+1)^{DD}1)=(A^{D}+1)^{DD}$

$B((A^{D}+1)^{DD})=(A^{D}+1)^D$

$B((A^{D}+1)^{D})=(A^{D}+1)$

$T((A^{D}+1))=(A^{D}+1)1$

$B((A^{D}+1)1)=A2$

$T(A2)=A21$

\end{proof}

\begin{lemma}\label{lemma.21.12}The action sequence ('DDDFFBBTBT')$^{ - 1}$ =
'FDFDDTTBBB' can transform $A21$ to $A12$.
\end{lemma}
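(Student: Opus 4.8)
The plan is to deduce this lemma from Lemma \ref{lemma.12.21} together with the fact, recorded when $M1$ was introduced, that the Collatz graph of $M1$ is undirected. Recall that in $M1$ the action $T$ ($x \mapsto 3x+1$) and the action $F$ ($x \mapsto (x-1)/3$) are mutually inverse, and so are $B$ ($x \mapsto x/2$) and $D$ ($x \mapsto 2x$). Moreover every directed edge of the $M1$-graph can actually be traversed backwards: if an edge arises from $T$, its head equals $3x+1 \equiv 1 \pmod 3$, so $F$ is applicable there and returns it to $x$; if an edge arises from $D$, its head equals $2x$, which is even, so $B$ is applicable there and returns it to $x$; the symmetric statements for the inverses of $F$ and $B$ hold because $T$ and $D$ carry no precondition. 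Writing $\bar T = F$, $\bar F = T$, $\bar B = D$, $\bar D = B$, it follows that an $M1$-path $X \Rightarrow Y$ realised by an action word $a_1 a_2 \cdots a_n$ (applied in that order) always yields an $M1$-path $Y \Rightarrow X$ realised by the reversed, inverted word $\bar a_n \cdots \bar a_2 \bar a_1$.

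Concretely, I would first copy out the path $A12 \Rightarrow A21$ exhibited in the proof of Lemma \ref{lemma.12.21}; read in order of application, its action word is 'DDDFFBBTBT'. Reversing the displayed list of intermediate nodes and replacing each step by its inverse action produces a chain that begins at $A21$ and ends at $A12$, and reading off the actions in order gives 'FDFDDTTBBB', which is exactly ('DDDFFBBTBT')$^{-1}$ as in the statement. Since every node appearing in the chain of Lemma \ref{lemma.12.21} is a genuine positive integer and, by the paragraph above, each backward step is a legal $M1$-action, this chain is a valid $M1$-path; that is the assertion.

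The verification is entirely mechanical, so the only point that needs comment is whether the divisibility preconditions --- argument $\equiv 1 \pmod 3$ for $F$ and argument even for $B$ --- are met wherever those two actions occur in the reversed word. This is automatic: in the reversed chain an occurrence of $F$ is the inverse of an occurrence of $T$ in the forward chain, so its argument has the form $3x+1$ and is therefore $\equiv 1 \pmod 3$; and an occurrence of $B$ is the inverse of an occurrence of $D$, so its argument has the form $2x$ and is therefore even. Hence there is no genuine obstacle; as with the other inverse lemmas of this section, the node-by-node computation --- which is just the mirror image of the one already displayed for Lemma \ref{lemma.12.21} --- can be omitted, and the only real content is confirming that the reversed, inverted word coincides with the claimed string 'FDFDDTTBBB'.
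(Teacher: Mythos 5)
Your proof is correct and takes essentially the same approach as the paper, which omits the details for all inverse lemmas on the grounds that the procedure is the reverse of Lemma \ref{lemma.12.21} and every edge of $M1$ is two-way. Your write-up merely makes explicit why the edges are reversible (heads of $T$-edges are $\equiv 1 \pmod 3$, heads of $D$-edges are even, and $T$, $D$ carry no preconditions in $M1$) and checks that the reversed, inverted word is indeed 'FDFDDTTBBB'.
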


\begin{theorem}[The 3-cluster connection theorem]\label{th.3cluster}For any given positive integer $A$, nodes $A12$, $A20$ and $A21$ are internally connected.
\end{theorem}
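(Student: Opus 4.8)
The plan is to derive the theorem immediately from the four preceding lemmas, in exactly the style used for the 5-cluster connection theorem. The key observation is that reachability in $M1$ is transitive: if an action sequence $\sigma$ realizes $X \Rightarrow Y$ and an action sequence $\tau$ realizes $Y \Rightarrow Z$, then the concatenated sequence $\sigma\tau$ realizes $X \Rightarrow Z$, since each action is simply applied to the current value. Hence it suffices to join the three nodes $A12$, $A20$, $A21$ by a spanning tree of two-way edges and then invoke transitivity to obtain a directed path between every ordered pair.

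First I would record the edge $A20 \leftrightarrow A21$: Lemma \ref{lemma.20.21} gives $A20 \Rightarrow A21$ via 'TDDFFBBT', and Lemma \ref{lemma.21.20} gives the reverse $A21 \Rightarrow A20$ via the inverse sequence 'FDDTTBBF'. Next I would record the edge $A12 \leftrightarrow A21$: Lemma \ref{lemma.12.21} gives $A12 \Rightarrow A21$ via 'DDDFFBBTBT', and Lemma \ref{lemma.21.12} gives $A21 \Rightarrow A12$ via 'FDFDDTTBBB'. These two two-way edges already span the vertex set $\{A12, A20, A21\}$, with $A21$ playing the role of the hub.

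Finally, transitivity closes the argument. For the one pair not yet directly joined, namely $A12$ and $A20$: composing $A12 \Rightarrow A21 \Rightarrow A20$ (that is, 'DDDFFBBTBT' followed by 'FDDTTBBF') yields $A12 \Rightarrow A20$, and composing the reverses yields $A20 \Rightarrow A12$. Thus every node of the 3-cluster reaches every other, so $\{A12, A20, A21\}$ is an internally connected subgraph. I do not expect a genuine obstacle here: all of the computational content — checking that each listed action is admissible in $M1$ (in particular that the $F$ steps are applied only when the last trit is $1$ and that the $3$-based carries are bookkept correctly) — has already been discharged inside Lemmas \ref{lemma.20.21}--\ref{lemma.21.12}, and the present theorem is merely the transitive closure of those edges. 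The only mild point to note is that the lemma statements concern the specific forms $A12$, $A20$, $A21$ rather than arbitrary integers, so no further case analysis on the residue of $A$ is required.
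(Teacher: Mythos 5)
Your proposal is correct and matches the paper's own argument, which likewise derives the theorem directly from Lemmas \ref{lemma.20.21}, \ref{lemma.21.20}, \ref{lemma.12.21} and \ref{lemma.21.12}, with $A21$ as the hub and transitivity of reachability closing the $A12\Leftrightarrow A20$ pair. You merely spell out the concatenation of action sequences more explicitly than the paper does.
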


According to Lemma \ref{lemma.20.21}, \ref{lemma.21.20}, \ref{lemma.12.21} and \ref{lemma.21.12}.

From now on, we will prove that every 3-cluster can connect to its corresponding 5-cluster.

\begin{lemma} When $A$ is even, there exists at least one
action sequence to transform $A21$ to $A11$, i.e., $A21 \Rightarrow A11$.
\end{lemma}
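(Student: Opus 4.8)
First I would reduce the claim to a shorter reachability statement. Writing $n$ for the decimal value of $A$, we have $A21 = 9n+7 \equiv 1 \pmod 3$, so the action $F$ is legal at $A21$ and $F(A21) = A2$; likewise $T(A1) = A11$. Hence it suffices to prove that $A2 \Rightarrow A1$ whenever $A$ is even, because then $A21 \Rightarrow A2 \Rightarrow A1 \Rightarrow A11$ by prefixing $F$ and appending $T$.

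Now assume $A$ is even and write $A = 2j$ with $j \ge 1$, so that $A_1 = \lfloor A/2\rfloor = j$ exactly. Then $A2 = 3A+2 = 6j+2 = 2(3j+1)$ is even, so $B$ is legal at $A2$ and $B(A2) = 3j+1 = A_1 1$. Next $A_1 1 = 3j+1 \equiv 1 \pmod 3$, so $F$ is legal and $F(A_1 1) = A_1 = j$; then $D(j) = 2j = A$ and $T(A) = 3(2j)+1 = 6j+1 = A1$. Thus the sequence 'BFDT' transforms $A2$ to $A1$, and consequently 'FBFDTT' transforms $A21$ to $A11$. As a sanity check, in $M2$ the sequence 'FBFDTT' realizes the affine map $x \mapsto \frac{x-1}{3} \mapsto \frac{x-1}{6} \mapsto \frac{x-7}{18} \mapsto \frac{x-7}{9} \mapsto \frac{x-4}{3} \mapsto x-3$, and indeed $(9n+7)-3 = 9n+4 = A11$. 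Every intermediate value ($9n+7$, $3n+2$, $3j+1$, $j$, $2j$, $6j+1$, $9n+4$) is a positive integer since $n = 2j \ge 2$, and the only step whose legality in $M1$ requires anything beyond the trivially satisfied congruences is the $B$ at $A2$, which needs $A2$ to be even --- this is exactly where the hypothesis ``$A$ is even'' is used.

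I expect the only real difficulty to be locating such a sequence in the first place. The ``obvious'' candidates are the precursors of the succession identities already established for $M2$ (for instance 'FFDDTTBB', the precursor of 'DDFFBBTT', or 'FDDTTBBF', the precursor of 'TDDFFBBT'), all of which compute $x \mapsto x-3$ or $x \mapsto x-1$ in $M2$; but each of them applies an $F$ at a moment when the current value is divisible by $3$, so none of them is a legal path in $M1$. The remedy is to repair the residue with a doubling $D$ and to let a single well-placed $B$ absorb the parity obstruction, and that $B$ is legal precisely when $A$ is even. Once the sequence is found, the proof is the short piece of base-$3$ bookkeeping above, in the same style as Lemmas \ref{lemma.10.11}, \ref{lemma.20.21} and \ref{lemma.12.21}.
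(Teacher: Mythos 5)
Your proof is correct and takes essentially the same approach as the paper's: exhibit one explicit action sequence, verify it arithmetically step by step, and use the evenness of $A$ exactly once to legitimize the single $B$-step. The paper's witness is longer — 'TB' taking $A21$ to $A_{1}102$, then 'DFFBTT' via Lemma \ref{lemma.02.11}, then 'FFF' and 'DTT' — whereas your 'FBFDTT' is a shorter, self-contained alternative whose legality and correctness check out exactly as you computed.
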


\begin{proof}[The proof of $A21 \Rightarrow A11$.]

The calculations are listed as follows,

$T(A21) = A211$

$B(A211) =A_{1}102$

$'DFFBTT'(A_{1}102)=A_{1}111$      (Lemma \ref{lemma.02.11})

$FFF(A_{1}111)=A_{1}$

$'DTT'(A_{1})=A11$

\end{proof}

\begin{lemma}\label{lemma.even.11.21}
When $A$ is even, there exists at least one sequence
to transform $A11$ to $A21$, i.e., $A11 \Rightarrow A21$.
\end{lemma}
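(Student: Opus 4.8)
The plan is to obtain this lemma directly from the preceding lemma by using the fact, already established in the excerpt, that every edge of $M1$ is two-way --- exactly as was done for all the inverse lemmas earlier. The preceding lemma produces, for even $A$, the action word `TBDFFBTTFFFDTT' (that is: $T$; then $B$; then `DFFBTT'; then `FFF'; then `DTT') carrying $A21$ to $A11$. Since each of the four actions has an inverse that is again an action of $M1$ --- $T^{-1}=F$, $F^{-1}=T$, $B^{-1}=D$, $D^{-1}=B$ --- I would reverse this word and replace each letter by its inverse, obtaining the word `FFBTTTFFDTTBDF', and observe that it carries $A11$ back to $A21$ through the same intermediate nodes traversed in the opposite order, namely $A11 \Rightarrow A1 \Rightarrow A \Rightarrow A_{1} \Rightarrow A_{1}1 \Rightarrow A_{1}11 \Rightarrow A_{1}111 \Rightarrow \ldots \Rightarrow A_{1}102 \Rightarrow A211 \Rightarrow A21$. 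No fresh computation is needed, so the write-up can follow the paper's convention: ``because the procedure is inverse to that of the preceding lemma and all edges of $M1$ are two-way, we omit the detailed calculations.''

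Before suppressing the calculation I would make one check: that ``$A$ even'' is exactly the hypothesis the reversed derivation requires. In the forward derivation evenness of $A$ is what makes the value $27k+22$ of $A211$ even, so that $B(A211)=A_{1}102$ (with $(A)_3=(k)_{10}$), and also what gives $D(A_{1})=A$ in the closing block `DTT'. In the reversed derivation the step $B(A)=A_{1}$ holds by the very definition of $A_{1}$ (halving with carry), and the single step that genuinely uses evenness is $D(A_{1}102)=A211$: writing $m$ for the value of $A_{1}$, which equals $A/2$ precisely because $A$ is even, doubling gives $2(27m+11)=54m+22=27(2m)+22$, whose base-three form is indeed $A211$. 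So the hypothesis carries over without change, and the final step $F(A211)=A21$ is valid for every $A$ since $27k+22\equiv 1 \pmod 3$.

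I expect no real obstacle here: the genuine content --- the detour from the $3$-cluster node $A21$ up through $A211$, into a $5$-cluster of $A_{1}1$, down to $A_{1}$ and back up to $A11$ --- already lives in the preceding lemma (which in turn invokes Lemma \ref{lemma.02.11}), and reversibility of $M1$ transports all of it. If one wanted an argument independent of the two-way property, one could instead verify the fourteen steps of `FFBTTTFFDTTBDF' one at a time with the three-based bookkeeping ($F$ deletes a trailing $1$, $T$ appends a $1$, $D$ doubles, $B$ halves with carry); this is strictly routine and only longer.
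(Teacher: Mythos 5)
Your proposal is correct and matches the paper's approach exactly: the paper gives no explicit proof for this lemma, relying on its stated convention that each inverse lemma follows by reversing the preceding derivation and inverting each action, since all edges in $M1$ are two-way. Your additional check that the evenness hypothesis transfers to the reversed derivation (in particular that $D(A_{1}102)=A211$ requires $A_1=A/2$) is sound and slightly more careful than the paper itself.
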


\begin{lemma}\label{lemma.RD0.21.11}
When $A$ is odd, and $A=R0$, i.e., the last symbol of $A$ is $'0'$, there
exists at least one action sequence to transform $A21$ to $A11$, i.e., $R021$
$\Rightarrow$  $R011$.
\end{lemma}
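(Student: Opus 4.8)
The plan is to reduce the claim directly to the $5$-cluster connection theorem (Theorem \ref{th.5cluster}) by prepending one $F$-step and appending one $T$-step. Since $A$ is odd and its last base-$3$ symbol is $0$, we have $A = 3R$ for a positive integer $R$; in the notation of this section the node $A21$ is the string $R021$, and the target node $A11$ is the string $R011 = (R0)11$.

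First I would apply the action $F$ to $R021$. The string ends in the symbol $1$, so $F$ is admissible and simply erases that trailing $1$, giving $F(R021) = R02$. Now $R02$ and $R01$ both belong to the $5$-cluster $\{R00, R01, R02, R10, R11\}$, namely the one obtained by taking the base integer in the $5$-cluster definition to be $R$ (which is a legitimate positive integer). Hence Theorem \ref{th.5cluster} applies and gives $R02 \Rightarrow R01$. Finally I would apply the action $T$, which appends the symbol $1$: $T(R01) = R011 = A11$. Concatenating the three pieces yields $R021 \Rightarrow R02 \Rightarrow R01 \Rightarrow R011$, i.e. $A21 \Rightarrow A11$. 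If a fully explicit action sequence is preferred, one can instead route through $R021 \Rightarrow R02 \Rightarrow R^D 11 \Rightarrow R^D 02 \Rightarrow R01 \Rightarrow R011$ by the moves $F$, then $D$, then the sequence 'FFDTTB' of Lemma \ref{lemma.11.02} (with its $A$ taken to be $R^D$), then $B$, then $T$; one checks that every intermediate value stays a positive integer.

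The only genuine content, and hence the main obstacle, is the bookkeeping that makes the $5$-cluster identification legitimate: after $F$ strips the trailing $1$, the result must be a string of the form $X02$ for some positive integer $X$ (here $X = R$), and this is exactly what the hypothesis ``the last symbol of $A$ is $0$'' provides — it forces $A21 = R021$, so that $F$ produces $R02$ and no carry disturbs the body of $A$. The parity hypothesis ``$A$ odd'' plays no role in the argument; it merely separates this case from the even case handled by the previous lemma. Once the identification is in place, all the real work has already been carried out inside Theorem \ref{th.5cluster}.
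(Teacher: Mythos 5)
Your proof is correct, and it follows the same overall decomposition as the paper's --- pass through $R02$, use the internal connectivity of the $5$-cluster on the prefix $R$ to reach $R01$, then append a final $1$ with $T$ --- but your opening move is a genuine simplification. The paper travels from $R021$ to $R02$ by a long detour ($D$, then $TT$, then $B$, then an application of Lemma \ref{lemma.02.11}, then $FFF$), whereas you observe that $R021 \equiv 1 \pmod 3$, so a single admissible $F$ strips the trailing $1$ and lands directly on $R02$; this is valid and shortens the action sequence considerably. The tail of your argument ($R02 \Rightarrow R01 \Rightarrow R011$) coincides with the paper's, which routes $R02 \Rightarrow R11 \Rightarrow R01 \Rightarrow R011$ via Lemmas \ref{lemma.02.11} and \ref{lemma.11.01}, i.e., exactly the content of Theorem \ref{th.5cluster} that you invoke, and your explicit fallback sequence checks out (all intermediate values are positive integers with the required divisibility at each $B$ and $F$ step). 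Your closing remarks are also accurate: the hypothesis that the last base-$3$ digit of $A$ is $0$ is precisely what makes $F(A21)=A2$ equal to $R02$, a member of the $5$-cluster with base $R$, while the parity of $A$ is never used and only serves to delimit this case from the neighboring lemmas.
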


\begin{proof}[The proof of $A21 \Rightarrow A11$.]
The calculations are listed as follows,

$R021\Rightarrow R^{D}112 $   (D)

$\Rightarrow  R^{D}11211$         (TT)

$\Rightarrow  R02102$       (B)

$\Rightarrow  R02111$       (Lemma \ref{lemma.02.11})

$\Rightarrow  R02$      (FFF)

$\Rightarrow  R11$      (Lemma \ref{lemma.02.11})

$\Rightarrow  R01$      (Lemma \ref{lemma.11.01})

$\Rightarrow  R011$         (T)

\end{proof}

\begin{lemma}\label{lemma.11.21}
When $A$ is odd, and $A=R0$, there exists at
least one action sequence to transform $A11$ to $A21$, i.e., $R011 $$\Rightarrow$
$R021$.
\end{lemma}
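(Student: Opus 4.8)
The plan is to derive this lemma as the formal inverse of Lemma~\ref{lemma.RD0.21.11}, in exactly the manner announced after Lemma~\ref{lemma.11.10}. Recall that the $M1$ graph has only two-way edges and that the four actions pair up as mutual inverses: $T=3x+1$ reverses $F=(x-1)/3$, and $D=2x$ reverses $B=x/2$. Hence, if one reads the action chain of Lemma~\ref{lemma.RD0.21.11} from right to left, reverses the order of its blocks, and replaces each action by its inverse, one obtains a candidate witness for $R011\Rightarrow R021$. Applying this recipe to the chain
\[
R021 \;\Rightarrow\; R^{D}112 \;\Rightarrow\; R^{D}11211 \;\Rightarrow\; R02102 \;\Rightarrow\; R02111 \;\Rightarrow\; R02 \;\Rightarrow\; R11 \;\Rightarrow\; R01 \;\Rightarrow\; R011
\]
of Lemma~\ref{lemma.RD0.21.11} produces
\[
R011 \;\Rightarrow\; R01 \;\Rightarrow\; R11 \;\Rightarrow\; R02 \;\Rightarrow\; R02111 \;\Rightarrow\; R02102 \;\Rightarrow\; R^{D}11211 \;\Rightarrow\; R^{D}112 \;\Rightarrow\; R021 ,
\]
where the first step is a single $F$ (reversing the terminal $T$), the second and third steps reuse the certified sequences of Lemmas~\ref{lemma.01.11} and~\ref{lemma.11.02}, the fourth step is $TTT$ (reversing $FFF$), the fifth step reuses Lemma~\ref{lemma.11.02} again (now with its parameter equal to $R021$), and the last three steps are $D$, $FF$, and $B$, reversing $B$, $TT$, and $D$ respectively.

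The only thing I would then verify is that every step of this reversed chain is a legal move of $M1$, i.e.\ that each constrained action is applied to an admissible node. Among the four actions only $B$ (which needs an even argument) and $F$ (which needs an argument with $x \bmod 3 \equiv 1$) are constrained; $T$ and $D$ are unconstrained. Now, every node that Lemma~\ref{lemma.RD0.21.11} produced by a forward $T$ is $\equiv 1 \bmod 3$, so the mirrored $F$ applied to it is admissible; every node it produced by a forward $D$ is even, so the mirrored $B$ is admissible; mirrored $T$ and $D$ steps require no check. Thus each edge traversed in Lemma~\ref{lemma.RD0.21.11} is genuinely two-way, and the displayed chain is a valid action sequence realizing $R011\Rightarrow R021$ under the hypothesis that $A=R0$ is odd, a hypothesis inherited verbatim.

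I do not expect a genuine obstacle here. No new arithmetic is required, because the intermediate nodes are precisely those already computed in the proof of Lemma~\ref{lemma.RD0.21.11}, merely visited in the opposite order; the one point demanding care is keeping the carry notation ($R^{D}$, $A^{D}+1$, and the like) consistent when the intermediate strings are rewritten, which is purely mechanical. Consequently the lemma follows from Lemma~\ref{lemma.RD0.21.11} together with the two-wayness of the $M1$ graph, and — consistently with the convention adopted after Lemma~\ref{lemma.11.10} for inverse lemmas — its routine verification may be omitted.
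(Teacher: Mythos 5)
Your proposal is correct and follows exactly the route the paper intends: Lemma~\ref{lemma.11.21} is one of the ``inverse lemmas'' whose proof the paper explicitly omits on the grounds that it is obtained by reversing the chain of Lemma~\ref{lemma.RD0.21.11} using the two-wayness of all edges in $M1$. Your additional check that every mirrored $F$ lands on a node $\equiv 1 \bmod 3$ and every mirrored $B$ on an even node (because they reverse forward $T$ and $D$ steps) is exactly the justification the paper leaves implicit, so the argument matches the paper's in substance.
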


\begin{lemma}\label{lemma.21.11.odd1}
When $A$ is odd, and $A=R1$, there exists at
least one action sequence to transform $A21$ to $A11$, i.e., $R121 $$\Rightarrow$ $ R111$.
\end{lemma}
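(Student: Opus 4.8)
The plan is to produce, just as in Lemma~\ref{lemma.RD0.21.11}, an explicit sequence of $M1$-actions written in the trailing-base-$3$ notation of this section, and then to check line by line that every $B$ is applied to an even number and every $F$ to a number $\equiv 1\bmod 3$. The hypotheses already pin down the arithmetic fact that makes this possible: $A=R1$ means $A=3(R)_3+1$, so $A$ odd forces $(R)_3$ to be \emph{even}; write $(R)_3=2(S)_3$. Note that in $M2$ the step $A21\to A11$ is merely ``subtract $3$'', realized by $FFDDTTBB$; but on $A21=9A+7$ this sequence asks for two consecutive $F$-actions, and after the first one sits at $3A+2\equiv 2\bmod 3$, where $F$ is illegal. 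This is exactly the blockage that forces the split into the sub-cases $A=R0$, $R1$, $R2$, and in the present branch we must detour around it using the evenness of $(R)_3$.

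Concretely I would chain the following moves. First apply $F$ to $A21=R121$ (legal, since $R121\equiv 1\bmod 3$) to reach the $3$-cluster node $R12$; then apply a doubling $D$, which turns $R12$ into the $5$-cluster node $(R^{D}+1)01$. Now invoke Lemma~\ref{lemma.01.11} to move inside that $5$-cluster from the $\cdot\,01$ position to the $\cdot\,11$ position, reaching $(R^{D}+1)11$, and strip its two trailing $1$'s with two further $F$-actions (both legal) to land on the bare prefix $R^{D}+1=2(R)_3+1$. Since $(R)_3$ is even this equals $4(S)_3+1$, so the action block $T,B,B,F$ is legal and carries it to $(S)_3=(R)_3/2$; one more $D$ doubles this back to the node $R$; and finally $T,T,T$ sends $R\mapsto 3(R)_3+1\mapsto 9(R)_3+4\mapsto 27(R)_3+13=R111=A11$. (When $(R)_3=0$, i.e.\ $A=1$, the chain already terminates at the application of Lemma~\ref{lemma.01.11}, so no nonpositive node is ever visited.)

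Each of these steps is a one-line computation of exactly the type displayed above in Lemmas~\ref{lemma.10.11}--\ref{lemma.00.11} and~\ref{lemma.RD0.21.11}, so I will not grind through them here. The only real work — and the step I expect to be the main obstacle — is the bookkeeping that certifies this particular detour: one must choose the route so that simultaneously (i) every $F$ hits a number $\equiv 1\bmod 3$, (ii) every $B$ hits an even number, and (iii) the accumulated carries in base $3$ cancel so that the net effect on the last two symbols is precisely $21\mapsto 11$. Several natural shortcuts fail (i) or (ii) partway through — for instance one cannot start from $T(A21)=A211=27A+22$ and halve, because $A$ is odd — so the content of the lemma is really the exhibition of one route that threads all three constraints, with the evenness of $(R)_3$ supplying exactly the slack needed for the halvings in the tail.
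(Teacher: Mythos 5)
Your chain is correct and the lemma does follow from it, but it is a genuinely different route from the paper's. The paper first moves inside the 3-cluster via Lemma~\ref{lemma.21.12} ($R121\Rightarrow R112$), then applies $TT$, a single $B$ (this is where the evenness of $(R)_3$ enters), a $+2$ adjustment on the tail via Lemma~\ref{lemma.02.11}, and finishes with $FFF$, $D$, $T$. You instead drop immediately to $A2=R12$ by one $F$, double into the 5-cluster headed by $2(R)_3+1$, use the already-proved internal 5-cluster connectivity (Lemma~\ref{lemma.01.11}) to reach $(R^{D}+1)11$, strip to the bare prefix $2(R)_3+1$, and only then spend the evenness of $(R)_3$ on the block $TBBF$ that halves the prefix, before rebuilding with $D$ and $TTT$. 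I checked your route numerically (e.g.\ for $A=7$: $70\to 23\to 46\to\cdots\to 49\to 16\to 5\to 16\to 8\to 4\to 1\to 2\to 7\to 22\to 67=A11$) and each $F$ lands on a residue $1\bmod 3$ and each $B$ on an even number; your separate treatment of $A=1$, where the chain already terminates at $(R^{D}+1)11=13=R111$, correctly avoids the nonpositive value that $TBBF$ would otherwise produce from $2\cdot 0+1$. What your version buys is a cleaner division of labour — the cluster-internal lemmas do the tail bookkeeping and the parity hypothesis is isolated in one four-action block acting on the prefix alone — at the cost of a somewhat longer action sequence; the paper's version is shorter but interleaves the parity use with the tail manipulation. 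Both proofs rest on exactly the same arithmetic observation that $A=R1$ odd forces $(R)_3$ even, which you identify explicitly.
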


\begin{proof}[The proof of $A21 \Rightarrow A11$.]
The calculations are listed as follows,

$R121 \Rightarrow R112$      (Lemma \ref{lemma.21.12})

$\Rightarrow R11211$        (TT)

$\Rightarrow R_{1}02102$      \mbox{(B, here $R$ is even because $A$ is odd and $A=R1$)}

$\Rightarrow R_{1}02111$      (FFF)

$\Rightarrow R_{1}02$     (D)

$\Rightarrow R11$       (T)

$\Rightarrow R111$      (T)
\end{proof}

\begin{lemma}\label{lemma.11.21.odd1}
When $A$ is odd, and $A=R1$, there exists at least
one action sequence to transform $A11$ to $A21$, i.e., $R111 $$\Rightarrow$ $ R121$.
\end{lemma}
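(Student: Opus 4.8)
The plan is to obtain this statement as the formal inverse of Lemma~\ref{lemma.21.11.odd1}, exactly as the other inverse lemmas of this section are obtained. Recall that in $M1$ every edge is two-way: the action $T$ (that is, $3x+1$) is inverse to $F$ (that is, $(x-1)/3$), and $D$ (that is, $2x$) is inverse to $B$ (that is, $x/2$). Hence any directed path realised by an action sequence $a_1a_2\cdots a_n$ can be traversed backwards by $a_n^{-1}\cdots a_2^{-1}a_1^{-1}$, provided each reversed step lands on a node at which that action is admissible in $M1$.

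First I would take the explicit chain built in the proof of Lemma~\ref{lemma.21.11.odd1}, which carries $R121$ to $R111$ through a short list of intermediate nodes of the shapes $R112$, $R11211$, $R_{1}02102$, $R_{1}02111$, $R_{1}02$, $R11$, and read it from right to left. In the reversed sequence each $T$ becomes an $F$, each $F$ becomes a $T$, the halving step $B$ becomes a doubling step $D$, and the single use of Lemma~\ref{lemma.21.12} is replaced by its inverse, Lemma~\ref{lemma.12.21}. This produces a candidate action sequence from $R111$ to $R121$; it then only remains to check that every step of this reversed walk is legal in $M1$.

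Next I would verify admissibility. In $M1$ the only actions carrying a side condition are $B$ (the argument must be even) and $F$ (the argument must satisfy $x \bmod 3 \equiv 1$), while $T$ and $D$ are unconditional. But every node met by the reversed walk is one of the integers already met by the forward walk of Lemma~\ref{lemma.21.11.odd1}, where the relevant congruences were verified; in particular, the hypothesis that $A=R1$ is odd, i.e.\ that $R$ is even, is precisely what made the halving step legal going forward, and it is the same fact that makes the corresponding doubling step legal here. Since $M1$ has no non-integer nodes, this finishes the verification, and with it the lemma.

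I do not expect any genuine obstacle: the whole content is the reversibility bookkeeping that the paper has announced it will suppress for ``all inverse lemmas'', and in one line it is simply the remark that the graph of $M1$ is undirected, so $R121\Rightarrow R111$ (Lemma~\ref{lemma.21.11.odd1}) immediately yields $R111\Rightarrow R121$. If a fully explicit action sequence is preferred instead, one just writes out the reversed chain and confirms, digit by digit in the base-$3$ notation of this section, that each of the $T$, $B$, $F$, $D$ steps behaves as claimed --- a routine check involving no case distinction beyond the parity of $R$ that was already used in Lemma~\ref{lemma.21.11.odd1}.
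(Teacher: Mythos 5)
Your proposal matches the paper's treatment exactly: the paper gives no explicit proof of this lemma, having announced after Lemma~\ref{lemma.11.10} that all ``inverse lemmas'' follow by reversing the corresponding forward chain, since every edge in $M1$ is two-way. Your additional check that each reversed step is admissible (an $F$ applied after a $T$ always sees $x \bmod 3 \equiv 1$, a $B$ applied after a $D$ always sees an even argument, and $T$, $D$ are unconditional) is correct and is precisely the bookkeeping the paper suppresses.
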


Here, we firstly discuss the relationship between $A22$ and $A11$ and then come
back to discuss the circumstance when $A$ is an odd and $A=R2$.

\begin{lemma}\label{lemma.22.11.even}
When $A$ is even, there exists at least one
action sequence to transform $A22$ to $A11$, i.e., $A22$$\Rightarrow$ $A11$.
\end{lemma}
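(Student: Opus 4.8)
The plan is to route $A22$ down to $A11$ by a single ``descending'' step followed by cheap ternary-digit manipulations, exploiting the hypothesis ``$A$ is even'' twice: once to justify halving $A22$ directly, and once to re-double afterwards. The action sequence I would use is $'BFFDTT'$; note that, unlike the even case of $A21\Rightarrow A11$, no initial $'T'$ is needed, since $A22$ is already even whenever $A$ is.

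Step by step: first I would apply $'B'$ to $A22$. Writing $(A)_3=(k)_{10}$, the decimal value of $A22$ is $9k+8$, which is even exactly because $k$ is, and $(9k+8)/2 = 9(k/2)+4$; hence $B(A22)=A_{1}11$ with no hidden carry (indeed $A_{1}=k/2$ here). Next, since $A_{1}11$ has value $9(k/2)+4\equiv 1\pmod{3}$, two $'F'$ actions delete its two trailing $1$'s, giving $FF(A_{1}11)=A_{1}$. Then, using evenness again, $A_{1}=A/2$, so one $'D'$ recovers $A$, i.e.\ $D(A_{1})=A$. Finally $T(A)=A1$ and $T(A1)=A11$, so $TT(A)=A11$; composing these gives $A22\Rightarrow A11$. (Alternatively one could descend through Lemma \ref{lemma.02.11} and Theorem \ref{th.5cluster}, mirroring the even case of $A21\Rightarrow A11$, but that detour is unnecessary.)

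The only genuine obstacle is the bookkeeping in the $'B'$-step: I must check both that $'B'$ is an admissible action on $A22$ --- which holds by parity, precisely because the value of $A$ is even --- and that halving lands exactly on the ternary string $A_{1}11$ rather than on a node with an inconvenient remainder. Both facts are equivalent to the evenness of $A$, which is why the lemma is restricted to even $A$; for odd $A$ the value $9k+8$ is odd, $'B'$ cannot be applied to $A22$, and a longer detour is forced --- exactly the reason the argument then splits into the separate cases $A=R0$, $A=R1$ and $A=R2$. Everything after the $'B'$-step is routine digit arithmetic together with the $5$-cluster attaching move $'TT'$.
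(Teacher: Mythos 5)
Your proof is correct and coincides with the paper's own argument: the paper likewise applies $B$ to $A22$ to obtain $A_{1}11$ and then the sequence $'FFDTT'$, i.e.\ exactly your composite $'BFFDTT'$, with the same use of the evenness of $A$ to justify the initial halving. No substantive differences to report.
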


\begin{proof}[The proof of $A22 \Rightarrow A11$.]

The calculations are listed as follows,

$B(A22) = A_{1}11$

$'FFDTT'(A_{1}11)=A11$

\end{proof}

\begin{lemma}\label{lemma.11.22.even}
When $A$ is even, there exists at least one
sequence to transform $A11$ to $A22$, i.e., $A11$$\Rightarrow$ $A22$.
\end{lemma}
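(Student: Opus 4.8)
The statement is the mirror image of Lemma~\ref{lemma.22.11.even}, so the plan is to obtain $A11\Rightarrow A22$ by traversing the path of that lemma in reverse, which is legitimate precisely because every edge of $M1$ is two-way. In Lemma~\ref{lemma.22.11.even} the path for $A22\Rightarrow A11$ is, step by step, $A22\Rightarrow A_{1}11$ (B), then $\Rightarrow A_{1}1$ (F), $\Rightarrow A_{1}$ (F), $\Rightarrow A$ (D), $\Rightarrow A1$ (T), $\Rightarrow A11$ (T). Reversing the order of the six actions and replacing each by its inverse (T by F, F by T, B by D, D by B) yields the action sequence 'FFBTTD', which threads the nodes as $A11\Rightarrow A1$ (F), $\Rightarrow A$ (F), $\Rightarrow A_{1}$ (B), $\Rightarrow A_{1}1$ (T), $\Rightarrow A_{1}11$ (T), $\Rightarrow A22$ (D). So the first task is simply to write this sequence down and declare it the witness.

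The second task is to check the three steps that carry a parity or divisibility side condition. The two 'F' steps are legal because $A11$ has decimal value $9k+4\equiv 1\pmod 3$ and $A1$ has value $3k+1\equiv 1\pmod 3$, so $F(x)=(x-1)/3$ is defined at each. The one 'B' step $A\Rightarrow A_{1}$ is exactly where the hypothesis ``$A$ is even'' is spent: since $(A)_3=k$ is even we have $A_{1}=\lfloor A/2\rfloor=A/2$, and this is also the reason $A22=9k+8$ is itself even, which is what made the original action $B(A22)$ legal in Lemma~\ref{lemma.22.11.even}. The two 'T' steps and the closing 'D' step are unconstrained in $M1$, so the whole reversed path lives in the graph of $M1$.

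What remains is only the base-3 symbol bookkeeping with carries --- confirming, for instance, that $D$ applied to $A_{1}11$ lands on $A22$ and not on a neighbouring node of the same $9$-cluster --- and this is the identical carry computation already performed, in the forward direction, inside the proof of Lemma~\ref{lemma.22.11.even}. I therefore expect no genuine obstacle: as with the other inverse lemmas in this section the argument can be compressed to the single remark ``invert Lemma~\ref{lemma.22.11.even}, using that $M1$ is undirected'', the explicit witness 'FFBTTD' being recorded above for the reader who wants it spelled out.
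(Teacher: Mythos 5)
Your proposal is correct and matches the paper's approach: the paper proves all such ``inverse'' lemmas exactly by appealing to Lemma~\ref{lemma.22.11.even} together with the fact that every edge of $M1$ is two-way, and omits the details. Your explicit witness 'FFBTTD' with the node thread $A11\Rightarrow A1\Rightarrow A\Rightarrow A_{1}\Rightarrow A_{1}1\Rightarrow A_{1}11\Rightarrow A22$, and the check that the single 'B' step is where the evenness of $A$ is used, is exactly the omitted computation.
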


\begin{lemma}\label{lemma.22.11.odd0}
When $A$ is odd, and $A=R0$, there
exists at least one action sequence to transform $A22$ to $A11$, i.e., $R022$
$\Rightarrow$  $R011$.
\end{lemma}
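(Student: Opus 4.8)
The plan is to produce an explicit action sequence realizing $R022 \Rightarrow R011$ in $M1$, modeled on the proof of Lemma~\ref{lemma.RD0.21.11} (the sibling case $R021 \Rightarrow R011$). The first thing to record is what the hypothesis gives us: since $A=R0$ we have $(A)_{10}=3(R)_{10}$, so $A$ odd forces $R$ odd. This parity of the prefix is exactly what decides which halvings ($B$) are legal, so it must be tracked throughout. Note also that $R022\equiv 2\pmod 3$ and is odd, so at the very first step only the actions $T$ and $D$ are available; hence, as in Lemma~\ref{lemma.RD0.21.11}, the sequence must open with $D$.

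The intended shape is: (1) apply $D$ to $R022$, unfolding the tail $022$ into $121$ and doubling the prefix to $R^D$, which has even value $2(R)_{10}$ so that a later $B$ will restore it to $R$; (2) append $T$ actions — their number chosen so the running value stays even — and apply $B$, landing on a numeral beginning $R02\cdots$; (3) use Lemma~\ref{lemma.02.11} (the $02\Rightarrow 11$ move on the trailing two digits) to rewrite a $02$ block as $11$; (4) strip the surplus $1$'s with $F$ actions down to the short numeral $R02$; (5) finish with the tail of Lemma~\ref{lemma.RD0.21.11}, namely $R02\Rightarrow R11\Rightarrow R01\Rightarrow R011$ via Lemma~\ref{lemma.02.11}, Lemma~\ref{lemma.11.01}, and one $T$. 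Whenever a doubling produces a carry in base $3$, the $(R^{DD}+1)\cdots$ bookkeeping already used in Lemma~\ref{lemma.10.11} handles it, and when a prefix of odd value must be halved one uses the $R_1,R_2,R_3$ notation for the iterated floors.

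The main obstacle is that the sequence of Lemma~\ref{lemma.RD0.21.11} does not transcribe verbatim: with the tail $22$, applying $D$ then two $T$'s then $B$ lands on the numeral $R02202$, and now Lemma~\ref{lemma.02.11} followed by $F$'s only removes two $1$'s and returns the process to $R022$ — a cycle rather than progress, because the digit in the $3^2$ slot is a $2$ that blocks the $F$-stripping. So the correct sequence for this case has to be genuinely rechosen, most plausibly by inserting an extra $D$ (accepting the base-$3$ carry it creates, which makes the prefix $R^{DD}+1$ and lets us halve twice while keeping everything integral) or an extra pair of $T$'s, and possibly routing once through another $5$-cluster lemma before the closing descent. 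The delicate part throughout is the simultaneous carry/parity accounting: one must verify that every $B$ acts on an even integer (using $R$ odd together with the evenness of the doubled prefix) and that the $F$-stripping terminates on the digit pattern $\cdots 02$, so that the final $T$ delivers exactly $R011$ and not $R010$ or $R012$. As a sanity check one should also confirm the sequence is consistent with, and ideally reuses, the already-proved even case Lemma~\ref{lemma.22.11.even}.
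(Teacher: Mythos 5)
There is a genuine gap: you never actually exhibit an action sequence. Your setup is sound --- you correctly note that $A=R0$ odd forces $R$ odd, that $R022=27R+8$ is odd and $\equiv 2 \pmod 3$ so the sequence must open with $D$, and you even correctly diagnose that the verbatim transcription of the $R021\Rightarrow R011$ sequence cycles back to $R022$ (the detour through $R02202$ and Lemma \ref{lemma.02.11} undoes itself). But having identified the obstruction, you stop at speculation (``most plausibly by inserting an extra $D$ \ldots or an extra pair of $T$'s''), so the lemma is not proved.

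The idea you are missing is the one the paper uses immediately after the opening $D$: the intermediate numeral $D(R022)=R^{D}121$ ends in the digits $21$, so the already-established $3$-cluster move $X21\Rightarrow X12$ (Lemma \ref{lemma.21.12}) applies with $X=R^{D}1$. This converts $R^{D}121=54R+16$ into $R^{D}112=54R+14$, i.e.\ it lowers the value by $2$ onto an even number; a single $B$ then yields $27R+7=R021$, one $F$ gives $R02$, and the closing descent you already sketched (Lemma \ref{lemma.02.11} to $R11$, Lemma \ref{lemma.11.01} to $R01$, then $T$ to $R011$) finishes. In other words, the fix is not more $D$'s or $T$'s but an appeal to the $3$-cluster connection lemmas to rearrange the trailing $21$ into $12$; without that (or some equivalent completed construction) the proposal does not establish the statement.
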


\begin{proof}[The proof of $R022 \Rightarrow R011$. ]
The calculations are listed as follows,

$R022 \Rightarrow R^{D}121$    (D)

$\Rightarrow R^{D}112$    (Lemma \ref{lemma.21.12})

$\Rightarrow R021$      (B)

$\Rightarrow R02$       (F)

$\Rightarrow R11$        (Lemma \ref{lemma.02.11})

$\Rightarrow R01$        (Lemma \ref{lemma.11.01})

$\Rightarrow R011$    (T)

\end{proof}

\begin{lemma}\label{lemma.11.22.odd0}
When $A$ is odd, and $A=R0$, there exists at
least one action sequence to transform $A11$ to $A22$, i.e., $R011 $$\Rightarrow$
$R022$.
\end{lemma}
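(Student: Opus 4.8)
The plan is to prove Lemma~\ref{lemma.11.22.odd0} exactly as the paper handles every other inverse lemma: by reversing, step by step, the action chain already exhibited in the proof of Lemma~\ref{lemma.22.11.odd0}. Since in model $M1$ every edge is two-way ($T$ is inverse to $F$, $B$ is inverse to $D$, and each cluster lemma has been paired with its stated inverse), any arrow $X \Rightarrow Y$ used to prove $R022 \Rightarrow R011$ can be traversed from $Y$ back to $X$; composing these reversed arrows in the opposite order gives a path $R011 \Rightarrow R022$, which is the assertion to be proved.

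Concretely, I would read the chain of Lemma~\ref{lemma.22.11.odd0} from bottom to top and apply the following seven steps starting from $R011$: first $F$ (the inverse of the terminal $T$), giving $R011 \Rightarrow R01$; then Lemma~\ref{lemma.01.11} (the inverse of Lemma~\ref{lemma.11.01}), giving $R01 \Rightarrow R11$; then Lemma~\ref{lemma.11.02} (the inverse of Lemma~\ref{lemma.02.11}), giving $R11 \Rightarrow R02$; then $T$ (the inverse of $F$), giving $R02 \Rightarrow R021$; then $D$ (the inverse of $B$), giving $R021 \Rightarrow R^{D}112$; then Lemma~\ref{lemma.12.21} (the inverse of Lemma~\ref{lemma.21.12}), giving $R^{D}112 \Rightarrow R^{D}121$; and finally $B$ (the inverse of the initial $D$), giving $R^{D}121 \Rightarrow R022$. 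Concatenating the action sequences attached to these steps yields an explicit word that realizes $R011 \Rightarrow R022$.

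The only delicate point — and the main, though mild, obstacle — is the base-$3$ bookkeeping for the two arithmetic steps that run a halving or doubling backwards. One must check that the hypotheses ``$A$ odd, $A = R0$'' are precisely what keeps $D(R021)=R^{D}112$ and $B(R^{D}121)=R022$ valid, i.e., that the carries produced when doubling or halving in base $3$ agree with those already recorded in the forward proof. But this requires no fresh case analysis: each forward step of Lemma~\ref{lemma.22.11.odd0} was verified under exactly these hypotheses, and an edge of $M1$ is symmetric by definition, so reversing a valid step is automatically valid. Following the convention the paper already adopts for its inverse lemmas (for instance Lemma~\ref{lemma.11.10} and Lemma~\ref{lemma.11.02}), the fully expanded calculation may then be omitted.
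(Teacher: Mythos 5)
Your proposal is correct and matches the paper's own treatment: the paper explicitly omits proofs of all inverse lemmas, justifying them exactly as you do by the two-way nature of every edge in $M1$ (each of $T$, $B$, $F$, $D$ having an inverse action), so that reversing the chain exhibited in Lemma~\ref{lemma.22.11.odd0} yields $R011 \Rightarrow R022$. Your explicit step-by-step reversal of that chain, including the correct pairing of each forward step with its inverse lemma or inverse action, is a faithful (indeed slightly more detailed) rendering of the argument the paper intends.
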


\begin{lemma}\label{lemma.122.111}
When $A$ is odd, and $A=R1$, there exists at
least one action sequence to transform $A22$ to $A11$, i.e., $R122 $$\Rightarrow$ $ R111$.
\end{lemma}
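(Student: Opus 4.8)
The plan is to reduce $R122 \Rightarrow R111$ to the transfer $A21 \Rightarrow A11$ for $A=R1$, which is already available as Lemma~\ref{lemma.21.11.odd1}. Thus it suffices to establish the single additional step $R122 \Rightarrow R121$, i.e.\ $A22 \Rightarrow A21$, and then chain the two.

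To produce $R122 \Rightarrow R121$: at the node $R122$ the value is odd (the hypotheses force $R$ even, so the decimal value $27(R)_3+17$ is odd) and it is $\equiv 2 \pmod 3$, hence the only legal actions there are $T$ and $D$; I apply $D$. A short base-$3$ computation, keeping track of the carry that doubling the trailing block ``$122$'' pushes into the prefix, gives $D(R122)=(R^D+1)021$. Now reparse this string as $\bigl((R^D+1)0\bigr)21$, apply the $3$-cluster transfer $A21 \Rightarrow A12$ of Lemma~\ref{lemma.21.12} (which holds for an arbitrary prefix $A$) to obtain $(R^D+1)012$, and then apply $B$. Since $(R^D+1)012$ has even value, $B$ is legal and $B\bigl((R^D+1)012\bigr)=R121$.

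Chaining everything: $R122 \Rightarrow (R^D+1)021$ by $D$, then $\Rightarrow (R^D+1)012$ by Lemma~\ref{lemma.21.12}, then $\Rightarrow R121$ by $B$, and finally $\Rightarrow R111$ by Lemma~\ref{lemma.21.11.odd1}. One could instead expand that last step inline, exactly as was done for the analogous $A$-even and $A=R0$ cases (Lemmas~\ref{lemma.22.11.even} and~\ref{lemma.22.11.odd0}), but citing Lemma~\ref{lemma.21.11.odd1} is cleaner. The only delicate point is routine bookkeeping: checking that every intermediate node is a positive integer and that each action — in particular the $F$'s and $B$'s hidden inside the sequence of Lemma~\ref{lemma.21.12}, and the explicit $B$ above — is applied only at a node lying in the residue class it requires. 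All of this is inherited from the cited lemmas together with the two explicit $D$ and $B$ steps, so the only genuine (and minor) obstacle is getting the base-$3$ carry in $D(R122)$ right.
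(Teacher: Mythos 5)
Your proof is correct. Every step checks out: with $(R)_3=(k)_{10}$ and $k$ even (forced by $A=R1$ being odd), $R122$ has value $27k+17$, $D$ gives $54k+34=(R^D+1)021$; Lemma~\ref{lemma.21.12} legitimately applies with the prefix $(R^D+1)0$ (its sequence passes only through positive integers for an arbitrary positive-integer prefix, so no hidden parity or residue condition fails), producing $(R^D+1)012=54k+32$; this is even, so $B$ lands on $27k+16=R121$; and Lemma~\ref{lemma.21.11.odd1} is stated under exactly your hypothesis and proved earlier, so there is no circularity. The route is, however, genuinely different from the paper's own proof of this lemma, which never visits $R121$: it instead drops to the halved prefix, passing through $R_{1}011$, stripping to $R_{1}01$, doubling to $R02$, and rebuilding $R11\Rightarrow R111$ via Lemma~\ref{lemma.02.11}. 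What you have done is transplant the paper's technique for the neighboring case $A=R0$ (Lemma~\ref{lemma.22.11.odd0}, whose proof is precisely the pattern: $D$, then Lemma~\ref{lemma.21.12} on a shifted prefix, then $B$) to the case $A=R1$, closing with the already-established transfer $A21\Rightarrow A11$. This buys uniformity across the residue cases and makes every move explicitly checkable, whereas the paper's first step $R122\Rightarrow R_{1}011$ is an unannotated compound move; the only cost is an added dependence on Lemma~\ref{lemma.21.11.odd1}, which the paper's version avoids.
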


\begin{proof}[The proof of $R122 \Rightarrow R111$.]
The calculations are listed as follows,

$R122 \Rightarrow R_{1}011$

$\Rightarrow R_{1}01$

$\Rightarrow R02$

$\Rightarrow R11$

$\Rightarrow R111$

\end{proof}

\begin{lemma}\label{lemma.111.122}
 When $A$ is odd, and $A=R1$, there exists at least
one action sequence to transform $A11$ to $A22$, i.e., $R111 \Rightarrow R122$.
\end{lemma}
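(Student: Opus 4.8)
The plan is to observe that this statement is exactly the inverse of Lemma~\ref{lemma.122.111}, and to produce the required path $R111 \Rightarrow R122$ by reversing the path $R122 \Rightarrow R111$ already built there, using the fact --- invoked repeatedly in this section --- that every edge of the Collatz graph of $M1$ is two-way. Concretely, I would take the chain displayed in the proof of Lemma~\ref{lemma.122.111},
\[ R122 \Rightarrow R_{1}011 \Rightarrow R_{1}01 \Rightarrow R02 \Rightarrow R11 \Rightarrow R111, \]
read it from right to left, replace each elementary action by its inverse ($T$ and $F$ exchange, $B$ and $D$ exchange), and replace each appeal to an auxiliary lemma by its inverse counterpart (for instance Lemma~\ref{lemma.02.11} is replaced by Lemma~\ref{lemma.11.02}). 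This produces an explicit $M1$-action sequence carrying $R111$ back to $R122$.

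Each of the four actions of $M1$ is inverted by another action ($T$ by $F$, $F$ by $T$, $B$ by $D$, $D$ by $B$), and these inverse actions are legitimate at exactly the nodes that occur along the forward path; hence the reversed sequence is automatically a valid walk in the Collatz graph of $M1$. The one point I would take care to check is the $3$-based bookkeeping around the halving steps: the forward proof of Lemma~\ref{lemma.122.111} assumes $A = R1$ with $A$ odd, which forces the value of $R$ to be even, so the steps producing the $R_{1}$-prefixed nodes are honest $M1$-halvings rather than mere floor operations. Since the reversed path visits the very same nodes under the very same hypothesis, ``$R$ even'' is still available and the reversed doublings land exactly on those nodes; no additional case distinction arises.

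I do not expect a real obstacle here: the reversal of a valid $M1$-path is again a valid $M1$-path, so the lemma follows at once from Lemma~\ref{lemma.122.111}, with the parity remark above as the only subtlety. If a self-contained write-up were wanted, one could instead spell out the reversed action sequence and verify $R111 \Rightarrow R122$ line by line in the notation of this section; those computations are the mirror image of the ones in Lemma~\ref{lemma.122.111} and are entirely routine.
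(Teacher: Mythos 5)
Your proposal matches the paper exactly: the paper gives no separate proof of this lemma, relying on the blanket convention (stated after Lemma~\ref{lemma.11.10}) that every inverse lemma follows by reversing the forward action sequence since all edges in $M1$ are two-way, which is precisely the reversal of Lemma~\ref{lemma.122.111} that you carry out. Your additional check that the parity hypothesis ($R$ even) persists along the reversed path is a reasonable extra precaution but does not change the approach.
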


\begin{lemma}\label{lemma.R02.R022}
There exists at least one action sequence to
transform $ R0\overbrace {2 \cdots 2}^n$ to $R0\overbrace {2 \cdots 2}^{n +
1}$, i.e., $R0\overbrace {2 \cdots 2}^n \Rightarrow R0\overbrace {2 \cdots 2}^{n +
1}$.
\end{lemma}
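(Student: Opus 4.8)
The plan is to prove this by strong induction on $n$, reading everything through the base-$3$ bookkeeping of the previous lemmas. If $r=(R)_3$, the node $R0\overbrace{2\cdots 2}^{n}$ has value $P_n:=3^{n}(3r+1)-1$, so $P_{n+1}=3P_n+2$, equivalently $P_{n+1}+1=3(P_n+1)$. The move that does the genuine lifting is a single $T$: since $T$ appends the digit $1$ in base $3$, it sends $R0\overbrace{2\cdots 2}^{n}$ to $R0\overbrace{2\cdots 2}^{n}1$, whose value is $3P_n+1=P_{n+1}-1$. Writing $A:=R0\overbrace{2\cdots 2}^{n-1}$ (the run being empty when $n=1$, so $A=R0$), this node is precisely $A21$ and the target $R0\overbrace{2\cdots 2}^{n+1}$ is precisely $A22$. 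So the lemma reduces to proving $A21\Rightarrow A22$ in $M1$, and for that it suffices to exhibit $A21\Rightarrow A11$ and $A11\Rightarrow A22$.

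Next I would split on $A$, mirroring Lemmas \ref{lemma.even.11.21}--\ref{lemma.111.122}. If $A$ is even, use $A21\Rightarrow A11$ (the unlabelled lemma just before Lemma \ref{lemma.even.11.21}) and $A11\Rightarrow A22$ (Lemma \ref{lemma.11.22.even}). When $n=1$ the block $A=R0$ ends in $0$; if it is in addition odd, use Lemmas \ref{lemma.RD0.21.11} and \ref{lemma.11.22.odd0}. These two remarks already dispose of the base case $n=1$ and of every instance with $A$ even, so the only configuration left is $n\ge 2$ with $A=R0\overbrace{2\cdots 2}^{n-1}$ odd --- in which case $A$ ends in the digit $2$, exactly the ``$A$ odd, ending in $2$'' pattern not covered head-on by the earlier cluster lemmas.

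For that remaining case I would peel the final $2$ off $A$, writing $A=A'2$ with $A'=R0\overbrace{2\cdots 2}^{n-2}$, and invoke the induction hypothesis at the smaller indices: by the $(n-2)$- and $(n-1)$-cases (and two-wayness of all $M1$ edges) one can move freely between $A'$, $A$ and $A2$, and composing these motions with $F$/$T$ normalizations (each $F$ erases a trailing $1$, each $T$ appends one) reduces $A21\Rightarrow A11$ and $A11\Rightarrow A22$ to a strictly shorter trailing run of $2$'s or to an even / ends-in-$0$ leading block already handled; the outer induction on $n$ then delivers the statement for all $n$.

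The crux --- and the main obstacle --- is exactly this last case. One cannot realize the identity $P_{n+1}=3P_n+2$ by the generic successor word $TDDFFBBT$ of the succession theorem: its two $F$-steps require a division by $3$ that fails on the residue class of $P_n$ (the computation leaves $\mathbb{Z}$, hence leaves $M1$), and every fixed word of bounded length runs into the same parity/residue wall. So the parity-and-last-digit bookkeeping is genuinely unavoidable, and the delicate point is to organize the induction so that the ``$A$ odd, ending in $2$'' configuration is always pushed down to one that Lemmas \ref{lemma.RD0.21.11}--\ref{lemma.111.122} absorb, while checking throughout that every intermediate node stays a positive integer.
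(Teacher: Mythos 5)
Your reduction is set up correctly: with $P_n=3^n(3r+1)-1$ one indeed has $P_{n+1}=3P_n+2$, the node $T\bigl(R0\overbrace{2\cdots2}^{n}\bigr)$ is $A21$ and the target is $A22$ for $A=R0\overbrace{2\cdots2}^{n-1}$, and the base case $n=1$ together with every instance where $A$ is even is legitimately absorbed by the cluster lemmas that precede this one in the paper (Lemmas \ref{lemma.RD0.21.11}, \ref{lemma.11.22.odd0}, \ref{lemma.11.22.even} and the unlabelled even case of $A21\Rightarrow A11$). But the case you yourself single out as the crux --- $n\ge 2$ with $A$ odd, so $A$ ends in the digit $2$ --- is not actually closed. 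Your induction hypothesis gives free movement among $R0\overbrace{2\cdots2}^{m}$ for $m\le n$ only; the target $A22=R0\overbrace{2\cdots2}^{n+1}$ carries a strictly longer trailing run, and appending that extra $2$ is \emph{exactly} the statement at index $n$ you are trying to prove. Concretely, the natural route $A11\xrightarrow{FF}A\Rightarrow A2\Rightarrow A22$ invokes the index-$n$ statement in its last step, and the paper's own lemmas handling ``$A$ odd ending in $2$'' (Lemma \ref{lemma.222.211}, Theorem \ref{th.1cluster.conn}) cannot be borrowed, since they are proved later \emph{from} the 2-appending theorem, which rests on this very lemma. No concrete non-circular path to $A22$ is exhibited, so the argument has a genuine gap.

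The paper's proof needs no induction and no case split on $A$: it is the single fixed word $T$, $D$, then the $12\Rightarrow 21$ move of Lemma \ref{lemma.12.21}, then $B$. The point you miss is what doubling does to the digit pattern: $D$ sends $R0\overbrace{2\cdots2}^{n}1$ (value $3P_n+1$) to $R^{D}1\overbrace{2\cdots2}^{n-1}12$, i.e.\ to a node congruent to $5 \bmod 9$, so the 3-cluster word $DDDFFBBTBT$ (which realizes $+2$ on that residue class with every intermediate value an integer) applies and yields $2P_{n+1}$, and a final $B$ lands on $P_{n+1}$. In particular your closing claim that ``every fixed word of bounded length runs into the same parity/residue wall'' is false: the generic successor word $TDDFFBBT$ does fail on $P_n$, but the length-13 word above succeeds for all $n$ and all $R$. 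To repair your write-up, either adopt this explicit word or restructure the induction so that the new trailing $2$ is produced by an actual $M1$ path (e.g.\ by arriving at $2P_{n+1}$ and halving) rather than by an appeal to the statement being proved.
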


\begin{proof}
The calculations are listed as follows,

$R0$$\overbrace {2 \cdots 2}^n$$\Rightarrow$  $R0$$\overbrace {2 \cdots 2}^n$$1$

$\Rightarrow$ $R$$^{D}$1$\overbrace {2 \cdots 2}^{n - 1}$$12$

$\Rightarrow$  $R$$^{D}$1$\overbrace {2 \cdots 2}^{n - 1}$$21$

$\Rightarrow$ $R0$$\overbrace {2 \cdots 2}^{n - 1}$22$\Rightarrow$  $R0$$\overbrace {2 \cdots
2}^{n + 1}$
\end{proof}

\begin{lemma}\label{lemma.R022.R02}
There exists at least one action sequence to
transform $R0$$\overbrace {2 \cdots 2}^{n + 1}$ to $R0$$\overbrace {2 \cdots
2}^n$, i.e., $R0$$\overbrace {2 \cdots 2}^{n + 1}$$\Rightarrow$  $R0$$\overbrace {2 \cdots
2}^n$.
\end{lemma}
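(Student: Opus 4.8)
The plan is to obtain the desired action sequence as the formal reverse of the one built in the proof of Lemma~\ref{lemma.R02.R022}. The point is that in $M1$ every action is invertible \emph{as an action}: $F$ undoes $T$ and $B$ undoes $D$, so every directed edge of the Collatz graph of $M1$ is two-way. Hence any $A\Rightarrow C$ walk can be run backwards to produce a $C\Rightarrow A$ walk, by listing its intermediate nodes in the reverse order and, along each edge, applying the inverse of the action that was used in the forward direction.

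Concretely, I would take the chain of nodes from Lemma~\ref{lemma.R02.R022}, namely $R0\overbrace{2\cdots2}^{n}$, then $R0\overbrace{2\cdots2}^{n}1$, then $R^{D}1\overbrace{2\cdots2}^{n-1}12$, then $R^{D}1\overbrace{2\cdots2}^{n-1}21$, and finally $R0\overbrace{2\cdots2}^{n-1}22=R0\overbrace{2\cdots2}^{n+1}$, and read it from right to left. This walk starts at $R0\overbrace{2\cdots2}^{n+1}$ and ends at $R0\overbrace{2\cdots2}^{n}$; on the atomic steps one replaces $T$ by $F$, $D$ by $B$, and vice versa, while on the composite step $R^{D}1\overbrace{2\cdots2}^{n-1}12\to R^{D}1\overbrace{2\cdots2}^{n-1}21$, which is of the form $A12\Rightarrow A21$ handled by Lemma~\ref{lemma.12.21}, one uses its inverse, Lemma~\ref{lemma.21.12}. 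Concatenating these reversed pieces gives one explicit action sequence realizing $R0\overbrace{2\cdots2}^{n+1}\Rightarrow R0\overbrace{2\cdots2}^{n}$.

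The only thing to verify is that the side conditions attached to $B$ (even argument) and $F$ (argument $\equiv1\bmod 3$) hold wherever these reversed actions are invoked, but this is automatic: a forward step using $T$ lands on $3u+1\equiv1\bmod3$, so $F$ is legal there; a forward step using $D$ lands on the even number $2u$, so $B$ is legal there; and a forward step using $B$ or $F$ reverses to $D$ or $T$, which carry no side condition. Thus every edge of the path of Lemma~\ref{lemma.R02.R022} is genuinely traversable in reverse, and the concatenation above is a valid $M1$ walk. I do not anticipate any real obstacle here: the statement is the formal inverse of the preceding lemma, and, exactly as with the other inverse lemmas in this section, it rests only on the standing fact that all edges of $M1$ are two-way.
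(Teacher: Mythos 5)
Your proposal is correct and matches the paper's approach: the paper omits proofs of all such ``inverse'' lemmas, appealing exactly to the fact that every edge of $M1$ is two-way, so the path of Lemma~\ref{lemma.R02.R022} can be traversed backwards. Your explicit reversal of that path (using Lemma~\ref{lemma.21.12} for the composite step) and your check that the side conditions for $B$ and $F$ are automatically satisfied on reversed $D$- and $T$-edges simply fill in the details the paper leaves implicit.
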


\begin{lemma}\label{lemma.R12.R122}
There exists at least one action sequence to
transform $R1$$\overbrace {2 \cdots 2}^n$ to $R1$$\overbrace {2 \cdots 2}^{n +
1}$, i.e., $R1$$\overbrace {2 \cdots 2}^n$ $\Rightarrow$  $R1$$\overbrace {2 \cdots 2}^{n +
1}$.
\end{lemma}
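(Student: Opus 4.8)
\emph{Proof strategy for Lemma~\ref{lemma.R12.R122}.}
The plan is to follow the proof of Lemma~\ref{lemma.R02.R022} almost line for line, with one genuine change forced by the leading digit being $1$ rather than $0$. The arithmetic fact driving everything is this: if $v$ is the value of $R1\overbrace{2\cdots2}^n$ and $w$ the value of $R1\overbrace{2\cdots2}^{n+1}$, then $v=R\cdot 3^{n+1}+2\cdot 3^{n}-1$, $w=R\cdot 3^{n+2}+2\cdot 3^{n+1}-1$, and hence $w=3v+2$. So it suffices to realise the map $x\mapsto 3x+2$ on this particular integer inside $M1$, and the natural four-step route is: append a $1$ (action $T$), double (action $D$), do a $3$-cluster swap on the last two digits, then halve (action $B$).

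Concretely I would carry out the steps in this order. (i) $T$: $R1\overbrace{2\cdots2}^n \Rightarrow R1\overbrace{2\cdots2}^n1$, which just appends the digit $1$ (value $3v+1$). (ii) $D$: double $R1\overbrace{2\cdots2}^n1$; here, unlike in Lemma~\ref{lemma.R02.R022}, the term $4\cdot 3^{n+1}$ carries out of position $n+1$ into the $R$-block, so — using that $3^{n+1}-4=\overbrace{2\cdots2}^{n-1}12$ in base~$3$ for $n\ge 1$ — the result is $(R^{D}+1)\,0\,\overbrace{2\cdots2}^{n-1}\,1\,2$ (value $6v+2$). (iii) This number has the form $X12$ with $X=(R^{D}+1)\,0\,\overbrace{2\cdots2}^{n-1}$, so Lemma~\ref{lemma.12.21} (equivalently the $3$-cluster connection theorem, Theorem~\ref{th.3cluster}) gives $X12\Rightarrow X21$, i.e. $(R^{D}+1)\,0\,\overbrace{2\cdots2}^{n-1}\,2\,1=(R^{D}+1)\,0\,\overbrace{2\cdots2}^n\,1$ (value $6v+4$). (iv) $B$: this last number has the even value $6v+4$, and halving it yields $3v+2=w$, whose base-$3$ expansion is exactly $R1\overbrace{2\cdots2}^{n+1}$. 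Chaining (i)--(iv) establishes $R1\overbrace{2\cdots2}^n\Rightarrow R1\overbrace{2\cdots2}^{n+1}$.

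The only non-routine point — the "hard part", such as it is — is the carry bookkeeping in step (ii): one must see that doubling pushes a carry all the way into the $R$-block, turning it into $R^{D}+1$ and inserting a $0$ digit, whereas in the companion Lemma~\ref{lemma.R02.R022} the analogous doubling stays within the low digits and leaves a leading $1$. I would verify this by simply evaluating both sides, $2(3v+1)=6v+2=(2R+1)\cdot 3^{n+2}+(3^{n+1}-4)$, and reading off the base-$3$ digits of $3^{n+1}-4$. It is also worth checking the boundary case $n=1$ separately (where $\overbrace{2\cdots2}^{n-1}$ is empty and step (ii) lands on $(R^{D}+1)012$); it goes through unchanged. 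Everything else is immediate: $T$, $D$, $B$ are legal $M1$ actions at each point (the number before the final $B$ is even, as computed), and Lemma~\ref{lemma.12.21} applies to an arbitrary prefix $X$.
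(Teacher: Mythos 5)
Your proof is correct, and it takes a genuinely different route from the paper's. The paper splits on the parity of $R$: for odd $R$ it writes $R=P+1$ and uses the short chain $B,T,D$ (values $v\mapsto v/2\mapsto 3v/2+1\mapsto 3v+2$), while for even $R$ it uses $T,B,T$ together with the $\pm 2$ swap of Lemmas \ref{lemma.12.21}/\ref{lemma.21.12}, a doubling, and a final $F$. You instead observe that appending a trailing $2$ is always the map $v\mapsto 3v+2$ and realize it uniformly, with no parity split, as $T$, $D$, the $+2$ swap of Lemma \ref{lemma.12.21}, then $B$ (values $v\mapsto 3v+1\mapsto 6v+2\mapsto 6v+4\mapsto 3v+2$); the one thing that must be checked is that $6v+2$ lies in the residue class $5 \bmod 9$ where Lemma \ref{lemma.12.21} applies, equivalently $v\equiv 2\pmod 3$, which holds precisely because the last base-$3$ digit is $2$, and your carry computation $6v+2=(R^{D}+1)\cdot 3^{n+2}+(3^{n+1}-4)$ is right, including the boundary case $n=1$. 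What your version buys: it is shorter, it needs only that the final digit is $2$, and so it proves Lemma \ref{lemma.R02.R022}, this lemma, and in fact the whole $2$-appending statement of Theorem \ref{theorem.2.appending} (for $n\ge 1$) in one stroke. It is also the more reliable of the two arguments: in the paper's even-$R$ branch as printed, the $21\Rightarrow 12$ swap is applied before the doubling, which lands on the value $9v+1$ rather than the needed $9v+7$ (the intended order is to double first and then apply the $12\Rightarrow 21$ swap), whereas every step of your chain checks out arithmetically.
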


\begin{proof}
The calculations are listed as follows,

When $R$ is even,

$R1$$\overbrace {2 \cdots 2}^n$$\Rightarrow$  $R1$$\overbrace {2 \cdots 2}^n$$1$

$\Rightarrow$ $R$$_{1}$0$\overbrace {2 \cdots 2}^n$$2$

$\Rightarrow$  $R$$_{1}$0$\overbrace {2 \cdots 2}^n$$21$

$\Rightarrow$  $R$$_{1}$0$\overbrace {2 \cdots 2}^n$$12$

$\Rightarrow$ $R1$$\overbrace {2 \cdots 2}^n$$21$

$\Rightarrow$ $R1$$\overbrace {2 \cdots 2}^{n + 1}$

When $R$ is odd, let $R = (P+1)$,

$(P+1)1$$\overbrace {2 \cdots 2}^n$

$\Rightarrow$ $P$$_{1}$2$\overbrace {1 \cdots 1}^n$

$\Rightarrow$  $P$$_{1}$2$\overbrace {1 \cdots 1}^{n + 1}$

$\Rightarrow$ $(P+1)1$$\overbrace {2 \cdots 2}^{n + 1}$

Therefore, for any given $R$, $R1$$\overbrace {2 \cdots 2}^n$ $\Rightarrow$  R1$\overbrace {2
\cdots 2}^{n + 1}$

\end{proof}

\begin{lemma}\label{lemma.R122.R12}
There exists at least one action sequence to
transform $R1$$\overbrace {2 \cdots 2}^{n + 1}$ to $R1$$\overbrace {2 \cdots
2}^n$, i.e., $R1$$\overbrace {2 \cdots 2}^{n + 1}$$\Rightarrow$  $R1$$\overbrace {2 \cdots
2}^n$.
\end{lemma}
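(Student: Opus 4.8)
The plan is to obtain this lemma for free from Lemma~\ref{lemma.R12.R122}, using the standing fact that every edge of the Collatz graph of $M1$ is two-way. Recall from the discussion of actions that $T$ is inverse to $F$ and $B$ is inverse to $D$. Hence, if an action sequence $s = a_1 a_2 \cdots a_m$ realizes a path $x_0 \Rightarrow x_m$ in $M1$ through the positive integers $x_0, x_1, \dots, x_m$, then the reversed-and-inverted sequence $s^{-1} = a_m^{-1} a_{m-1}^{-1} \cdots a_1^{-1}$ realizes the path $x_m \Rightarrow x_0$ in $M1$. I would record this once: if a $T$-step sends $x_{i-1}$ to $x_i = 3x_{i-1}+1$, its reversal is the $F$-step sending $x_i$ back to $x_{i-1}$, which is legal because $x_i \bmod 3 \equiv 1$; if a $D$-step sends $x_{i-1}$ to $2x_{i-1}$, its reversal is a $B$-step whose argument is even, hence legal; and the reversal of an $F$-step or a $B$-step is a $T$-step or a $D$-step, which carry no parity constraint in $M1$. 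All the values visited by $s^{-1}$ coincide with the positive integers $x_i$ already visited by $s$, so no non-integer or non-positive value is ever produced.

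Granting this, the proof is immediate. Lemma~\ref{lemma.R12.R122} provides, for each parity of $R$, an explicit $M1$ action sequence $s_R$ with $s_R\bigl(R1\overbrace{2\cdots2}^{n}\bigr) = R1\overbrace{2\cdots2}^{n+1}$. Then $s_R^{-1}$ is an $M1$ action sequence sending $R1\overbrace{2\cdots2}^{n+1}$ back to $R1\overbrace{2\cdots2}^{n}$, which is exactly the claim; the case $R$ even and the case $R$ odd (written $R = P+1$ in Lemma~\ref{lemma.R12.R122}) are both covered, and each internal appeal to Lemma~\ref{lemma.21.12} inside $s_R$ becomes an appeal to Lemma~\ref{lemma.12.21} inside $s_R^{-1}$.

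The only point that needs a word of care --- and therefore the main, though quite mild, obstacle --- is making sure the parity side conditions on $B$ and $F$ are never violated when the sequence is executed backwards; this is precisely the one-line verification in the first paragraph, and it is the same observation that justified all of the paper's earlier inverse lemmas. Accordingly, I would simply state that the result follows by reversing the action sequence of Lemma~\ref{lemma.R12.R122}, exactly as for the previous inverse lemmas, and leave the routine bookkeeping to the reader.
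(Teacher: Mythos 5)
Your proposal is correct and matches the paper's treatment: the paper omits proofs of all inverse lemmas, appealing exactly to the fact that every edge of $M1$ is two-way, so Lemma~\ref{lemma.R122.R12} is obtained by reversing and inverting the action sequence of Lemma~\ref{lemma.R12.R122}. Your explicit verification that the parity and mod-$3$ side conditions on $B$ and $F$ are automatically satisfied along the reversed path (since it revisits the same positive integers) is the right justification and is, if anything, more careful than the paper's blanket remark.
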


\begin{theorem}[The 2 appending theorem]\label{theorem.2.appending}
There exists at least one action sequence to
transform $R$$\overbrace {2 \cdots 2}^n$ to $R$$\overbrace {2 \cdots 2}^{n +
1}$, i.e., $R$$\overbrace {2 \cdots 2}^n$ $\Rightarrow$  R$\overbrace {2 \cdots 2}^{n +
1}$.
\end{theorem}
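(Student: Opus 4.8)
The plan is to prove the statement by examining the last symbol of $R$ in the $3$-based notation and reducing to the two block-extension lemmas already established. Read the $3$-based representation of $R$ as a digit string; then $R\overbrace{2\cdots2}^n$ is that string followed by $n$ copies of the symbol $2$, and appending one more $2$ gives $R\overbrace{2\cdots2}^{n+1}$, so only the lowest-order non-$2$ digit of $R$ is relevant. If the last symbol of $R$ is $0$, write $R=S0$; then $R\overbrace{2\cdots2}^n=S0\overbrace{2\cdots2}^n$, and Lemma \ref{lemma.R02.R022} (with its ``$R$'' read as our $S$) yields $S0\overbrace{2\cdots2}^n\Rightarrow S0\overbrace{2\cdots2}^{n+1}$, which is the desired transformation. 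If the last symbol of $R$ is $1$, write $R=S1$ and apply Lemma \ref{lemma.R12.R122} in the same way. If the last symbol of $R$ is $2$, it merges into the block: writing $R=S2$ one has $R\overbrace{2\cdots2}^n=S\overbrace{2\cdots2}^{n+1}$ and $R\overbrace{2\cdots2}^{n+1}=S\overbrace{2\cdots2}^{n+2}$, so the claim for $(R,n)$ coincides with the claim for the strictly shorter string $(S,n+1)$. Inducting on the number of $3$-based digits of $R$, this reduction terminates either at a prefix ending in $0$ or $1$, finished by Lemma \ref{lemma.R02.R022} or \ref{lemma.R12.R122}, or at a string consisting of $2$'s only.

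The all-$2$'s case is the one the two lemmas do not reach, and I expect it to be the main obstacle: there $R\overbrace{2\cdots2}^n$ is simply $\overbrace{2\cdots2}^m$ for some $m\ge 1$, i.e.\ a number $3^m-1$ with no $0$ or $1$ digit to anchor a block-extension lemma. For this I would give a direct three-step computation inside $M1$: $\overbrace{2\cdots2}^m$ has value $3^m-1$, which is even, so action $B$ sends it to $\frac{3^m-1}{2}=\overbrace{1\cdots1}^m$; action $T$ sends $\overbrace{1\cdots1}^m$ to $3\cdot\frac{3^m-1}{2}+1=\frac{3^{m+1}-1}{2}=\overbrace{1\cdots1}^{m+1}$; and action $D$ doubles this to $3^{m+1}-1=\overbrace{2\cdots2}^{m+1}$. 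Thus the sequence $B$, $T$, $D$ realizes $\overbrace{2\cdots2}^m\Rightarrow\overbrace{2\cdots2}^{m+1}$, and all three steps are legal in $M1$ since the only constraint, the evenness of $3^m-1$ needed for $B$, is satisfied. The reverse direction (matching Lemmas \ref{lemma.R022.R02} and \ref{lemma.R122.R12}) follows by reversing the sequences, since all edges of $M1$ are two-way.

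I would also record why no one-line proof is available in $M1$: appending a $2$ means computing $3v+2$ from $v$, i.e.\ composing action $T$ with an $x+1$ step, and although the succession theorem supplies an $x+1$ action sequence in $M2$, that sequence leaves the positive integers at an intermediate stage when fed $3v+1$ (its input must be divisible by $3$), so it is invalid in $M1$; this is precisely why the dedicated block-manipulation lemmas and the case analysis above are needed. The remaining work is routine: checking the boundary of the induction (small $n$, and the single-digit cases of $R$, where Lemma \ref{lemma.R12.R122} is used with an empty prefix), and confirming that the even/odd split already performed inside Lemma \ref{lemma.R12.R122} composes harmlessly with the reduction.
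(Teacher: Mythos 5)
Your proof is correct and follows the same basic route as the paper: the paper simply declares the theorem ``obvious'' from Lemmas \ref{lemma.R02.R022}--\ref{lemma.R12.R122}, i.e.\ from the two cases where the prefix ends in the digit $0$ or the digit $1$, which is exactly your reduction by the trailing digit of $R$. What you add, and what the paper's one-line justification silently skips, is the case where $R$ ends in $2$ --- in particular the all-$2$'s string, where neither lemma applies because there is no $0$ or $1$ digit to anchor them (the paper only patches this later, inside Lemma \ref{lemma.222.211}, by prepending a leading $0$). Your direct computation for that case checks out: $\overbrace{2\cdots2}^{m}=3^{m}-1$ is even, $B$ gives $\overbrace{1\cdots1}^{m}=(3^{m}-1)/2$, $T$ gives $(3^{m+1}-1)/2=\overbrace{1\cdots1}^{m+1}$, and $D$ gives $3^{m+1}-1=\overbrace{2\cdots2}^{m+1}$, with every step legal in $M1$; the reversed sequence is likewise legal since the intermediate values are even or congruent to $1$ modulo $3$ as required. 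So your argument is a strictly more complete version of the paper's: same decomposition into the two block-extension lemmas, plus an explicit and correct treatment of the corner case the paper leaves implicit.
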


According to Lemma \ref{lemma.R02.R022} to \ref{lemma.R12.R122}, this theorem is obvious.

\begin{theorem}[The 2 backspace theorem]\label{theorem.2.backspace}
There exists at least one action sequence to
transform $R$$\overbrace {2 \cdots 2}^{n+1}$ to $R$$\overbrace {2 \cdots 2}^{n}$,
i.e., $R$$\overbrace {2 \cdots 2}^{n+1}$ $\Rightarrow$  $R$$\overbrace {2 \cdots 2}^{n}$.
\end{theorem}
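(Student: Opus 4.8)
The plan is to obtain Theorem~\ref{theorem.2.backspace} directly from the two-way structure of $M1$ together with the already-established 2 appending theorem (Theorem~\ref{theorem.2.appending}). The key observation is that every edge of the Collatz graph of $M1$ is reversible: the four actions split into the inverse pairs $(T,F)$ and $(B,D)$, and whenever one member of a pair is applicable at a node, its inverse is applicable at the resulting node. Concretely, if $T(u)=v$ then $v=3u+1\equiv 1\pmod 3$, so $F$ is legal at $v$ and $F(v)=u$; if $D(u)=v$ then $v=2u$ is even, so $B$ is legal at $v$ and $B(v)=u$; the cases $F(u)=v$ and $B(u)=v$ are symmetric. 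Hence reversing any legal action sequence in $M1$ — replacing each action by its inverse action and reading right to left — yields another legal action sequence, with source and target interchanged.

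First I would invoke Theorem~\ref{theorem.2.appending} to produce, for the given $R$ and $n$, an explicit action sequence $\sigma$ with $\sigma(R\overbrace{2\cdots2}^{n})=R\overbrace{2\cdots2}^{n+1}$. By the reversibility observation, the reversed sequence is again legal in $M1$ and it sends $R\overbrace{2\cdots2}^{n+1}$ to $R\overbrace{2\cdots2}^{n}$, which is exactly the claim. Equivalently — and this is how I would phrase the final write-up — the statement is the disjunction of Lemmas~\ref{lemma.R022.R02} and \ref{lemma.R122.R12} (the inverses of Lemmas~\ref{lemma.R02.R022} and \ref{lemma.R12.R122}) after the same last-nonzero-digit case split used in the appending theorem: if $R$ ends in $0$ apply Lemma~\ref{lemma.R022.R02}, if $R$ ends in $1$ apply Lemma~\ref{lemma.R122.R12}, and if $R$ ends in $2$ absorb the trailing digit into the block of twos and reduce to one of the previous two cases.

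I do not expect any genuine obstacle here; the theorem is a bookkeeping consequence of facts already in hand. The only point needing a line of care is that the parity/divisibility side conditions for $B$ and $F$ hold at every step of the reversed path, but this is automatic from the inverse-pair structure: a legal $T$ forward forces a legal $F$ backward, a legal $F$ forward forces a legal $T$ backward, a legal $B$ forward forces a legal $D$ backward, and a legal $D$ forward forces a legal $B$ backward. A fully explicit alternative would be to mirror, line by line, the computations in the proofs of Lemmas~\ref{lemma.R02.R022} and \ref{lemma.R12.R122}, but that is redundant given the two-way nature of the graph.
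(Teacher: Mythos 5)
Your proposal is correct and matches the paper's own (one-line) justification: the paper also derives the backspace theorem directly from Theorem~\ref{theorem.2.appending} via the two-way nature of the edges in $M1$. You simply make explicit the inverse-pair argument ($T\leftrightarrow F$, $B\leftrightarrow D$) that the paper leaves implicit, which is a welcome but not substantively different elaboration.
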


According to Theorem \ref{theorem.2.appending}, this theorem is obvious.

\begin{lemma}\label{lemma.222.211}
When $A$ is odd, and $A=R2$, there exists at
least one action sequence to transform $A22$ to $A11$, i.e., $R222$ $\Rightarrow$  $R211$.
\end{lemma}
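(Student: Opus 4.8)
The plan is to strip the suffix $222$ down to the short number $A=R2$, and then to rebuild $A11=R211$ with two forward $T$-steps.

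First I would record the shape of $R222$ in the base-$3$ notation: it ends in a maximal block of $m\ge 3$ consecutive $2$'s, and that block is preceded by a digit $d\in\{0,1\}$. This is exactly where the hypothesis ``$A$ is odd'' enters: $A=R2$ odd forces the value of $R$ to be odd, and an odd base-$3$ value has an odd number of digits equal to $1$, so $R$ contains a $1$ and in particular is not a string of $2$'s; hence the digit $d$ preceding the trailing $2$-block of $R222$ exists and lies in $\{0,1\}$ --- precisely the configuration covered by Lemmas \ref{lemma.R02.R022}--\ref{lemma.R12.R122}. Consequently the $2$ backspace theorem (Theorem \ref{theorem.2.backspace}) applies to $R222$, and applying it twice gives
\[
R222\ \Rightarrow\ R22\ \Rightarrow\ R2 = A .
\]

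Next I would invoke the $5$-cluster attaching theorem. For $A=R2$ one has $T(A)=A1=R21$ and $T(A1)=A11=R211$ --- both are carry-free appends of the digit $1$ and are routine to check --- so $A\Rightarrow A1\Rightarrow A11$, that is $R2\Rightarrow R211$. Chaining the two parts, $R222=A22\Rightarrow A11=R211$, which is the claim.

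I do not expect a real obstacle: the single point that needs care is verifying that the $2$ backspace theorem is legitimately applicable to $R222$ (equivalently, that $R$ is not a block of $2$'s), and that is exactly the content of the parity hypothesis, used nowhere else. A more laborious route --- closer in spirit to the proofs of Lemmas \ref{lemma.22.11.odd0} and \ref{lemma.122.111}, doubling $R222$ into $(R^{D}+1)221$, reshuffling inside the $3$-cluster of $(R^{D}+1)2$ via Lemma \ref{lemma.21.12}, and then halving and stripping $1$'s --- is also conceivable, but it involves substantially more carry bookkeeping and is not as self-contained, so the backspace argument is the one I would present.
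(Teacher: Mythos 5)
Your proof is correct and follows essentially the same route as the paper's: strip the trailing $2$'s via the $2$ backspace theorem to reach $R2=A$, then append two $1$'s with $TT$ to get $R211=A11$. Your justification that the backspace theorem applies --- deducing from the parity of $A$ that $R$ must contain a digit $0$ or $1$ before its trailing block of $2$'s --- is in fact more careful than the paper's one-line remark to the same effect.
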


\begin{proof}[$R222$ $\Rightarrow$  $R211$.]

$'0R'$ should include at least one $'1'$ or $'0'$ before a series of $'2'$.

Therefore, $R222 \Rightarrow R2 \Rightarrow R211$, or say, $A22 \Rightarrow A \Rightarrow A11$.

\end{proof}

From Lemma \ref{lemma.22.11.even} to \ref{lemma.222.211}, we can obtain a conclusion as Theorem \ref{th.1cluster.conn}.

\begin{theorem}\label{th.1cluster.conn}
For any given positive integer $A$, there exists at least an action sequence to transform $A22$ to $A11$, i.e., $A22 \Rightarrow A11$.
\end{theorem}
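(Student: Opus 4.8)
The plan is to establish Theorem \ref{th.1cluster.conn} by a finite case split on $A$ written in the $3$-based numeral system, reducing each case to a lemma already proved above, so that no new computation is needed. First I would separate the two parities. If $A$ is even, then Lemma \ref{lemma.22.11.even} already exhibits an action sequence realizing $A22 \Rightarrow A11$, and that case is closed immediately. The remaining work is the odd case, which I would split further according to the last ternary symbol of $A$.

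When $A$ is odd its last symbol is $0$, $1$, or $2$, giving three mutually exclusive subcases. If $A=R0$, Lemma \ref{lemma.22.11.odd0} gives $R022 \Rightarrow R011$, i.e.\ $A22 \Rightarrow A11$. If $A=R1$, Lemma \ref{lemma.122.111} gives $R122 \Rightarrow R111$, i.e.\ $A22 \Rightarrow A11$. If $A=R2$, Lemma \ref{lemma.222.211} gives $R222 \Rightarrow R211$; since $A=R2$ we have $A22 = R222$ and $A11 = R211$, so again $A22 \Rightarrow A11$. These three subcases, together with the even case, are mutually exclusive and exhaust $\mathbb{N}^{+}$, so $A22 \Rightarrow A11$ holds for every positive integer $A$, and stringing together the relevant lemma sequences in each case produces the desired action sequence.

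I do not expect a genuine obstacle here, since this is an assembly result; the points that need care are bookkeeping. One must confirm the case split is genuinely exhaustive — immediate, because an integer is even or odd and an odd integer's ternary expansion ends in exactly one of $0,1,2$ — and that the notational conventions (the meaning of the prefix $R$, the carry notations $A_1$ and $A^{D}$, and the alignment of the appended digits) are the ones fixed when each invoked lemma was proved, so that the symbol $A11$ in the statement coincides with the right-hand side the lemma produces in each case (in particular $R211 = A11$ when $A=R2$). The only subtlety carried in implicitly is inside Lemma \ref{lemma.222.211}: its argument relies on the $3$-based expansion of $A$ being finite — equivalently, $A$ is not an unbounded string of $2$'s — so that the $2$-backspace theorem (Theorem \ref{theorem.2.backspace}) peels off the trailing run of $2$'s in finitely many steps down to a digit $0$ or $1$, legitimizing $A22 \Rightarrow A \Rightarrow A11$. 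Granting that, the theorem follows by simply combining the four cases.
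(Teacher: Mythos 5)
Your proposal is correct and takes essentially the same route as the paper, which likewise obtains the theorem by the four-way case split on the parity of $A$ and its last ternary digit, assembling Lemmas \ref{lemma.22.11.even}, \ref{lemma.22.11.odd0}, \ref{lemma.122.111} and \ref{lemma.222.211}. Your added remarks on notation alignment and the finiteness of the trailing run of $2$'s in Lemma \ref{lemma.222.211} are sensible bookkeeping but do not change the argument.
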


This proposition follows from Lemma \ref{lemma.22.11.even} to \ref{lemma.222.211}.

\begin{lemma}\label{lemma.1cluster.conn.2}
For any given positive integer $A$, there exists at least an action sequence to transform $A11$ to $A22$.
\end{lemma}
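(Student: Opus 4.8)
The plan is to obtain this lemma directly from Theorem~\ref{th.1cluster.conn} together with the defining feature of $M1$ that every edge of its Collatz graph is two-way. First I would pin down precisely why the edges are reversible: action $T$ sends $x$ to $3x+1$, a number of the form $3k+1$, so the inverse action $F$ is always legal at the target node; action $D$ sends $x$ to $2x$, which is even, so the inverse action $B$ is always legal at the target node; and symmetrically applying $D$ after $B$, or $T$ after $F$, returns a node to itself. Hence for every directed edge $X \to Y$ arising from one of the four actions there is a directed edge $Y \to X$, so reachability in $M1$ is a symmetric relation: $X \Rightarrow Y$ if and only if $Y \Rightarrow X$. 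This is the same principle already invoked for every inverse lemma in this section (e.g. Lemmas~\ref{lemma.11.10}, \ref{lemma.11.02}, \ref{lemma.11.01}).

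Granting this, the argument is short. By Theorem~\ref{th.1cluster.conn}, for every positive integer $A$ there is an action sequence $s = a_1 a_2 \cdots a_m$ with each $a_i \in \{T,B,D,F\}$ realizing $A22 \Rightarrow A11$. The reversed-and-inverted sequence $s^{-1} = a_m^{-1} \cdots a_2^{-1} a_1^{-1}$, where $T^{-1}=F$, $F^{-1}=T$, $B^{-1}=D$, $D^{-1}=B$, then realizes $A11 \Rightarrow A22$; by the reversibility noted above each step of $s^{-1}$ is a legal action at the node it is applied to, so $s^{-1}$ is a genuine action sequence of $M1$. Since the proof of Theorem~\ref{th.1cluster.conn} proceeds by the cases $A$ even, and $A$ odd with $A = R0$, $A = R1$, or $A = R2$ (the last reducing via Theorem~\ref{theorem.2.backspace} and Lemma~\ref{lemma.222.211}), I would simply invert the sequence furnished in each case; the same case split and the same reversal go through verbatim.

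I do not anticipate a real obstacle here: the only thing that must be verified is the reversibility of the individual actions, which is immediate from the ``$3k+1$'' and ``even'' conditions above, and the exhaustiveness of the case analysis over all positive integers $A$, which is already built into Theorem~\ref{th.1cluster.conn} (the string $0R$ must contain a $0$ or a $1$ before any trailing block of $2$'s). The one thing worth stating explicitly in the writeup, rather than leaving to the reader as the earlier inverse lemmas do, is that the case split is inherited unchanged, so that inverting case by case indeed produces an $A11 \Rightarrow A22$ sequence for \emph{every} positive integer $A$.
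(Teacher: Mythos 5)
Your proposal is correct and follows essentially the same route as the paper, which omits the proof of this lemma under its blanket convention that every edge of $M1$ is two-way, so each ``inverse lemma'' is obtained by reversing and inverting the action sequence of its forward counterpart (here, the sequences furnished case by case in the proof of Theorem~\ref{th.1cluster.conn}). Your explicit verification that $T$ always lands on a node where $F$ is legal and $D$ always lands on a node where $B$ is legal (and conversely) is exactly the justification the paper leaves implicit.
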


Now, we can come back to discuss $A21$ when $A$ is odd.

\begin{lemma}\label{lemma.221.211}
When $A$ is odd, and $A=R2$, there exists at
least one action sequence to transform $A21$ to $A11$, i.e., $R221 \Rightarrow R211$.
\end{lemma}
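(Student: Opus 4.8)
\textbf{Proof proposal for Lemma \ref{lemma.221.211} ($R221 \Rightarrow R211$ when $A$ is odd, $A=R2$).}

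The plan is to mimic the pattern already used for the cases $A=R0$ (Lemma \ref{lemma.RD0.21.11}) and $A=R1$ (Lemma \ref{lemma.21.11.odd1}): starting from $A21 = R221$, first apply a $T$ to reach $R2211$, then peel off a $B$ to move the problem into a lower stratum of the 3-based representation, run one of the already-proven $02$-cluster transfers (Lemma \ref{lemma.02.11}), delete the trailing symbols with $F$ actions, and finally climb back up with $D$ and $T$ actions. Concretely I would compute $T(R221)=R2211$, then $B(R2211)$; since $A=R2$ is odd, $R$ is even, so the halving acts cleanly on the prefix and yields something of the form $R_{1}\,1102$ (an explicit digit-string in base $3$), after which Lemma \ref{lemma.02.11} converts the trailing $02$-block to a $11$-block, the $F$ actions erase it, and a short $D\cdots T T$ tail reconstructs $R211$. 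The key structural fact I would exploit is exactly the one flagged in Lemma \ref{lemma.222.211}'s proof: the string $0R$ must contain a $0$ or $1$ before any terminal run of $2$'s, so the carries generated by the $B$ steps are controlled and do not propagate indefinitely.

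An alternative, and probably cleaner, route is to go through $A22$ rather than directly: I would first show $R221 \Rightarrow R222$ (or $\Rightarrow R2\cdots$) using the 2-appending machinery of Theorem \ref{theorem.2.appending} together with the 3-cluster connection Theorem \ref{th.3cluster} — note $A21$ and $A22$ differ only in whether the terminal block is $21$ or $22$, and the lemmas on $R1\overbrace{2\cdots2}^n$ and $R0\overbrace{2\cdots2}^n$ already move freely among such strings. Then I would invoke the just-established Theorem \ref{th.1cluster.conn} (which gives $A22 \Rightarrow A11$ for \emph{all} positive $A$, in particular the odd $A=R2$ case via Lemma \ref{lemma.222.211}) to conclude $A22 \Rightarrow A11 = R211$. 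Chaining these gives $R221 \Rightarrow R222 \Rightarrow \cdots \Rightarrow R211$. This second approach has the advantage of reducing the new lemma almost entirely to results proved immediately above it.

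The main obstacle in either route is bookkeeping the carries in the base-$3$ arithmetic: when a $B$ (halving) action is applied, whether the low digits behave as a clean right-shift or generate a borrow depends on the parity of the relevant prefix, and one must verify that the hypothesis ``$A$ is odd, $A=R2$'' forces $R$ to be even so that the halvings in the first approach line up — and, in the second approach, that the appending lemmas genuinely connect $R221$ to $R222$ rather than to some string with an incompatible prefix. I expect that, as with the parallel odd-case lemmas already in the paper, the verification is a finite check of a fixed-length action sequence once the correct sequence is written down, so I would present the proof simply as the explicit chain of actions with the parity remark inserted at the step where $B$ is applied, exactly in the style of Lemmas \ref{lemma.RD0.21.11} and \ref{lemma.21.11.odd1}.
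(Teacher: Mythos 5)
Your second route is essentially the paper's own proof. The paper argues $R221 \Rightarrow R22$ by an $F$ action (erasing the trailing $1$), then $R22 \Rightarrow R2$ by the 2-backspace machinery of Theorem \ref{theorem.2.backspace} (legitimate because $0R$ contains a $0$ or $1$ before its terminal run of $2$'s), and finally $R2 \Rightarrow R211$ by $TT$. Your detour through $A22$ (that is, $R221 \Rightarrow R222 \Rightarrow R211$ via Theorem \ref{th.1cluster.conn}) uses exactly the same ingredients, since the relevant case of Theorem \ref{th.1cluster.conn} is Lemma \ref{lemma.222.211}, whose proof is itself $R222 \Rightarrow R2 \Rightarrow R211$. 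The one imprecision is that you invoke Theorem \ref{th.3cluster} to pass from $A21$ to $A22$, but $A22$ is not in the 3-cluster $\{A12, A20, A21\}$; the correct first step is simply $F(R221) = R22$ followed by 2-appending (or, as in the paper, skip the appending and go straight down with the backspace theorem).

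Your first route, however, contains a concrete error. From ``$A = R2$ is odd'' you conclude that $R$ is even, but $A = 3R + 2$ is odd exactly when $R$ is \emph{odd}; the ``prefix is forced even'' phenomenon occurs for $A = R1 = 3R+1$ (as used in Lemma \ref{lemma.21.11.odd1}), not for $A = R2$. Moreover $T(R221) = R2211 = 27A + 22$ is odd when $A$ is odd, so the single $B$ you propose immediately afterwards is not an admissible action. If you write up the proof, use the second route (or the paper's three-step chain directly) and discard the first.
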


\begin{proof}[Proof of $R221 \Rightarrow R211$.]
The calculations are listed as follows,

$R221 \Rightarrow R22$

$\Rightarrow R2$

$\Rightarrow R211$

\end{proof}

\begin{theorem}\label{th.3cluster.conn}
For any given positive integer $A$, there exists at least one action sequence to transform $A21$ to $A11$.
\end{theorem}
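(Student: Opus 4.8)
The plan is to prove Theorem~\ref{th.3cluster.conn} by a purely case-based argument, combining reachability results that have already been established. Every positive integer $A$ falls into exactly one of four cases: either $A$ is even, or $A$ is odd and its last digit in the $3$-based numeral system is $0$, $1$, or $2$. In each of these four cases a transformation $A21 \Rightarrow A11$ has already been produced explicitly, with the target integer $A11$ written as $R011$, $R111$, or $R211$ according to the last digit of $A$. Hence the entire content of the proof is to observe that the four cases are exhaustive and to assemble the corresponding action sequences.

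Concretely, I would proceed as follows. If $A$ is even, invoke the lemma stated just after Theorem~\ref{th.3cluster} (the one asserting $A21 \Rightarrow A11$ for even $A$), and we are done. If $A$ is odd, inspect its last base-$3$ digit. When $A = R0$, Lemma~\ref{lemma.RD0.21.11} gives $R021 \Rightarrow R011$, i.e.\ $A21 \Rightarrow A11$; here $R$ is forced to be odd, which is precisely the hypothesis under which that lemma was proved, since the value of $R0$ equals $3$ times the value of $R$ and is odd exactly when $R$ is odd. When $A = R1$, Lemma~\ref{lemma.21.11.odd1} gives $R121 \Rightarrow R111$, and here $R$ is forced to be even, again matching the lemma's setup. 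When $A = R2$, Lemma~\ref{lemma.221.211} gives $R221 \Rightarrow R211$. Taking the sequence from the applicable case yields $A21 \Rightarrow A11$ for the given $A$, which is the assertion of the theorem.

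I do not expect a substantive obstacle, since every transformation sequence that is needed has already been built in Lemmas~\ref{lemma.RD0.21.11}, \ref{lemma.21.11.odd1}, \ref{lemma.221.211} and in the even-$A$ lemma; the real work is bookkeeping. The points to verify are: that the four cases genuinely cover all of $\mathbb{N}^{+}$ (they do, via parity together with the last base-$3$ digit); that in each case the symbol string ``$A11$'' is correctly identified with the target appearing in the cited lemma ($R011$, $R111$, $R211$ respectively); and that the implicit parity constraint on the tail $R$ in each odd subcase (odd $R$ when $A = R0$, even $R$ when $A = R1$, odd $R$ when $A = R2$) is automatically satisfied by the assumption that $A$ is odd, so that the cited lemmas actually apply. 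Once these routine checks are in place, the theorem follows immediately.
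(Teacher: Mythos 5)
Your proposal is correct and is essentially the paper's own argument: the paper leaves Theorem~\ref{th.3cluster.conn} without an explicit proof precisely because it is the assembly of the even-$A$ lemma following Theorem~\ref{th.3cluster} together with Lemmas~\ref{lemma.RD0.21.11}, \ref{lemma.21.11.odd1} and \ref{lemma.221.211}, exactly as you describe. Your additional check that the parity of the tail $R$ in each odd subcase is forced by the parity of $A$ is a correct and worthwhile piece of bookkeeping that the paper leaves implicit.
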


\begin{lemma}\label{lemma.3cluster.conn.2}
For any given positive integer $A$, there exists at least one action sequence to transform $A11$ to $A21$.
\end{lemma}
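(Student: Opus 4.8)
The plan is to observe that Lemma~\ref{lemma.3cluster.conn.2} is nothing but the inverse of Theorem~\ref{th.3cluster.conn}, so it should follow immediately from the fact that all edges in $M1$ are two-way, without any fresh computation. Recall the action structure of $M1$: the four actions are $T$ ($x \mapsto 3x+1$), $B$ ($x \mapsto x/2$, legal when $x$ is even), $D$ ($x \mapsto 2x$), and $F$ ($x \mapsto (x-1)/3$, legal when $x \equiv 1 \bmod 3$), and $T,F$ are mutually inverse while $B,D$ are mutually inverse. Consequently, whenever there is a directed edge $u \to v$ in the Collatz graph of $M1$ there is also the reverse edge $v \to u$, so the reachability relation $\Rightarrow$ on $M1$ is symmetric.

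Concretely, I would argue as follows. If $s = s_1 s_2 \cdots s_m$ is an action sequence witnessing $A21 \Rightarrow A11$, then the reversed-and-inverted sequence $\bar s = \bar s_m \cdots \bar s_2 \bar s_1$, where $\bar T = F$, $\bar F = T$, $\bar B = D$, $\bar D = B$, is again a legal action sequence and it witnesses $A11 \Rightarrow A21$: each step of $\bar s$ traverses in the opposite direction an edge that was already used by $s$, hence is permitted by the definition of $M1$. Since Theorem~\ref{th.3cluster.conn} guarantees that such an $s$ exists for every positive integer $A$, reversing it yields the desired sequence. Equivalently, and to stay aligned with the paper's casewise development, one may split on the parity of $A$ and its last ternary digit exactly as in Theorem~\ref{th.3cluster.conn}: for $A$ even invoke Lemma~\ref{lemma.even.11.21}; for $A$ odd with $A = R0$ invoke Lemma~\ref{lemma.11.21}; for $A$ odd with $A = R1$ invoke Lemma~\ref{lemma.11.21.odd1}; and for $A$ odd with $A = R2$ reverse the short chain $R221 \Rightarrow R22 \Rightarrow R2 \Rightarrow R211$ of Lemma~\ref{lemma.221.211} to obtain $R211 \Rightarrow R2 \Rightarrow R22 \Rightarrow R221$, where the $R2 \Leftrightarrow R22$ step is covered by Theorems~\ref{theorem.2.appending} and~\ref{theorem.2.backspace}.

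There is essentially no serious obstacle: the only point requiring care is the (routine) verification that reversing a legal action sequence stays legal, which is immediate because each action together with its inverse constitutes a pair of oppositely directed edges between the same two nodes. So the ``hard part'' is merely bookkeeping — ensuring that the residue/last-digit case split used for $A21 \Rightarrow A11$ is mirrored precisely, and that the one lemma in that chain without a pre-stated inverse, namely Lemma~\ref{lemma.221.211}, is handled by the explicit reversal indicated above.
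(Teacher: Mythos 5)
Your proposal matches the paper's treatment: the paper omits the proof of this lemma entirely, invoking its standing convention (stated after Lemma~\ref{lemma.11.10}) that all edges in $M1$ are two-way, so every ``inverse lemma'' follows by reversing and inverting the action sequence of its counterpart, here Theorem~\ref{th.3cluster.conn}. Your explicit check that $T/F$ and $B/D$ form oppositely directed edge pairs, and your casewise mirror of the $A21 \Rightarrow A11$ argument, simply spell out what the paper leaves implicit.
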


According to Theorems \ref{th.1cluster.conn}, \ref{th.3cluster.conn} and \ref{th.5cluster}, we can obtain Theorem \ref{th.9cluster}.

\begin{theorem}\label{th.9cluster}
For any given positive integer $A$, the corresponding 9-cluster is internally connected.
\end{theorem}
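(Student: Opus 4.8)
The plan is to assemble Theorem \ref{th.9cluster} entirely from the connectivity facts already established for the 5-cluster, the 3-cluster, and the bridging lemmas linking $A11$, $A21$ and $A22$. Recall that the 9-cluster decomposes as the disjoint union $\{A00,A01,A02,A10,A11\}\cup\{A12,A20,A21\}\cup\{A22\}$, where the first set is precisely the 5-cluster, the second is precisely the 3-cluster, and the third is the single remaining node $A22$. Hence it suffices to show that every node of the 9-cluster is mutually reachable in $M1$ with the distinguished ``hub'' node $A11$; internal connectedness of the whole 9-cluster then follows by concatenating the corresponding action sequences (and, where a reverse direction is needed, their inverses, which exist because every edge of $M1$ is two-way).

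First I would invoke Theorem \ref{th.5cluster}: the nodes $A00,A01,A02,A10,A11$ are internally connected, so in particular each of the four nodes $A00,A01,A02,A10$ is mutually reachable with $A11$. Next, Theorem \ref{th.3cluster} states that $A12,A20,A21$ are internally connected, so it is enough to connect a single representative of the 3-cluster to the hub; I would take $A21$, using Theorem \ref{th.3cluster.conn} for $A21\Rightarrow A11$ and Lemma \ref{lemma.3cluster.conn.2} for $A11\Rightarrow A21$. Composing with the internal connectedness of the 3-cluster, each of $A12,A20,A21$ is mutually reachable with $A11$. Finally, the leftover node $A22$ is handled directly by Theorem \ref{th.1cluster.conn} ($A22\Rightarrow A11$) and Lemma \ref{lemma.1cluster.conn.2} ($A11\Rightarrow A22$).

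Putting these pieces together, for any two nodes $u,v$ of the 9-cluster one obtains a path $u\Rightarrow A11\Rightarrow v$ by concatenation, which establishes the theorem. The only point requiring a little care is that the bridging results be genuinely bidirectional: the statements of Theorems \ref{th.1cluster.conn} and \ref{th.3cluster.conn} assert only $A22\Rightarrow A11$ and $A21\Rightarrow A11$, so the companion Lemmas \ref{lemma.1cluster.conn.2} and \ref{lemma.3cluster.conn.2} must be cited explicitly for the reverse directions, and these (like the inverse lemmas used inside the 5- and 3-cluster theorems) rely on $M1$ having only two-way edges. I do not expect any real obstacle here; the content of the theorem is purely the bookkeeping of gluing previously verified connections through the hub node $A11$, and the proof should be a short paragraph citing Theorems \ref{th.5cluster}, \ref{th.3cluster}, \ref{th.1cluster.conn}, \ref{th.3cluster.conn} and Lemmas \ref{lemma.1cluster.conn.2}, \ref{lemma.3cluster.conn.2}.
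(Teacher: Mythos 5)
Your proposal is correct and follows essentially the same route as the paper, which derives Theorem \ref{th.9cluster} by combining Theorem \ref{th.5cluster}, Theorem \ref{th.3cluster.conn} and Theorem \ref{th.1cluster.conn} (gluing the 5-cluster, the 3-cluster and the node $A22$ through the hub $A11$). Your version is in fact slightly more careful than the paper's one-line justification, since you explicitly cite Theorem \ref{th.3cluster} for the internal connectivity of the 3-cluster and Lemmas \ref{lemma.1cluster.conn.2} and \ref{lemma.3cluster.conn.2} for the reverse directions of the bridging steps.
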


So we can prove that every positive integer can reach $(11)_3$, i.e., $(4)_{10}$; Of course, can also connect to $1$.

\begin{theorem}\label{theorem.A.11}
For any given positive integer $A$, there exists at least
one action sequence to transform $A$ to 11.
\end{theorem}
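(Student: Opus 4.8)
The symbol $11$ in the statement is written in the base-$3$ notation set up above, so it denotes $(4)_{10}$; thus the claim is that every positive integer reaches the node $(11)_3$ in $M1$. The plan is a strong induction on the value of the integer. The engine is Theorem~\ref{th.9cluster}: since every $9$-cluster is internally connected, once we know which $9$-cluster our number $n$ lies in we may move freely to any other member of that cluster, and in particular to the member on which a single $F$ action drops us into a strictly smaller cluster. So the whole proof reduces to identifying that descent move and checking the finitely many small numbers by hand.

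For the base case I would dispose of $n\in\{1,\dots,8\}$ directly by exhibiting short action sequences to $(11)_3$: for instance $D(2)=4$, $B(8)=4$, $F(7)=2$ then $D(2)=4$, $T(1)=4$, $B(6)=3$, and $3\Rightarrow 10\Rightarrow 5\Rightarrow 16\Rightarrow 8\Rightarrow 4$ via $T,B,T,B,B$; the remaining cases $4,5$ are immediate. (These eight numbers are exactly the positive members of the degenerate $k=0$ cluster, on which Theorem~\ref{th.9cluster} is essentially this finite check.)

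For the inductive step, let $n\ge 9$ and write $n=9k+r$ with $0\le r\le 8$, so that $k=\lfloor n/9\rfloor\ge 1$. Let $A$ be the base-$3$ representation of $k$; then $n$ is one of the nine nodes $A00,\dots,A22$, i.e.\ $n$ lies in the $9$-cluster of $A$. By Theorem~\ref{th.9cluster} that $9$-cluster is internally connected, so $n\Rightarrow A01$, and $A01=9k+1$. Since $9k+1\equiv 1\pmod 3$, the action $F$ is legal there and $F(A01)=\frac{(9k+1)-1}{3}=3k=A0$. Because $k\ge 1$ we have $0<3k<9k\le n$, so $3k$ is a positive integer strictly smaller than $n$, and the induction hypothesis gives $3k\Rightarrow(11)_3$. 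Concatenating, $n\Rightarrow A01\Rightarrow 3k\Rightarrow(11)_3$, which closes the induction. For the trailing remark that $A$ also reaches $1$, apply $B$ twice to $(11)_3=4$, namely $4\to 2\to 1$.

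The only genuine content beyond bookkeeping is recognizing that $A01$, i.e.\ $9k+1$, is the cluster member on which $F$ is both legal and strictly decreasing (sending it to $3k$), and carrying out the $k=0$ base case; the strict inequality $3k<n$ is what makes the recursion terminate. In other words, I do not expect a hard step here — the difficulty of the argument has been front-loaded entirely into the cluster-connectivity results culminating in Theorem~\ref{th.9cluster}, and this final theorem is a one-line descent on top of them.
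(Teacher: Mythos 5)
Your proof is correct and takes essentially the same approach as the paper: both use the 9-cluster connectivity of Theorem~\ref{th.9cluster} as the engine and then descend through the ternary representation, the paper stripping two trailing digits per step via $A{*}{*}\Rightarrow A11\Rightarrow A$ while you strip to $A0$ via $A01$ and a single $F$. Your version is a more carefully stated strong induction with an explicit base case, but the substance is identical.
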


\begin{proof}
If the numbers of symbols $A$ is odd, then let $A = 0A$, i.e., add an
additional $0$ to the head of $A$, to make the numbers of symbols is even.

$A** \Rightarrow A11 \Rightarrow A$, here * is arbitrary one of $'0'$, $'1'$ and $'2'$.

By repeating this process, we obtain $A \Rightarrow 11$.

\end{proof}

According to the proofs above, we can obtain more conclusions.

\begin{lemma}\label{lemma.m1.descend}
For any given positive integer $A$, there exists at least
one action sequence $H$ in $M1$ such that $H(A) < A$.
\end{lemma}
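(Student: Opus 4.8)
The plan is to read off Lemma \ref{lemma.m1.descend} directly from Theorem \ref{theorem.A.11}. That theorem already guarantees that every positive integer $A$ satisfies $A \Rightarrow 11$ in $M1$, where $11$ denotes $(11)_3 = (4)_{10}$. So the first step is simply to fix $A$ and invoke Theorem \ref{theorem.A.11} to obtain a legal $M1$ action sequence $G$ with $G(A) = 4$.

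The second step is to append one more action. Since $(4)_{10} \equiv 1 \pmod 3$, the action $F$ (that is, $x \mapsto (x-1)/3$) meets its divisibility precondition at $4$, and $F(4) = 1$. Hence the sequence $G$ followed by $F$ is again a legal $M1$ action sequence $H$, and $H(A) = 1$. For every $A \ge 2$ this is exactly the desired strict descent $H(A) = 1 < A$. The lone value $A = 1$ is the terminal node of the Collatz graph and admits no strictly descending sequence, so it is understood to be excluded from the statement (or treated as vacuous).

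If one prefers a descent that does not collapse every input all the way to $1$, the same construction localizes: for $A > 4$ the bare sequence of Theorem \ref{theorem.A.11} already gives $G(A) = 4 < A$, and only the finitely many remaining values $A \in \{2,3,4\}$ need a direct check --- e.g.\ $B(2) = 1$, $F(4) = 1$, and for $A = 3$ the chain $3 \Rightarrow 10 \Rightarrow 5 \Rightarrow 16 \Rightarrow 8 \Rightarrow 4 \Rightarrow 2$ realized by the actions $T,B,T,B,B,B$ --- each of which is a strict descent.

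I do not expect a genuine obstacle here; the work is entirely bookkeeping. One only has to confirm that each appended action is applied at a value satisfying its parity or divisibility precondition (automatic, since $F$ is used precisely at $4 \equiv 1 \pmod 3$) and that the final inequality is strict for every $A \ge 2$. All the substantive reasoning --- the cluster-connectivity theorems and Theorem \ref{theorem.A.11} --- is already in place, so this lemma is a short corollary rather than a new argument.
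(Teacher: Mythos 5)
Your proposal is correct and matches the paper's own (implicit) treatment: the paper gives no explicit proof, deriving the lemma ``according to the proofs above,'' which is precisely your route of applying Theorem \ref{theorem.A.11} to reach $(11)_3=4$ and then appending $F$ (legal since $4\bmod 3\equiv 1$) to land at $1<A$. Your caveat that $A=1$ must be excluded --- no $M1$ action strictly decreases $1$ within the positive integers --- is a legitimate observation about a defect in the statement that the paper itself overlooks.
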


\begin{lemma}\label{lemma.A.1}
For any given positive integer $A$, there exists at least one action
sequence to transform $A$ to $1$, i.e., $\Omega_1 = \mathbb{N}^{+}$.
\end{lemma}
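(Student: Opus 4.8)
The plan is to derive $\Omega_1 = \mathbb{N}^{+}$ directly from Theorem~\ref{theorem.A.11}. Recall that in the notation of this section ``$11$'' denotes the integer $(11)_3 = (4)_{10}$, and that Theorem~\ref{theorem.A.11} already produces, for every positive integer $A$, an action sequence realizing $A \Rightarrow 11$. Since the action $B$ is unconditional halving, $4$ and $2$ are even, and $B(4) = 2$, $B(2) = 1$, I would simply append the two-letter word $BB$ to that sequence, obtaining $A \;\Rightarrow\; (11)_3 = 4 \;\Rightarrow\; 2 \;\Rightarrow\; 1$. Because $A$ was arbitrary and, by construction, every node of $M1$ is a positive integer, this gives $A \in \Omega_1$ for all $A$, hence $\Omega_1 = \mathbb{N}^{+}$.

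A second, self-contained route uses Lemma~\ref{lemma.m1.descend} instead and argues by strong induction on $A$. The base case $A = 1$ is immediate (the empty action sequence fixes $1$). For $A \ge 2$, Lemma~\ref{lemma.m1.descend} supplies an action sequence $H$ in $M1$ with $H(A) < A$; since $H(A)$ is again a positive integer we have $1 \le H(A) \le A - 1$, so the induction hypothesis gives $H(A) \Rightarrow 1$, and prefixing $H$ yields $A \Rightarrow 1$. This version does not even invoke Theorem~\ref{theorem.A.11}, only the descent lemma, which is presumably what the authors mean by remarking that ``simpler proofs'' exist.

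As for the \emph{hard part}: at this stage essentially all of it has already been paid for upstream, in the $9$-cluster connectivity theorem (Theorem~\ref{th.9cluster}), the $5$- and $3$-cluster results, the $2$-appending and $2$-backspace theorems, and the descent lemma. The only place a genuine difficulty could resurface is the integrality bookkeeping that distinguishes $M1$ from $M2$: one must be sure that none of the intermediate values along the concatenated action sequences ever leaves $\mathbb{N}^{+}$ — but each contributing lemma was stated and checked for $M1$ precisely under that constraint, so nothing new is needed here. A minor point worth re-examining is that the recursive two-digit stripping in the proof of Theorem~\ref{theorem.A.11} terminates correctly, i.e.\ that the final two-digit residue is absorbed by the $k = 0$ instance of Theorem~\ref{th.9cluster} (so that every integer in $\{1,\dots,8\}$ reaches $4$); granting that, the present lemma is the one-line corollary displayed above.
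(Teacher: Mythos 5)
Your first route --- appending $BB$ to the sequence from Theorem~\ref{theorem.A.11} to go $A \Rightarrow (11)_3 = 4 \Rightarrow 2 \Rightarrow 1$ (both halvings legal in $M1$ since $4$ and $2$ are even) --- is exactly the argument the paper intends; the paper states this lemma without proof as an immediate corollary of Theorem~\ref{theorem.A.11}. Your alternative descent-based induction is also valid but is not the route the paper takes.
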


Because $M1$ can be mapped into Fig. \ref{fig.m1}, Theorem \ref{theorem.A.11} and Lemma \ref{lemma.A.1} indicate that there is at
least one path from any given positive integer to $4$ or $1$.

\begin{theorem}[The node loop existence theorem]
For any given positive integer $A$, $A$ is in a loop.
\end{theorem}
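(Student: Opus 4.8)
The plan is to combine the reachability result already in hand with the fact that the Collatz graph of $M1$ is a \emph{symmetric} directed graph. First I would record precisely why every edge of $M1$ is two-way, as claimed when Fig.~\ref{fig.m1} was introduced. By definition $T$ sends any $x$ to $3x+1$, which is congruent to $1$ modulo $3$ and is at least $4$, so $F$ is applicable to $3x+1$ and $F(3x+1)=x$; likewise $D$ sends any $x$ to the even number $2x\ge 2$, so $B$ is applicable to $2x$ and $B(2x)=x$. Conversely, whenever $F$ (respectively $B$) is applicable to a positive integer and returns a positive integer, $T$ (respectively $D$) undoes that step. Hence for every directed edge $u\to v$ of the graph of $M1$ the reverse edge $v\to u$ is also present, and reversing any walk while replacing each action by its inverse produces a legal walk through positive integers only.

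With this in place, the theorem is short. By Lemma~\ref{lemma.A.1} there is an action sequence carrying the given $A$ to $1$, i.e.\ a directed walk $A\Rightarrow 1$ in the graph of $M1$. Reversing it as above gives a directed walk $1\Rightarrow A$, and concatenating the two yields a closed directed walk $A\Rightarrow 1\Rightarrow A$ based at $A$; this is the asserted loop (and it moreover passes through $1$ and, by Theorem~\ref{theorem.A.11}, through $4=(11)_3$). In the single degenerate case $A=1$ one instead uses the classical loop given by $T(1)=4$, $B(4)=2$, $B(2)=1$; for $A\neq 1$ the concatenated walk already has positive length. If one prefers a loop that is a simple cycle, note that $D$ is applicable to every positive integer, so $A$ has the out-neighbour $2A$, and since that edge is two-way we obtain the two-edge loop $D(A)=2A$, $B(2A)=A$, whose two vertices are distinct.

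I do not expect a real obstacle. The only point needing a little care is the verification in the first paragraph that walk-reversal never leaves $\mathbb{N}^{+}$ and never invokes an inapplicable action --- that is, that the digraph of $M1$ is genuinely symmetric --- and this reduces to the elementary parity/divisibility bookkeeping indicated there. Once that is granted, the statement is purely formal: connectedness of the graph is not even needed, only the symmetry of the edge relation together with the existence of at least one outgoing edge at each node.
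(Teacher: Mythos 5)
Your argument is correct, but it takes a noticeably different route from the paper's. The paper proves the theorem by exhibiting an explicit descending action sequence in each parity case --- for odd $A$ it computes $A \Rightarrow A111 \Rightarrow A_{1}202 \Rightarrow A_{1}211 \Rightarrow A_{1}2 \Rightarrow A_1$, and for even $A$ it computes $A \Rightarrow A11 \Rightarrow A_{1}02 \Rightarrow A_{1}11 \Rightarrow A_{1}$ --- obtaining a concrete path to a strictly smaller integer $A_1 < A$, and then closes the loop by invoking $A_1 \Rightarrow 1$ (with the return to $A$ left implicit, resting on the same two-way-edge property you make explicit). Your proof dispenses with the explicit descent entirely: you verify once that the edge relation of $M1$ is symmetric (checking that $T$/$F$ and $D$/$B$ are mutually inverse whenever both endpoints lie in $\mathbb{N}^{+}$), reverse the $A \Rightarrow 1$ walk from Lemma \ref{lemma.A.1}, and concatenate; you also handle $A=1$ separately and observe the even cheaper two-edge loop $A \to 2A \to A$. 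What your approach buys is rigor and economy --- the paper's phrase ``Because $A_1 < A$ and $A_1 \Rightarrow 1$, therefore, $A$ is in a loop'' is a non sequitur unless one supplies exactly the reversal argument you spell out, so your version actually completes the paper's reasoning. What the paper's approach buys is an explicit descent through specific action sequences, which foreshadows the parallel Theorem \ref{theorem.edge.loop} in model $MS$, where not all edges are two-way and the blanket reversal argument is unavailable; your proof, being tied to the symmetry of $M1$, would not transfer there. For the theorem as literally stated, your proof is complete and arguably preferable.
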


\begin{proof}
1) If $A$ is odd, $A \Rightarrow A111$ $\Rightarrow A_{1}202$ $\Rightarrow A_{1}211$ $\Rightarrow A_{1}2$ $\Rightarrow  A_1$. Because $A_1 < A$ and $A_1 \Rightarrow 1$, therefore, $A$ is in a loop.

2) If $A$ is even, $A \Rightarrow A11$ $\Rightarrow A_{1}02$ $\Rightarrow A_{1}11$ $\Rightarrow A_{1}$. Because $A_1 < A$ and $A_1 \Rightarrow 1$, therefore, $A$ is in a loop.

\end{proof}

\section{THE MODEL $MS$}

Compared with model $M1$, all nodes in $M1$ are still in $MS$. However, some edges of model $MS$ become directed. That is, the Collatz graph in Fig. \ref{fig.ms} is a weakly connected graph and all the positive integers are weakly connected in it.

If we can prove that, for any given positive integer $A$, there exists an action sequence $H$ such that $H(A) < A$, then there should exist an path from $A$ to $1$, i.e., $\Omega_s = \mathbb{N}^{+} $, because $1$ is the smallest value.

\begin{lemma}\label{lemma.HA2.A2}
For any given $A$, there exists an action sequence $H$, such that $H(A2) < A2$.
\end{lemma}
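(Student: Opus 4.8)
The plan is to split on the parity of $A$. If $A$ is even, then $A2=3A+2$ is even and a single halving already finishes: $B(A2)=(3A+2)/2<3A+2=A2$, so one may take $H=B$.

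The substance is the case $A$ odd, and here I would iterate the action pair $TB$ and follow its effect in the $3$-based notation. The computation to record is: if the current node has the form $C\overbrace{2\cdots2}^{m}$ with $C$ odd, then, since $C\overbrace{2\cdots2}^{m}=(C+1)3^{m}-1$ is odd and $3x+1$ is consequently even, applying $T$ then $B$ produces $(C+1)3^{m+1}-2$ and then $(C_1+1)3^{m+1}-1=C_1\overbrace{2\cdots2}^{m+1}$, where $C_1=\lfloor C/2\rfloor$ (using $C+1=2(C_1+1)$). Starting from $n_0=A2=A\overbrace{2}^{1}$ and writing $A_j=\lfloor A/2^{j}\rfloor$ (extending the paper's $A_1,A_2,A_3$ notation), repeated use of $TB$ therefore produces $n_j=A_j\overbrace{2\cdots2}^{j+1}$ for as long as $A_j$ stays odd. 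Since $A$ has only finitely many binary digits, there is a smallest index $j^{\ast}\ge 1$ with $A_{j^{\ast}}$ even (in the extreme case $A=2^{j^{\ast}}-1$ one reaches $A_{j^{\ast}}=0$).

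At that index $n_{j^{\ast}}=A_{j^{\ast}}\overbrace{2\cdots2}^{j^{\ast}+1}$ is even (now $C+1$ is odd), and one computes $B(n_{j^{\ast}})=A_{j^{\ast}+1}\overbrace{1\cdots1}^{j^{\ast}+1}$ with $A_{j^{\ast}+1}=\lfloor A/2^{j^{\ast}+1}\rfloor$, using $(3^{m}-1)/2=\overbrace{1\cdots1}^{m}$. This number ends in a block of $1$'s, so it is $\equiv 1\pmod 3$ and $F$ is legal; each $F$ merely deletes one trailing $1$, and iterating $F$ brings us down to $A_{j^{\ast}+1}=\lfloor A/2^{j^{\ast}+1}\rfloor\le\lfloor A/4\rfloor<A<A2$, every intermediate value still ending in $1$'s so that $F$ remains applicable (if $A_{j^{\ast}+1}=0$ we simply stop at $1<3A+2=A2$). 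Hence the sequence $H=(TB)^{j^{\ast}}\,B\,F^{\,j^{\ast}+1}$ of $3j^{\ast}+2$ actions, read left to right, satisfies $H(A2)<A2$.

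The main obstacle is exactly this odd case: one $TB$ step multiplies the value by roughly $3/2$, so no bounded-length $H$ works for all odd $A$; one must run the $TB$-chain for a number of steps dictated by the binary expansion of $A$ until an even ``leading block'' $A_{j^{\ast}}$ appears, and only then does the tail $B\,F\,F\cdots F$ collapse the number below its start. A secondary point requiring care is checking that $F$ is genuinely applicable at every step of that final collapse, which is precisely why one routes through a number of the shape $(\text{block})\overbrace{1\cdots1}^{\ge 1}$ rather than through $\overbrace{2\cdots2}^{*}$.
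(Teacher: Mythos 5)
Your proposal is correct and follows essentially the same route as the paper: a single $B$ (the paper uses $BF$) in the even case, and in the odd case the iteration of $TB$ producing $A_j\overbrace{2\cdots2}^{j+1}$ until the leading block $A_{j}$ first becomes even, followed by one $B$ to reach $A_{j+1}\overbrace{1\cdots1}^{j+1}$ and a string of $F$'s to strip the trailing $1$'s. Your write-up is in fact more careful than the paper's, since you explicitly verify the parity/residue conditions that make $T$, $B$, $F$ legal in $MS$ and handle the terminal case $A_{j^{\ast}+1}=0$.
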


\begin{proof}
If $A$ is even, then $B(A2) = A_{1}1$, $F(A_{1}1) = A_1$. Because $A_1 < A2$, this proposition holds.

If $A$ is odd, then $T(A2)=A21$, $B(A21)= A_{1}22$.
For any given $k>=1$, if $A_k$ is even, then $B\overbrace {BT \cdots BT}^{k}(A2)=A_{k+1}\overbrace{1 \cdots 1}^{k+1}$. Further, $\overbrace{F \cdots F}^{k+1}(A_{k+1}\overbrace{1 \cdots 1}^{k+1})= A_{k+1}$. Because $A_{k+1} < A2$, this proposition holds.

Because there exists a $k$ such that $A_k=0$, this proposition holds.
\end{proof}


\begin{lemma}\label{lemma.HA1.A1}
For any given $A$, there exists an action sequence $H$, such that $H(A1) < A1$.
\end{lemma}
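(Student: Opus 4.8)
The plan is to observe that, unlike the preceding lemma, this statement is essentially immediate: a single $F$ action already does the job. Recall that in the 3‑based notation the symbol $A1$ denotes the integer $3A+1$, and that in model $MS$ the action $F$ (i.e.\ $x\mapsto (x-1)/3$) is applicable exactly when $x \bmod 3 \equiv 1$. Since $A1 = 3A+1 \equiv 1 \pmod 3$ by the very definition of the trailing ``$1$'', the action $F$ is \emph{unconditionally} applicable to $A1$, and erasing the last symbol gives $F(A1) = A$. So I would simply take $H$ to be the one-step sequence consisting of $F$.

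It then only remains to verify the inequality $A < A1$. Because $A1$ is a positive integer of the form $3A+1$, we have $A \ge 1$ (indeed $A1 \in \{4,7,10,\dots\}$, so the statement is never vacuous), whence $A1 - A = 2A+1 \ge 3 > 0$, i.e.\ $H(A1) = A < A1$. This parallels the ``$A$ even'' branch in the proof of Lemma \ref{lemma.HA2.A2}, where the combination $B$ then $F$ was used, but it is strictly simpler: no parity case split is needed here, precisely because $A1$ is automatically $\equiv 1 \pmod 3$ and hence $F$ is always available, whereas $A2 = 3A+2$ is $\equiv 2 \pmod 3$ and must first be multiplied by $3$ and shifted (via $T$) before an $F$ can be applied.

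There is no real obstacle in this lemma; the only points requiring a word of justification are (i) that the applicability hypothesis $x \bmod 3 \equiv 1$ of $F$ is met for $x = A1$, and (ii) that $A \ge 1$ so the strict inequality is genuine — both trivial. If one wished to match the ``structural'' style of Lemma \ref{lemma.HA2.A2}, one could instead exhibit a longer action sequence that routes through the cluster connectivity theorems of Section~5, but that is unnecessary. Combined with Lemma \ref{lemma.HA2.A2} (covering integers $\equiv 2 \pmod 3$) and the analogous treatment of integers $\equiv 0 \pmod 3$, this lemma supplies the remaining residue class, so that every positive integer $A>1$ admits an action sequence decreasing it in $MS$; since $1$ is the minimum value, this yields $\Omega_s = \mathbb{N}^{+}$.
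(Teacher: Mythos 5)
Your proposal is correct and is essentially identical to the paper's own one-line proof, which simply takes $H=F$ and notes $F(A1)=A<A1$; your added remarks on the applicability of $F$ (since $A1=3A+1\equiv 1\pmod 3$) just make explicit what the paper leaves implicit.
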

\begin{proof}
$F(A1)= A < A1$
\end{proof}

\begin{lemma}\label{lemma.HA0.A0}
For any given $A$, there exists an action sequence $H$, such that $H(A0) < A0$.
\end{lemma}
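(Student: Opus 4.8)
The plan is to split on the parity of $A$. If $A$ is even, then $A0 = 3A$ is even and a single $B$ action already works: $B(A0) = A_1 0$ with $A_1 = A/2$, and $A_1 0 = 3A_1 < 3A = A0$.

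If $A$ is odd, then $A0 = 3A$ is odd and divisible by $3$, so the only action available at $A0$ is $T$; this forces $T(A0) = A01 = 9A + 1$, which is even, and then $B(A01) = A_1 12$ with $A_1 = \lfloor A/2 \rfloor$ (indeed $A_1 12 = 9A_1 + 5$ and $2(9A_1 + 5) = 9A + 1$ since $A = 2A_1 + 1$). From here I would argue as in Lemma \ref{lemma.HA2.A2}. If $A_1$ is odd, i.e.\ $A \equiv 3 \pmod 4$, then a further $B$ gives $B(A_1 12) = A_2 21 = 9A_2 + 7$ with $A_2 = \lfloor A_1/2 \rfloor$; since $A = 4A_2 + 3$ we get $9A_2 + 7 < 12A_2 + 9 = 3A = A0$, so the sequence $TBB$ does it. If $A_1$ is even, i.e.\ $A \equiv 1 \pmod 4$, then $A_1 12 = 9A_1 + 5$ is odd and $\equiv 2 \pmod 3$, so $T$ is forced again; here I would keep iterating the forced ``$T$ then $B$'', inserting an ordinary $B$ whenever the running value is even and an $F$ whenever it is $\equiv 1 \pmod 3$. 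The running value then moves through quantities governed by $A_1, A_2, A_3, \dots$, and since there is a $k$ with $A_k = 0$, after finitely many steps it drops below $A0 = 3A$; the finitely many small values of $A$ for which the bookkeeping needs care are checked by hand (e.g.\ $A = 1$: $3 \to 10 \to 5 \to 16 \to 8 \to 4 \to 2 < 3$ via $T,B,T,B,B,B$). This yields the required $H$.

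The main obstacle is exactly the case $A \equiv 1 \pmod 4$. In the even case and in the $A \equiv 3 \pmod 4$ case the descent has bounded length, but in this remaining case there is no a priori bound on the length of $H$, and the delicate point is to show that the running value---which may temporarily exceed $A0$---is eventually driven \emph{strictly} below $A0$, not merely below some larger intermediate value. The natural route is the bookkeeping already used for Lemma \ref{lemma.HA2.A2}: track the running value in terms of the binary digits of $A$ through the $A_k$, use that some $A_k = 0$, and verify directly that the resulting terminal value is $< 3A$ (treating the finitely many small $A$ separately). Once Lemmas \ref{lemma.HA2.A2}, \ref{lemma.HA1.A1} and this one are all in hand, every positive integer $n$---whose last base-$3$ digit is $0$, $1$, or $2$---admits an action sequence $H$ with $H(n) < n$, and since $1$ is the least positive integer, descent gives $\Omega_s = \mathbb{N}^{+}$.
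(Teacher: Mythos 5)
Your case split and your handling of the first two cases match the paper's own proof: for $A$ even the single action $B$ gives $A_10<A0$, and for $A\equiv 3\pmod 4$ your sequence $TBB$ landing at $9A_2+7<12A_2+9=3A$ is correct (the paper uses $TBBF$ and stops one step lower, at $3A_2+2$). The issue is the case $A\equiv 1\pmod 4$, which you yourself identify as the crux and then leave as a sketch: ``keep iterating the forced $T$ then $B$ \dots and verify directly that the resulting terminal value is $<3A$'' is a plan rather than a proof, and since the running value does exceed $3A$ before it comes down, the claim that it is eventually driven strictly below $3A$ is precisely the step that still needs an argument.

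The paper closes this case with one concrete move that your sketch is missing: from $A_112=9A_1+5$ with $A_1$ even, the forced $T$ followed by $B$ gives $27A_2+8=A_2022$ (where $A_2=A_1/2$), a numeral ending in the digit $2$, so the machinery of Lemma~\ref{lemma.HA2.A2} takes over. Note, however, that citing only the \emph{statement} of Lemma~\ref{lemma.HA2.A2} would not suffice, since $A_2022=27A_2+8$ already exceeds $A0=12A_2+3$; what is needed (and what the proof of Lemma~\ref{lemma.HA2.A2} actually delivers) is that its descent terminates at $\lfloor (9A_2+2)/2^{j}\rfloor$ for some $j\ge 1$, which is at most $(9A_2+2)/2<12A_2+3$. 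So the gap in your write-up is real but local: either carry out your iteration explicitly and bound its terminal value against $3A$, or---more economically---make the reduction $TB(A_112)=A_2022$ and invoke the quantitative conclusion of the proof of Lemma~\ref{lemma.HA2.A2} rather than its bare statement.
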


\begin{proof}
If $A$ is even, $B(A0)$=$A_10<A0$, this proposition holds.
If $A$ is odd, $T(A0)=A01$, $B(A01)=A_{1}12$.
For $A_{1}12$, if $A_1$ is odd, $BF(A_{1}12) = A_{2}2 < A0$, this proposition holds; if $A_1$ is even, $TB(A_{1}12) = A_{2}022$. According to Lemma \ref{lemma.HA2.A2}, this proposition also holds.

Therefore, this proposition always holds.
\end{proof}

\begin{theorem}[The descending theorem]\label{theorem.descending}
For any given positive integer $A$, there exists an action sequence $H$ in $MS$ such that $H(A)<A$.
\end{theorem}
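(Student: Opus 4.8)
The plan is to reduce the theorem to the three descent lemmas already proved in this section by splitting on the last symbol of $A$ in its 3-based representation. Writing $A$ in base $3$, its final digit is $0$, $1$, or $2$, so $A$ has one of the forms $A'0$, $A'1$, or $A'2$ for some (possibly empty) prefix $A'$.

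If $A = A'1$, then Lemma \ref{lemma.HA1.A1} already does the work: $F(A) = F(A'1) = A' < A'1 = A$, so $H = F$ suffices. If $A = A'2$, Lemma \ref{lemma.HA2.A2} (read with its prefix instantiated to $A'$) supplies an action sequence $H$ with $H(A) < A$. If $A = A'0$, Lemma \ref{lemma.HA0.A0} supplies such an $H$ likewise. Each of these lemmas is proved within the $MS$ section using only the actions $T$, $B$, $F$, so the resulting $H$ is automatically an action sequence legal in $MS$. The three digit-cases exhaust $\mathbb{N}^{+}$, so the theorem follows --- with the harmless caveat that $A = 1$ has no strictly smaller image and is simply the base case of the reachability induction to come.

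I do not expect a genuine obstacle here: the descending theorem is a corollary of the case-by-case lemmas, and the real content was already spent in Lemmas \ref{lemma.HA2.A2}, \ref{lemma.HA0.A0}, and \ref{lemma.HA1.A1} --- in particular the iterated halving $A \mapsto A_k$ driven down to $0$ that powers Lemma \ref{lemma.HA2.A2}. The one point worth re-checking is the behaviour at degenerate prefixes, e.g. when $A$ is a single base-$3$ digit and $A'$ is empty; inspecting the cited proofs shows their computations still go through in those boundary cases (apart from $A=1$ as noted).
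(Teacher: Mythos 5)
Your proof is correct and is essentially identical to the paper's: the paper likewise disposes of the theorem by citing Lemmas \ref{lemma.HA0.A0}, \ref{lemma.HA1.A1} and \ref{lemma.HA2.A2}, with the implicit case split on the last base-3 digit of $A$ that you make explicit. Your added remark about the degenerate case $A=1$ (where no strictly smaller positive image exists) is a fair caveat the paper does not address.
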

\begin{proof}
According to Lemmas \ref{lemma.HA0.A0}, \ref{lemma.HA1.A1} and \ref{lemma.HA2.A2}, this theorem holds.
\end{proof}

\begin{theorem}[The edge loop existence theorem]\label{theorem.edge.loop}
For any given positive even integer $A$, the edge $A \leftarrow A1$ is in a loop.
\end{theorem}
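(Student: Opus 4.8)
The plan is to reduce the statement to a single reachability claim and then to build the required path by hand, imitating the case analysis already used in Section~5. Note first that $A1$ denotes $3A+1$, which is $\equiv 1\pmod 3$, so the action $F$ is legal at $A1$ and $F(A1)=A$; hence the directed edge $A1\to A$ — the edge written $A\leftarrow A1$ — is genuinely present in the Collatz graph of $MS$. Consequently the edge lies on a loop as soon as there is \emph{any} directed path $A\Rightarrow A1$ inside $MS$: concatenating such a path with the edge $A1\to A$ gives a closed walk through the edge. So the whole theorem comes down to proving, for every positive even integer $A$, that $A\Rightarrow A1$ in $MS$.

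To produce the path $A\Rightarrow A1$ I would split on the last base-$3$ digit of $A$, exactly as Lemmas~\ref{lemma.HA0.A0}, \ref{lemma.HA1.A1} and~\ref{lemma.HA2.A2} split the descending argument, but now running the computations ``upward'' toward $A1$ instead of downward. In the case $A=R1$ the digit can be peeled by $F$ ($F(R1)=R$) and then rebuilt by applications of $T$ at the odd nodes that appear, after which the $5$-cluster and $3$-cluster connection results of Section~4 (Theorems~\ref{th.5cluster}, \ref{th.3cluster}, \ref{th.9cluster}), restricted to the moves that survive in $MS$, carry the orbit to the node $A1$. The cases $A=R0$ and $A=R2$ I would reduce to the $R1$ case: since $A$ is even one may first apply $B$, then run the $R1$ analysis on the result, using the $2$-appending and $2$-backspace theorems (Theorems~\ref{theorem.2.appending}, \ref{theorem.2.backspace}) to normalise any trailing block of $2$'s that is produced. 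Throughout, the descending theorem (Theorem~\ref{theorem.descending}) is the tool that guarantees the bookkeeping terminates: whenever the constructed orbit wanders, it can be driven down to a small odd pivot from which $T$ is available again.

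The hard part will be the directedness of $MS$. In $M1$ the claim is immediate — there $T(A)=A1$ with no parity condition and every edge is two-way, so $A$ and $A1$ are trivially mutually reachable — but in $MS$ the move $T$ is \emph{forbidden} at the even node $A$, and, worse, several of the Section~4 cluster lemmas were proved using the action $D$ ($2x$), which does not exist in $MS$. So the real content is to re-derive, using only $T$ at odd nodes, $B$ at even nodes, and $F$ at nodes $\equiv 1\pmod 3$, a path that climbs from $A$ all the way back up to $A1$, and to do so uniformly in $A$; in particular one must check that the descent supplied by the descending theorem can always be steered so that it passes through an odd number from which $A1$ is still reachable. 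Once this uniform upward path is established, the loop assertion follows at once; the companion case of odd $A$ is trivial (there $T(A)=A1$ directly, giving a $2$-cycle with $F$), which is precisely why the statement is phrased only for even $A$.
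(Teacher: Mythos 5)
Your reduction of the theorem to the directed reachability claim ``$A\Rightarrow A1$ in $MS$'' is where the argument breaks, and it breaks for two reasons. First, the claim itself is false: take $A=2$, so $A1=7$. In $MS$ the only actions available are $T$ at odd nodes, $B$ at even nodes and $F$ at nodes $\equiv 1 \pmod 3$, so from $2$ one can only reach $\{1,2,4\}$ (via $2\to 1\to 4\to 2$ and $4\to 1$); the node $7$ is not reachable from $2$, yet $7\to 2$ is precisely the first edge the de-looping theorem removes. Second, the reduction is not what the paper means by ``loop.'' The loop here is a cycle of the \emph{underlying undirected} Collatz graph containing the edge $A1\to A$: one exhibits a second directed path out of $A1$, namely $A1\to A11\to A_102\to\cdots$, which descends below $A$ and hence (by the descending theorem and Lemma~\ref{lemma.omaga.N}) also reaches $1$, just as $A\Rightarrow 1$ does. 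The two paths from $A1$ to $1$, one of which begins with the edge $A1\to A$, close up into an undirected cycle through that edge; and this weaker property is exactly what the de-looping theorem needs, since removing the edge leaves $A1$ an alternative route to $1$. The phrase in that later proof, that ``the edge $e$ has a different direction to the other path'' of the loop, only makes sense under this reading.

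Beyond the wrong reduction, the constructive part of your plan also does not go through as stated: the $5$-cluster, $3$-cluster and $2$-appending results of Section~4 are proved in $M1$ using the action $D$ ($2x$) and free reversal of edges, neither of which exists in $MS$. You flag this yourself as ``the real content,'' but you do not supply the replacement argument, so even if the target statement were $A\Rightarrow A1$ the proposal would remain a programme rather than a proof. What the paper actually does is elementary by comparison: since $A$ is even, $A1=3A+1$ is odd, so $T$ then $B$ are legal and give $A1\to A11\to A_102$ with $A_1=A/2$; a parity case analysis on $A_1,A_2,\dots$ then drives this forward orbit below $A$, which is all that is required. Your closing observation about odd $A$ (a genuine directed $2$-cycle $A\to A1\to A$ via $T$ and $F$) is correct, but it should have been a warning sign that the even case cannot be expected to yield a \emph{directed} cycle at all.
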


\begin{proof}
Because $A$ is even, $A1 \Rightarrow A11 \Rightarrow A_{1}02$.
\begin{enumerate}
\item If $A_1$ is even, $A_{1}02 \Rightarrow A_{2}01$ $\Rightarrow A_{2}0$ .
Because $A_20 < A$ and $A_1 \Rightarrow 1$, the edge $A \leftarrow A1$ is in a loop.

\item If $A_1$ is odd, $A_{1}02 \Rightarrow A_{1}021$ $\Rightarrow A_{2}122$.

\item If $A_2$ is odd, $A_{2}122 \Rightarrow A_{3}211$ $\Rightarrow A_{3}2$. Because $A_{3}2 < A$ , this theorem holds.

\item If $A_2$ is even, $A_{2}122 \Rightarrow A_{3}222$. By repeating this analysis process,
we obtain $A_{3}222 \Rightarrow A_{k+1}$ because the result should finally is even in some time $k$. Since $A_{k+1} < A$, this theorem holds.

\end{enumerate}

Therefore, the edge $A \leftarrow A1$ is in a loop.
\end{proof}

%

\begin{lemma}\label{lemma.m2.A.1}
For any given positive integer $A$, there exists a path from $A$ to 1.
\end{lemma}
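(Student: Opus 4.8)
The plan is to reduce Lemma~\ref{lemma.m2.A.1}, which concerns the model $MS$, to the descending theorem (Theorem~\ref{theorem.descending}) already established for $MS$, via a standard minimal-counterexample / strong-induction argument. Since every edge of $MS$ points in a definite direction, to say ``there exists a path from $A$ to $1$'' means there is a finite action sequence $H$ (using only the three actions legal in $MS$, namely $T$, $B$ subject to its parity constraint, and $F$ subject to its divisibility constraint) with $H(A) = 1$; I must show this holds for every positive integer $A$.

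First I would set up the induction on the value of $A$. The base case $A = 1$ is trivial (the empty action sequence). For the inductive step, assume every positive integer strictly smaller than $A$ admits a path to $1$ in $MS$. By Theorem~\ref{theorem.descending}, there is an action sequence $H$ in $MS$ with $H(A) < A$; write $B = H(A)$, so $B$ is a positive integer (one should note here that all three $MS$-actions map positive integers to positive integers, so intermediate values and in particular $B$ stay in $\mathbb{N}^{+}$, and $B \geq 1$). Since $1 \le B < A$, the induction hypothesis supplies an action sequence $H'$ in $MS$ with $H'(B) = 1$. Concatenating, the composite sequence $H'\!\circ H$ takes $A$ to $1$ in $MS$, which closes the induction. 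Hence $\Omega_s = \mathbb{N}^{+}$.

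The only point that requires a little care — and the place where the argument could go wrong — is making sure the descending sequence $H$ furnished by Theorem~\ref{theorem.descending} is a \emph{legal} $MS$-sequence, i.e.\ that each action is applied only when its parity/divisibility side condition is met, and that the running value never leaves $\mathbb{N}^{+}$. But this is exactly how Theorem~\ref{theorem.descending} (and the Lemmas~\ref{lemma.HA0.A0}, \ref{lemma.HA1.A1}, \ref{lemma.HA2.A2} feeding it) was proved, so I would simply remark that those constructions already respect the $MS$-constraints, and therefore no additional work is needed. The ``hard part'' here is genuinely behind us: all the combinatorial labor lives in the descending theorem, and Lemma~\ref{lemma.m2.A.1} is just the well-foundedness wrap-up that turns a uniform one-step decrease into global reachability of the minimum element $1$.
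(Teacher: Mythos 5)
Your proposal is correct and matches the paper's intent exactly: the paper states this reduction explicitly just before Lemma~\ref{lemma.HA2.A2} (``If we can prove that\ldots $H(A)<A$, then there should exist a path from $A$ to $1$\ldots because $1$ is the smallest value''), and your strong-induction/minimal-counterexample wrap-up is precisely the well-foundedness argument being invoked. The only cosmetic quibble is that you reuse the letter $B$ both for an action and for the intermediate value $H(A)$; otherwise nothing to add.
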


\begin{lemma}\label{lemma.omaga.N}
In model $MS$, $\Omega_s=\mathbb{N}^{+}$.
\end{lemma}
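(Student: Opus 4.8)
The plan is to derive the lemma directly from the descending theorem (Theorem \ref{theorem.descending}) by strong induction on the value of the starting node, exactly as the remark preceding Lemma \ref{lemma.HA2.A2} already anticipates. First I would note the trivial inclusion: every action of $f_s$ maps a positive integer to a positive integer — the move $F=(x-1)/3$ is only available when $x\equiv 1\pmod 3$ and $B=x/2$ only when $x$ is even — so any admissible action sequence applied to a positive integer stays inside $\mathbb{N}^{+}$, whence $\Omega_s\subseteq\mathbb{N}^{+}$. It therefore remains to prove $\mathbb{N}^{+}\subseteq\Omega_s$.

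For the induction, the base case is immediate: $1$ is reachable from itself via the empty action sequence, so $1\in\Omega_s$. For the inductive step I would fix $A>1$, assume every positive integer strictly smaller than $A$ lies in $\Omega_s$, and invoke Theorem \ref{theorem.descending} to obtain an action sequence $H$ in $MS$ with $A':=H(A)<A$. Since $A'$ is a positive integer smaller than $A$, the induction hypothesis supplies an $A'\Rightarrow 1$ path, i.e.\ an action sequence $G$ with $G(A')=1$; concatenating gives $(G\circ H)(A)=1$, so $A\in\Omega_s$. Because "$<$" is a well-founded order on $\mathbb{N}^{+}$, the induction terminates and yields $\mathbb{N}^{+}\subseteq\Omega_s$, hence $\Omega_s=\mathbb{N}^{+}$ — which is precisely Lemma \ref{lemma.m2.A.1} restated.

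The only point that needs care, and the step I would check most closely, is that the action sequences produced by the descending lemmas (\ref{lemma.HA0.A0}, \ref{lemma.HA1.A1}, \ref{lemma.HA2.A2}) are genuinely executable as \emph{forward} moves in the directed graph of $MS$: each intermediate $B$ must fall on an even number and each $F$ on a number $\equiv 1\pmod 3$. Those side conditions were already verified inside the proofs of those lemmas (this is why they branch on the parity of $A$, $A_1$, $A_2$, $\dots$), so no new computation is required; one merely observes that a composition of admissible forward moves is again an admissible forward move, so $G\circ H$ is a legitimate path in Fig.~\ref{fig.ms}. In short, there is no substantive obstacle here — the real content lives in Theorem \ref{theorem.descending}, and this lemma is just the well-foundedness packaging of it.
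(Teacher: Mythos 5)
Your proposal is correct (modulo the correctness of Theorem \ref{theorem.descending}, which you are entitled to cite) and is exactly the argument the paper intends: the paper states this lemma without proof, but the paragraph preceding Lemma \ref{lemma.HA2.A2} explicitly announces that the descending theorem plus the minimality of $1$ yields $\Omega_s=\mathbb{N}^{+}$, which is precisely your well-founded induction. Your additional check that the sequences $H$ consist of admissible forward moves in $MS$ is a worthwhile observation the paper leaves implicit.
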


\begin{definition}[The removable edge]
In Fig. \ref{fig.ms} (model $MS$), when an edge $e$ is removed, $\Omega_s$ does not change, then the edge $e$ is said removable.
\end{definition}

\begin{theorem}[The de-looping theorem]
For any given edge $e$, if $e$ belongs to the edge set $E_1$ indicated by $\frac{x-1}{3}$ if $x \bmod 6 \equiv 1$, then $e$ is removable.
\end{theorem}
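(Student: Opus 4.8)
The plan is to reduce the theorem to one reachability statement about the tail of $e$, and then to settle that statement using the Descending Theorem~\ref{theorem.descending}, the Edge Loop Existence Theorem~\ref{theorem.edge.loop}, and the fact (Lemma~\ref{lemma.omaga.N}) that $\Omega_s=\mathbb{N}^+$. First I would describe $E_1$ concretely: an edge $e\in E_1$ is an $F$-transition $x\Rightarrow y$ with $y=(x-1)/3$ and $x\bmod 6\equiv 1$, equivalently $x=3y+1$ with $y$ even, so in the $3$-based notation $x$ is the string $y1$ with $y$ even. At such an $x$ both $T$ and $F$ are enabled in $MS$ but $B$ is not; hence once $e$ is deleted the only $MS$-edge leaving $x$ is $x\Rightarrow 3x+1$, while $3x+1$ (which satisfies $(3x+1)\bmod 6\equiv 4$) keeps its $B$-edge and a surviving, non-$E_1$ $F$-edge back to $x$. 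Also, applying the Edge Loop Existence Theorem~\ref{theorem.edge.loop} with the even integer $y$ in the role of ``$A$'', so that ``$A1$'' $=3y+1=x$, shows that $e$ lies on a directed cycle of $MS$; in particular $y$ still reaches $1$ in $MS\setminus\{e\}$, because a descending route from $y$ to $1$ stays below $x$ and never revisits it.

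Next I would reduce removability of $e$ to the single assertion that $x$ reaches $1$ in $MS\setminus\{e\}$. Suppose $w$ is a node that fails to reach $1$ after $e$ is deleted; since $\Omega_s=\mathbb{N}^+$, every $MS$-path $w\Rightarrow 1$ must use $e$, hence must pass through $x$, and truncating such a path at its first use of $e$ exhibits a path $w\Rightarrow x$ lying entirely in $MS\setminus\{e\}$. Consequently, if $x$ reaches $1$ in $MS\setminus\{e\}$ then so does $w$; so it suffices to prove $x\Rightarrow 1$ in $MS\setminus\{e\}$, and this simultaneously shows $\Omega_s$ is unchanged, i.e. that $e$ is removable.

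To handle $x$, note that the forced first two steps are $x\Rightarrow 3x+1\Rightarrow (3x+1)/2$, and $(3x+1)/2$ ends in the symbol $2$ in the $3$-based notation, so the Descending Theorem~\ref{theorem.descending} via Lemma~\ref{lemma.HA2.A2} supplies a route from it down to a strictly smaller value. Iterating the Descending Theorem yields a strictly decreasing sequence of values; at every number $n$ with $n\bmod 6\equiv 1$ that is met along the way we may still take its $F$-edge, since only $e$ has been deleted and all other $E_1$-edges survive, so the unique move a descending route could ever be tempted to reuse is $e$ itself, which occurs only when the route returns to $x$. In that case we repeat the present argument; since each descending block strictly lowers the current value, after finitely many blocks we drop below $x$, whence Lemma~\ref{lemma.omaga.N}, read inside $MS\setminus\{e\}$ (legitimate because the descending principle now applies to all values smaller than $x$), carries us to $1$.

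The hard part is this last step. In $MS$ the obvious descent from $x$ is the one $F$-move, which is precisely $e$ (this is exactly the single-step descent of Lemma~\ref{lemma.HA1.A1} in the case where $A1$ is odd); once $e$ is gone, $x$ is pushed up to $3x+1$ and one must argue that the ensuing excursion can always be steered back below $x$ using only the surviving edges. The delicate point is to check that the descending sequences produced by Theorem~\ref{theorem.descending} for the numbers encountered along this excursion never require the particular deleted edge $e$ (rather than some other, still-present, $E_1$-edge), and to control the case in which such a sequence would route through $x$ a second time; this is where the directed cycle through $e$ furnished by Theorem~\ref{theorem.edge.loop} does the bookkeeping.
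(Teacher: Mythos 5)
Your reduction is the strongest part of the proposal and is correct: since $\Omega_s=\mathbb{N}^+$, any node $w$ that loses access to $1$ after deleting $e$ must have a path in $MS\setminus\{e\}$ to the tail $x=3y+1$ of $e$ (truncate a $w\Rightarrow 1$ path at its first use of $e$), so everything hinges on showing $x\Rightarrow 1$ in $MS\setminus\{e\}$. This is sharper than the paper, whose entire proof is a one-sentence appeal to Theorem~\ref{theorem.edge.loop} ("the edge $e$ has a different direction to the other path in the loop, therefore $e$ is removable"), i.e.\ the loop supplies an alternative route that bypasses $e$. So in spirit you and the paper lean on the same device.

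The genuine gap is in your final step, and it is not merely "delicate bookkeeping." Your termination argument rests on the claim that "each descending block strictly lowers the current value, [so] after finitely many blocks we drop below $x$." A block that is interrupted at $x$ does \emph{not} lower the value: at $x$ the only surviving action is $T$, which throws you back up to $3x+1$, so the sequence of checkpoints is not monotone and the process could a priori cycle $x\to 3x+1\to\cdots\to x\to\cdots$ forever. Nor is it true that the prescribed descents only touch $e$ when started at $x$: for example the node $2x\equiv 2\pmod 6$ has $B$ as its \emph{only} outgoing edge in $MS$, landing on $x$, and Lemma~\ref{lemma.HA2.A2}'s descent from $2x$ is exactly $B$ then $F$, i.e.\ it uses $e$; similarly your assertion that a descending route from $y$ "stays below $x$ and never revisits it" is unjustified, since the routes of Lemmas~\ref{lemma.HA0.A0}--\ref{lemma.HA2.A2} make upward excursions (they begin with $T$ in the odd cases). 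What is actually needed, and what neither you nor the paper supplies, is a proof that from $(3x+1)/2$ there is a descent to a value below $x$ avoiding $e$, \emph{together with} an induction (on a correctly chosen well-founded measure) showing the subsequent journey to $1$ never gets re-stranded at $x$; Theorem~\ref{theorem.edge.loop} as proved only furnishes one escape from $x$ to one value $v_0<x$ and says nothing about whether the continuation from $v_0$ returns to $x$. Your closing paragraph candidly flags exactly this point, but flagging it is not closing it, so the proof is incomplete at its decisive step.
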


\begin{proof}
According to Theorem \ref{theorem.edge.loop}, and notice that the edge $e$ has a different direction to the other path is the loop, therefore, $e$ is removable.
\end{proof}

\section{THE MODEL $M0$}

The graph of $M0$ is a directed graph. Compared with the model $MS$, model $M0$ eliminates all the edges indicated by the action $F$.

\begin{lemma} [The Collatz conjecture]For any given positive integer $A$, $A$ is reachable in $M0$.
\end{lemma}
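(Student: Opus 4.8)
The plan is to obtain the Collatz conjecture as a corollary of $\Omega_s = \mathbb{N}^{+}$ (Lemma~\ref{lemma.omaga.N}) by controlling exactly what is lost when one passes from $MS$ to $M0$. The graph of $M0$ is obtained from Fig.~\ref{fig.ms} by deleting precisely the edges produced by the action $F$, and an $F$-edge $x \to \frac{x-1}{3}$ is present exactly when $x \bmod 3 \equiv 1$, i.e. when $x \bmod 6 \equiv 1$ or $x \bmod 6 \equiv 4$. The de-looping theorem already shows that every $F$-edge with $x \bmod 6 \equiv 1$ is removable, so the task splits into: (a) show that the remaining $F$-edges, those with $x \bmod 6 \equiv 4$, are also removable; and (b) arrange that all of these deletions can be performed at once without changing the reachable set.

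For (a), write such a source as $x = 3y+1$; since $x \bmod 6 \equiv 4$ the integer $x$ is even and $y = \frac{x-1}{3}$ is odd. Hence in $M0$ the action $T$ already supplies the edge $y \to x$ and the action $B$ supplies $x \to \frac{x}{2}$, so the deleted edge $x \to y$ is nothing but the reverse of the $2$-cycle $y \to x \to y$ living inside $MS$ --- precisely the configuration exploited in Theorem~\ref{theorem.edge.loop}. First I would mimic that loop argument to show that any trajectory reaching $1$ through $x \to y$ can instead be rerouted through $x \to \frac{x}{2}$, so that this edge too is removable; combined with the de-looping theorem this makes \emph{every} $F$-edge individually removable.

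The subtle point, and the step I expect to be the main obstacle, is (b): ``removable'' is defined one edge at a time and relative to $MS$, whereas $M0$ erases all $F$-edges simultaneously, and the loop certifying the removability of one edge may itself traverse another $F$-edge. Rather than iterating single-edge deletions, I would prove $\Omega_0 = \mathbb{N}^{+}$ directly by strong induction on the value of the vertex. The base case is settled by the $M0$-cycle $1 \to 4 \to 2 \to 1$. For $v > 1$ it suffices to exhibit an $M0$-path from $v$ to some $w < v$, since the induction hypothesis then gives $w \Rightarrow 1$ in $M0$, hence $v \Rightarrow 1$ in $M0$. To produce such a path I would start from the strictly descending $MS$-sequence guaranteed by Theorem~\ref{theorem.descending}, and wherever it uses an $F$-edge replace that step --- together with a short surrounding stretch if needed --- by the $M0$-detour furnished by the de-looping theorem in the $x \bmod 6 \equiv 1$ case and by the argument of (a) in the $x \bmod 6 \equiv 4$ case, keeping the net effect value-decreasing. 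The real work is the bookkeeping in this replacement: verifying that each local detour stays inside $M0$ and that the rerouted sequence still ends strictly below $v$. The earlier loop, descending and de-looping theorems are designed exactly to make every such detour available, so once this induction closes one gets $\Omega_0 = \mathbb{N}^{+}$, which is the Collatz conjecture.
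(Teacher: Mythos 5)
There is a genuine gap, and you have in fact put your finger on it yourself: step (b) is not ``bookkeeping'' --- it is the entire content of the conjecture, and none of the earlier results supplies the detours you need. The descending sequences of Theorem~\ref{theorem.descending} use the action $F$ essentially: for a number of the form $A1$ (i.e.\ $x\equiv 1 \pmod 3$) the certified descent is the \emph{single} step $F(A1)=A$, so ``replacing each $F$-step by an $M0$-detour with the same net decrease'' is exactly the statement you are trying to prove, not a consequence of anything established. The de-looping theorem does not furnish such a detour either: it only asserts that the $F$-edge lies on a loop in $MS$, and the complementary arc of that loop is an $MS$-path --- it passes through a vertex smaller than $A$ and then ``returns'' only by invoking $MS$-reachability of $1$ and a path back up, both of which may traverse further $F$-edges. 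So the rerouting in your induction has nothing to feed on. Moreover, the induction target itself --- ``for every $v>1$ there is an $M0$-path from $v$ to some $w<v$'' --- is the classical equivalent reformulation of the Collatz conjecture (every trajectory eventually drops below its starting value); restating the problem in this form is fine, but you have supplied no mechanism for proving it inside $M0$, where the only available moves from an odd $v$ are forced and whether they ever descend below $v$ is precisely the open question.

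For what it is worth, your diagnosis of the weak point is sharper than the paper's own treatment: the paper removes the $E_1$ edges ``one by one'' without checking that the loop certifying the removability of one edge survives the removal of the others (those loops are $MS$-loops and may use other $F$-edges), and its claim that the $E_4$ edges then become ``abundant'' is asserted rather than proved. Your observation in (a) --- that for $x\equiv 4\pmod 6$ the deleted edge $x\to(x-1)/3$ is the reverse of a $T$-edge already present in $M0$ --- is correct and is the right way to see why those edges are plausibly redundant \emph{for reachability of their target}, but redundancy of an edge for reaching $1$ from its source is a different matter and again presupposes that the source reaches $1$ some other way. Both your route and the paper's collapse at the same point: converting $MS$-connectivity (which is cheap, because $F$ and $T$ are mutually inverse) into $M0$-connectivity is where the actual difficulty of the $3x+1$ problem lives, and neither argument crosses that gap.
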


\begin{proof}
According to Lemma \ref{lemma.omaga.N}, all positive integers which exist in Fig. \ref{fig.ms} (model $MS$) still exist in Fig. \ref{fig.m0} (model $M0$).

According to the de-looping theorem in model $MS$, all these edges in $E_1$ can be removed one by one starting from $7 \rightarrow 2$.

After the removals of the edge set $E_1$, all the edges belonging to the edge set $E_4$, which is indicated by $\frac{x-1}{3}$ if $x \bmod 6 \equiv 4$, would be abundant (have no successions nodes) in $MS$, and hance are removable.

After the removals of the edge sets $E_1$ and $E_4$, $MS$ becomes $M0$. Therefore, the Collatz conjecture holds.
\end{proof}

\begin{theorem}
There is only one circle $4 \rightarrow 2  \rightarrow 1 \rightarrow 4$ in $M0$.
\end{theorem}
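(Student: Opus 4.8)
The plan is to exploit the fact that the graph of $M0$ is a \emph{functional} graph: since the Collatz function $f$ of equation~(\ref{eq.collatz}) assigns to each positive integer $x$ exactly one value $f(x)$, every node of $M0$ has out-degree exactly $1$. In such a graph a directed cycle is nothing but a periodic orbit of $f$; two directed cycles are therefore either identical or vertex-disjoint, and once an iteration enters a cycle it never leaves it.

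First I would verify the easy direction: $f(4)=2$, $f(2)=1$ and $f(1)=4$, so $4\to2\to1\to4$ is indeed a directed cycle of $M0$.

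Next I would argue by contradiction. Suppose $C$ is a directed cycle of $M0$ with $C\neq\{1,2,4\}$, and pick any node $A\in C$. Because the out-degree of every node is $1$, the forward orbit $A,\,f(A),\,f^2(A),\dots$ never leaves $C$, so $f^k(A)\in C$ for all $k\ge 0$. By the Collatz conjecture (the lemma just established for $M0$), there is some $k$ with $f^k(A)=1$, hence $1\in C$. But the forward orbit starting at $1$ is $1,4,2,1,4,2,\dots$, so the unique cycle through $1$ in the functional graph $M0$ is $\{1,4,2\}$; since distinct cycles are vertex-disjoint this forces $C=\{1,2,4\}$, contradicting the choice of $C$. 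Therefore $4\to2\to1\to4$ is the only directed cycle of $M0$.

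The argument is short because the substantive work is already done: the one nontrivial ingredient is the Collatz conjecture itself (proved above), and the only point needing care is the elementary observation that cycles in a functional graph cannot share a vertex, so that $1\in C$ pins $C$ down completely. Thus the main obstacle is not located in this theorem but in the preceding development establishing $\Omega_0=\mathbb{N}^{+}$.
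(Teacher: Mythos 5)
Your proposal is correct and follows essentially the same route as the paper: both arguments rest on the observation that every node of $M0$ has out-degree one (so $M0$ is a functional graph with pairwise disjoint cycles) together with the previously established reachability of $1$, which forces any cycle to contain $1$ and hence to coincide with $4\rightarrow 2\rightarrow 1\rightarrow 4$. Your write-up merely makes explicit the ``distinct cycles are vertex-disjoint'' step that the paper's terser proof leaves implicit.
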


\begin{proof}
Because every node in $M0$ has only an out-link, according to graph theory and Fig. \ref{fig.m0} (model $M0$) is a connected graph, only the positive integer $1$ has an
extra out-link, and it out-links to the positive integer $4$, so only
one circle $4 \rightarrow 2  \rightarrow 1 \rightarrow 4$ exists.
\end{proof}

\section{CONCLUSIONS}

This paper proves the $3x+1$ problem. The result shows that all the positive
integers can be transformed to $1$ by the iteration of $f$. Equivalently, there are no other cycles other than $4 \rightarrow 2  \rightarrow 1 \rightarrow 4$ cycle and there is no divergent trajectories.

The result in this paper would be useful to the research of chaos \cite{27,28}, computer
science \cite{10}, complex systems and so on.

\textbf{Acknowledgements.} The authors are grateful to Dr. CL Zhou, Prof. YX Li, Prof. J Qin, Dr. BB Wang, Dr. WW Wang, Mr. QF Wang, Mr. GC Tang and Mr. JW Zheng for discussions and inspirations. The authors thank to the supports from the State Key Laboratory of Networking and Switching Technology (No. SKLNST-2010-1-04) and the State Key Laboratory of Software Engineering (No. SKLSE2012-09-15) and the Fundamental Research Funds for the Central Universities (No. CZY13010). This paper is also supported by the China Scholarship Council.

\bibliographystyle{amsplain}
\bibliography{TP1}

%
%
%
%
%
%
%

\end{document}